\DeclareMathOperator{\polyOp}{poly}
\DeclareMathOperator{\rt}{rt}
\DeclareMathOperator{\fastrt}{rt_f}
\DeclareMathOperator{\qrt}{qrt}
\DeclareMathOperator{\fastqrt}{qrt_f}
\DeclareMathOperator{\cqrt}{cqrt}
\DeclareMathOperator{\hqrt}{hqrt}
\DeclareMathOperator{\hqrtA}{hqrt_a}
\DeclareMathOperator{\fasthqrtA}{hqrt_{fa}}
\DeclareMathOperator{\rs}{rs}
\DeclareMathOperator{\qrs}{qrs}
\DeclareMathOperator{\twoqubitgate}{tg}
\DeclareMathOperator{\diam}{diam}
\DeclareMathOperator{\Ex}{\mathbf{E}}
\DeclareMathOperator{\Prob}{P}
\DeclareMathOperator{\tr}{Tr}
\DeclareMathOperator{\entanglementOp}{E}
\DeclarePairedDelimiter\ceil{\lceil}{\rceil}
\DeclarePairedDelimiter\bra{\langle}{\vert}
\DeclarePairedDelimiter\ket{\vert}{\rangle}
\DeclarePairedDelimiter\set{\{}{\}}
\DeclarePairedDelimiter\abs{\lvert}{\rvert}
\DeclarePairedDelimiter{\norm}{\lVert}{\rVert}
\DeclarePairedDelimiterX{\inp}[2]{\langle}{\rangle}{#1, #2} 
\DeclarePairedDelimiterXPP\tg[1]{\twoqubitgate}{(}{)}{}{#1}
\DeclarePairedDelimiterXPP\bigo[1]{O}{(}{)}{}{#1}
\DeclarePairedDelimiterXPP\littleo[1]{o}{(}{)}{}{#1}
\DeclarePairedDelimiterXPP\bigomega[1]{\Omega}{(}{)}{}{#1}
\DeclarePairedDelimiterXPP\bigtheta[1]{\Theta}{(}{)}{}{#1}
\DeclarePairedDelimiterXPP\poly[1]{\polyOp}{(}{)}{}{#1}
\DeclarePairedDelimiterXPP\rnumber[1]{\rt}{(}{)}{}{#1}
\DeclarePairedDelimiterXPP\rfnumber[1]{\fastrt}{(}{)}{}{#1}
\DeclarePairedDelimiterXPP\qrnumber[1]{\qrt}{(}{)}{}{#1}
\DeclarePairedDelimiterXPP\qrfnumber[1]{\fastqrt}{(}{)}{}{#1}
\DeclarePairedDelimiterXPP\cqrnumber[1]{\cqrt}{(}{)}{}{#1}
\DeclarePairedDelimiterXPP\hrnumber[1]{\hqrt}{(}{)}{}{#1}
\DeclarePairedDelimiterXPP\hrAnumber[1]{\hqrtA}{(}{)}{}{#1}
\DeclarePairedDelimiterXPP\hrfanumber[1]{\fasthqrtA}{(}{)}{}{#1}
\DeclarePairedDelimiterXPP\rsize[1]{\rs}{(}{)}{}{#1}
\DeclarePairedDelimiterXPP\qrsize[1]{\qrs}{(}{)}{}{#1}
\DeclarePairedDelimiterXPP\probability[1]{\Prob}{[}{]}{}{#1}
\DeclarePairedDelimiterXPP\expectation[1]{\Ex}{[}{]}{}{#1}
\DeclarePairedDelimiterXPP\trace[1]{\tr}{(}{)}{}{#1}
\DeclarePairedDelimiterXPP\ptrace[2]{\tr_{#1}}{(}{)}{}{#2}
\DeclarePairedDelimiterXPP\entanglement[1]{\entanglementOp}{(}{)}{}{#1}
\DeclarePairedDelimiterXPP\capacity[2]{\Gamma}{(}{)}{}{#1,#2}
\DeclarePairedDelimiterXPP\matching[1]{m}{(}{)}{}{#1}
\def\Ddots{\mathinner{\mkern1mu\raise\p@
\vbox{\kern7\p@\hbox{.}}\mkern2mu
\raise4\p@\hbox{.}\mkern2mu\raise7\p@\hbox{.}\mkern1mu}}
\newcommand*{\tran}{^{\mkern-1.5mu\mathsf{T}}} 
\newcommand*{\idm}{\ensuremath{\mathbbm{1}}}
\renewcommand{\vec}[1]{\ensuremath{\bm{#1}}}
\definecolor{dark-red}{rgb}{0.4,0.15,0.15}
\definecolor{dark-blue}{rgb}{0.15,0.15,0.4}
\definecolor{medium-blue}{rgb}{0,0,0.5}
\definecolor{mycomment}{rgb}{0.3,0.7,0.8}
\definecolor{mygray}{rgb}{0.5,0.5,0.5}
\definecolor{lightgray}{rgb}{0.95,0.95,0.95}
\definecolor{mymauve}{rgb}{0.58,0,0.82}
\newlist{ienumerate}{enumerate*}{1}
\setlist*[ienumerate,1]{%
	label=(\roman*),
}
\crefname{figure}{Figure}{Figures}
\crefname{equation}{}{} 
\Crefname{equation}{Eq.}{Eqs.} 
\newtheorem{theorem}{Theorem}[section]
\newtheorem{lemma}[theorem]{Lemma}
\newtheorem{corollary}[theorem]{Corollary}
\theoremstyle{definition}
\newtheorem{definition}[theorem]{Definition}
\newcommand{\cnot}[0]{{\textsc{cnot}}}
\newcommand{\cz}[0]{{\textsc{cz}}}
\newcommand{\cknot}[1]{\ensuremath{\textsc{c}^{#1}\kern-0.1em\textsc{not}}}
\newcommand{\swap}[0]{{\textsc{swap}}}
\newcommand{\RoutingViaMatchings}[0]{{\textsc{Routing via Matchings}}}
\newcommand{\sieConst}{\ensuremath{\alpha}}
\newcommand{\paren}[1]{\left(#1\right)}
\author[1,2,3,4]{Aniruddha Bapat\thanks{\href{mailto:ani@lbl.gov}{ani@lbl.gov}}}
\author[2,5,6]{Andrew~M.~Childs\thanks{\href{mailto:amchilds@umd.edu}{amchilds@umd.edu}}}
\author[2,3]{Alexey~V.~Gorshkov\thanks{\href{mailto:gorshkov@umd.edu}{gorshkov@umd.edu}}}
\author[2,5,6,7]{Eddie Schoute\thanks{\href{mailto:eschoute@lanl.gov}{eschoute@lanl.gov}}}
\affil[1]{Lawrence Berkeley National Laboratory, Berkeley, CA 94720, USA}
\affil[2]{Joint Center for Quantum Information and Computer Science, NIST/University of Maryland, College Park, MD 20742, USA}
\affil[3]{Joint Quantum Institute, NIST/University of Maryland, College Park, MD 20742, USA}
\affil[4]{Department of Physics, University of Maryland, College Park, MD 20742, USA}
\affil[5]{Institute for Advanced Computer Studies, University of Maryland, College Park, MD 20742, USA}
\affil[6]{Department of Computer Science, University of Maryland, College Park, MD 20742, USA}
\affil[7]{Computer, Computational, and Statistical Sciences Division, Los Alamos National Laboratory, Los Alamos, NM 87545, USA}
\title{Advantages and limitations of quantum routing}
\begin{document}

\maketitle

\begin{abstract}%
  \noindent
  The \textsc{Swap} gate is a ubiquitous tool for moving information on quantum hardware,
  yet it can be considered a classical operation because it does not entangle product states.
  Genuinely quantum operations could outperform \swap{} for the task of permuting qubits within an architecture, which we call \emph{routing}.
  We consider quantum routing in two models:
  \begin{ienumerate}
  \item allowing arbitrary two-qubit unitaries, or
  \item allowing Hamiltonians with norm-bounded interactions.
  \end{ienumerate}
  We lower bound the circuit depth or time of quantum routing in terms
  of spectral properties of graphs representing the architecture interaction constraints,
  and give a generalized upper bound for all simple connected $n$-vertex graphs.
  In particular, we give conditions for a superpolynomial classical-quantum routing separation,
  which exclude graphs with a small spectral gap and graphs of bounded degree.
  Finally, we provide examples of a quadratic
  separation between gate-based and Hamiltonian routing models
  with a constant number of local ancillas per qubit
  and of an $\bigomega{n}$ speedup if we also allow fast local interactions.
\end{abstract}

\section{Introduction}
Scalable quantum architectures are expected to have geometrically constrained interactions~\cite{Kielpinski2002,Murali2020,Monroe2013,Monroe2014,Brecht2016,Jones2012}.
Unlike conversions between gate sets, which introduce only a logarithmic overhead due to the Solovay-Kitaev theorem~\cite{Kitaev1997},
architecture connectivity can introduce a polynomial overhead from the cost of simulating nonlocal interactions.
For example, a unitary implementation of a \cnot{} gate on the ends of an $n$-qubit 1D chain requires time $\bigomega{n}$
by a signaling argument.
This raises a natural question: how do we implement general nonlocal operations in minimal depth under architectural constraints?

A natural approach to implementing nonlocal gates is to first permute qubits within the architecture.
We call the task of implementing an arbitrary given permutation of qubits via operations on neighboring qubits \emph{routing}.
Routing generalizes well-studied tasks such as
state transfer~\cite{Bose2003,Bose2007,Christandl2005,Venuti2007,Franco2008,Banchi2011,Yao2011}
and state reversal (or mirroring)~\cite{Albanese2004,Shi2005,Karbach2005,Raussendorf2005,Fitzsimons2006}.
By exploring limits on routing, we also explore limits on information transfer and entanglement generation.
In particular, the entanglement across a bipartition can be increased by routing,
e.g., by starting with a maximally entangled qubit pair in the left partition and sending one qubit across the bipartition.
Therefore, bounds on routing relate to bounds on entanglement capacity~\cite{Duer2001,Childs2003,Childs2004,Bennett2003}
and have further relations to Lieb-Robinson bounds~\cite{Lieb1972}.

A common implementation of routing uses \swap{} gates to implement permutations~\cite{Saeedi2011,Lye2015,Zulehner2019,Childs2019}.
We call this approach \emph{classical routing} since a separable state acted on by \swap{} gates
cannot become entangled.
In routing, we represent architecture connectivity by a simple connected graph, $G$.
Classical routing algorithms have been developed using a variety of techniques,
including shortest path algorithms~\cite{Metodi2006,Saeedi2011,Lin2015,Wille2016,Murali2019,Qiskit},
sorting algorithms~\cite{Hirata2009,Beals2013,Shafaei2014,Pedram2016,Brierley2017},
routing on a spanning tree~\cite{Maslov2008},
and exhaustive search~\cite{Lye2015,Zulehner2019}.
In fact, classical routing is equivalent to the \RoutingViaMatchings{} problem in classical computer science~\cite{Childs2019}.
\RoutingViaMatchings{} is NP-complete~\cite{Banerjee2017,Miltzow2016},
but efficient algorithms exist for special cases of architecture connectivity such as paths, complete graphs, trees,
and graph products that capture practical architectures such as grids~\cite{Alon1994,Zhang1999,Childs2019}.
A natural lower bound on classical routing arises from small vertex cuts in the architecture~\cite{Alon1994}.

\begin{figure}
	\centering
	\includeinkscape[width=0.8\textwidth]{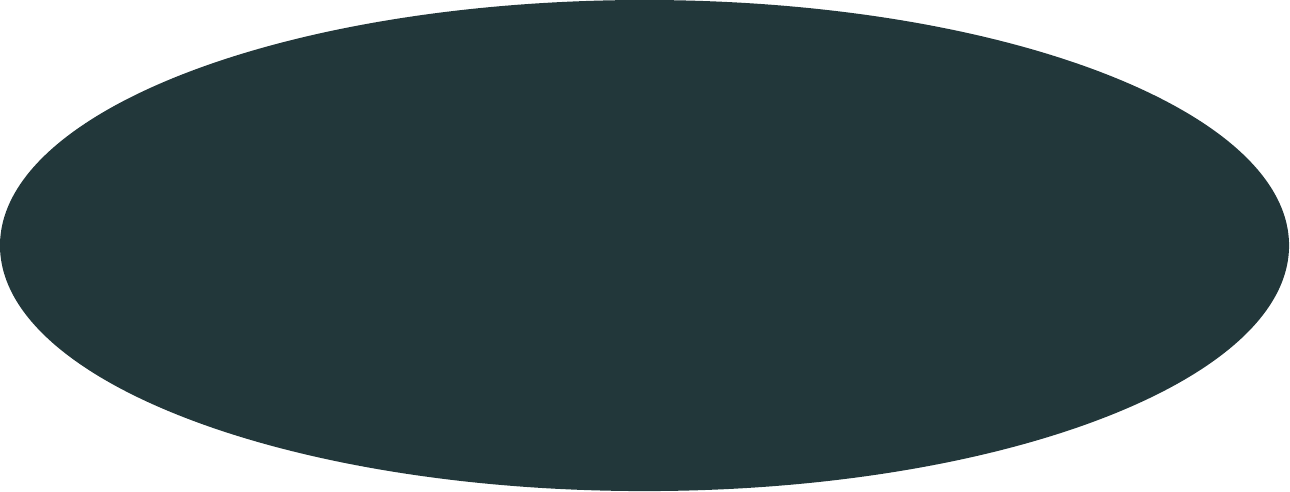}
	\caption{%
		The relative power of routing models considered in this work.
		Most prior work only considers using \swap{} gates for routing, which we call \emph{classical routing.}
		In this work, we explore the additional routing power provided by genuinely quantum operations.
		We consider increasingly more powerful quantum routing models:
		first, allowing arbitrary two-qubit gates in \emph{gate-based quantum routing},
		and then allowing continuous Hamiltonian evolution in \emph{Hamiltonian (quantum) routing}.
	}\label{fig:routing-types}
\end{figure}

In this work, we explore the extent to which genuinely quantum operations can accelerate routing
in what we call \emph{quantum routing}.
The relative power of the models we explore is depicted in \cref{fig:routing-types}.
In \cref{sec:quantumRouting}, we introduce gate-based quantum routing with arbitrary two-qubit unitaries
and Hamiltonian (quantum) routing by Hamiltonian evolution with norm-bounded interactions.

A key observation is that routing can distribute entanglement across a bipartition of the system.
Thus, by lower bounding the circuit depth to create entanglement across vertex cuts in the architecture,
we derive bounds on quantum routing and improve bounds on classical routing.
The same argument does not apply in the continuous-time setting of the Hamiltonian model.
However, we show a similar (but weaker) lower bound from
a lower bound on the time to create entanglement across small edge cuts in the architecture
that improves on a previous bound~\cite{Bravyi2006} by constant factors.

We show lower bounds on routing by proving lower bounds on state preparation in the respective models.
The circuit depth to distribute entanglement in the gate-based model is lower bounded by
the maximum matching size in the edge boundary of a vertex cut.
In the Hamiltonian model, the evolution time is lower bounded in terms of edge cuts.
Our state preparation lower bounds generalize earlier analyses for lattices~\cite{Acoleyen2013,Gong2017,Piroli2021} to general graphs.

Next, in \cref{sec:classical}, we investigate separations between Hamiltonian and classical routing.
We prove a general upper bound on classical routing on simple connected graphs,
allowing us to prove that, under certain conditions on the spectral gap of the Laplacian of $G$ and the degrees of its vertices,
there is no superpolynomial separation between the (worst-case) classical and Hamiltonian routing times.
In particular, our results rule out a superpolynomial separation for interaction graphs of bounded (constant) degree,
a common feature of practical quantum architectures~\cite{Nation2021,Arute2019,Kielpinski2002}.

However, we are not aware of even a superconstant speedup of Hamiltonian routing over classical routing for any family of graphs.
In \cref{sec:separation}, we give two such examples strengthened routing models.
The first is an $\bigomega{\sqrt{n}}$ factor speedup in a strengthened routing model with one local ancilla per qubit.
The second, a $\bigtheta{n}$ speedup,
follows from allowing fast local interactions,
which give an asymptotically optimal gate-based routing algorithm
and an asymptotically optimal Hamiltonian routing algorithm if allowed one local ancilla per qubit.

\section{Quantum routing}\label{sec:quantumRouting}
In this section, we introduce quantum routing and prove lower bounds dependent on graph expansion properties.
We model the architectural constraints by a simple graph $G$ on $n$ vertices,
with the qubits represented by the vertex set $V(G)$ and the allowed interactions between qubits by the edge set $E(G)$.

\subsection{Gate-based quantum routing}
First, we consider routing in the gate-based model of quantum computation.
Analogous to the (classical) routing number~\cite{Alon1994} (see also \cref{eq:rnumber} in \cref{sec:classical}),
we define the \emph{gate-based quantum routing number} $\qrnumber{G}$ as
\begin{equation}
	\qrnumber{G} \coloneqq \max_\pi \qrnumber{G, \pi},
\end{equation}
where $\pi$ is a permutation of the qubits and $\qrnumber{G, \pi}$
is the minimum depth of a unitary circuit that implements the permutation $\pi$ while respecting the architecture constraints $G$,
i.e., only having two-qubit gates\footnote{We do not limit routing circuits to a particular gate set.
If necessary, any such circuit can be approximated by a universal, inverse-closed gate set with at most polylogarithmic overhead by the Solovay-Kitaev theorem~\cite{Kitaev1997}.} acting along the edges $E(G)$.
In this model, single-qubit gates are free since they can be absorbed into adjacent two-qubit gates.

We briefly prove a diameter lower bound on gate-based quantum routing.
The diameter of a graph $G$ is 
\begin{equation}
	\diam(G) \coloneqq \max_{u,v \in V(G)} d(u,v),
\end{equation}
where $d(u,v)$ is the (shortest) distance between vertices $u$ and $v$.
\begin{theorem}\label{thm:diamLB}
	For any simple graph $G$,
	\begin{equation}
		\qrnumber{G} \ge \diam(G).
	\end{equation}
\end{theorem}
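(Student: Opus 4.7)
The plan is to combine a light-cone (causal influence) argument with the definition of the diameter. First I would pick a pair of vertices $u, v \in V(G)$ attaining $d(u,v) = \diam(G)$, and let $\pi$ be any permutation with $\pi(u) = v$ (for instance, the transposition swapping $u$ and $v$). Since $\qrnumber{G} \ge \qrnumber{G, \pi}$, it suffices to show that every unitary circuit $C$ whose two-qubit gates lie on edges of $G$ and which implements $\pi$ must have depth at least $\diam(G)$.

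The core ingredient is a light-cone lemma: after $d$ layers of two-qubit gates along edges of $G$, the reduced output state on any qubit $w$ depends only on the initial states of the qubits in the ball $B_d(w) = \{x \in V(G) : d(x,w) \le d\}$. I would prove this by induction on $d$. The base case $d=0$ is trivial. For the inductive step, I would write the depth-$d$ circuit as one layer $L$ composed with a depth-$(d-1)$ circuit $C'$; each gate of $L$ acts on an edge of $G$, so it can only pull in the neighbors of qubits already in the causal past. Thus the causal past of $w$ grows by at most one graph step per layer, giving the ball $B_d(w)$ after $d$ layers.

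Having the light-cone bound in hand, I would finish by contradiction. Suppose $C$ has depth $d < \diam(G) = d(u,v)$. Prepare two product input states that agree on every qubit except $u$, where they are $\ket{0}$ and $\ket{1}$, respectively. Since $C$ implements $\pi$ with $\pi(u) = v$, the reduced output states at $v$ must be orthogonal (namely, $\ket{0}$ and $\ket{1}$ up to the permutation of the other qubits). But by the light-cone lemma, the reduced output at $v$ is determined by the initial state on $B_d(v)$, which does not contain $u$ since $d(u,v) > d$. Hence the two outputs at $v$ must coincide, a contradiction.

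The main obstacle is simply stating the light-cone lemma in a way that is valid for \emph{arbitrary} two-qubit unitaries (not just, say, Clifford or free-fermion gates) and any initial state, including entangled ones; once phrased in terms of the Heisenberg-picture support of an observable at $v$ (which is spread only along edges of $G$), the induction is routine. No further combinatorial input is needed beyond the definition of $\diam(G)$.
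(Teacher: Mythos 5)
Your proposal is correct and is essentially the paper's argument: both rest on the light-cone fact that a depth-$d$ circuit of two-qubit gates along edges of $G$ cannot propagate influence beyond graph distance $d$, applied to a permutation sending $u$ to $v$ with $d(u,v)=\diam(G)$. The paper states this directly in the Heisenberg picture (the support of a local operator at $u$ spreads at most distance $D$), which is exactly the dual of your Schr\"odinger-picture formulation and the one you yourself flag as the cleanest way to make the induction rigorous.
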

\begin{proof}
Consider two vertices $u,v \in V(G)$ at a distance $\diam(G)$
and a circuit $\mathcal C$ of two-qubit unitaries with depth $D$ acting on $G$.
Any local operator acting on $u$ 
evolved in the Heisenberg picture under $\mathcal C$ 
will have no support 
on vertices further than distance $D$.
In order to swap $u$ and $v$, all of the support of that Heisenberg-evolved operator must be on $v$, which implies $D \ge \diam(G)$. Therefore, $\qrnumber{G} \ge \diam(G)$.
\end{proof}

To prove a lower bound on gate-based quantum routing,
we relate routing to the task of generating entanglement.
We can quantify the entanglement of a pure state $\rho$
on a bipartite joint system $X\bar X$,
consisting of the subsystems $X$ and $\bar X$,
by the von Neumann entropy of the reduced density operator $\rho_X \coloneqq \ptrace{\bar X}{\rho}$,
defined as
\begin{equation}
	S(\rho_X) \coloneqq - \trace{\rho_X \log \rho_X}.
\end{equation}
(The function $\log(x)$ denotes the logarithm base 2 unless specified otherwise. We denote the natural logarithm by $\ln(x)$.)
We refer to the von Neumann entropy as ``the entropy''
and denote $S_X(\rho) \coloneqq S(\rho_X)$.
For completeness, we list some elementary properties of the entropy that will be useful later
and can be easily verified.
\begin{lemma}\label{lem:entropyProperties}
	For a state $\rho$ on a joint system $X\bar X$, the following statements about the entropy hold:
	\begin{enumerate}
		\item\label{item:entropySymmetric} If $\rho$ is a pure state,
			then the entropy is symmetric,
			i.e.,
			\begin{equation}
				S_X(\rho) = S_{\bar X}(\rho).
			\end{equation}
		\item\label{item:entropyChangeOfBasis} The entropy is invariant under change of basis, i.e.,
			\begin{equation}
				S(U \rho U^\dagger) = S(\rho).
			\end{equation}
		\item\label{item:entropyLocalX} The entropy is invariant under local unitaries $U_X$ on $X$
			and $U_{\bar X}$ on $\bar X$, i.e.,
			\begin{equation}
				S_X((U_X \otimes U_{\bar X}) \rho (U_X \otimes U_{\bar X})^\dagger) = S_X(\rho).
			\end{equation}
	\end{enumerate}
\end{lemma}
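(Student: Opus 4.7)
The plan is to verify each of the three properties using standard tools from quantum information: the Schmidt decomposition, the spectral invariance of entropy under unitary conjugation, and linearity/cyclicity of the partial trace. None of these steps is subtle; the lemma is explicitly billed as a collection of elementary facts, so the proof will proceed by direct computation from the definition $S(\sigma) = -\trace{\sigma \log \sigma}$.

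First, for item \ref{item:entropySymmetric}, I will apply the Schmidt decomposition to the pure state: write $\rho = \ket{\psi}\bra{\psi}$ with $\ket{\psi} = \sum_i \sqrt{\lambda_i}\, \ket{i}_X \ket{i}_{\bar X}$ for some orthonormal bases of $X$ and $\bar X$ and nonnegative $\lambda_i$ summing to one. Taking the partial trace over $\bar X$ yields $\rho_X = \sum_i \lambda_i \ket{i}\bra{i}_X$, while tracing over $X$ yields $\rho_{\bar X} = \sum_i \lambda_i \ket{i}\bra{i}_{\bar X}$. Since both reduced states have identical spectra $\{\lambda_i\}$, their von Neumann entropies coincide, giving $S_X(\rho) = S_{\bar X}(\rho) = -\sum_i \lambda_i \log \lambda_i$.

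For item \ref{item:entropyChangeOfBasis}, I will observe that $\rho$ and $U\rho U^\dagger$ are unitarily equivalent and hence share the same spectrum; because $S$ depends only on the eigenvalues of its argument, the equality is immediate. For item \ref{item:entropyLocalX}, I will factor the local unitary and use the fact that the partial trace over $\bar X$ commutes with conjugation by operators supported on $X$, along with cyclicity of the trace inside the partial trace over $\bar X$. Concretely,
\begin{equation}
    \ptrace{\bar X}{(U_X \otimes U_{\bar X}) \rho (U_X \otimes U_{\bar X})^\dagger} = U_X \, \ptrace{\bar X}{(\idm \otimes U_{\bar X}) \rho (\idm \otimes U_{\bar X})^\dagger} \, U_X^\dagger = U_X \rho_X U_X^\dagger,
\end{equation}
and then item \ref{item:entropyChangeOfBasis} finishes the job.

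There is no real obstacle: the entire argument is bookkeeping with the Schmidt decomposition and the partial trace. The only place where care is needed is in item \ref{item:entropyLocalX}, where I must justify that conjugation by $\idm \otimes U_{\bar X}$ leaves the reduced state on $X$ unchanged; this is a one-line application of cyclicity of the trace on the $\bar X$ factor, after which the remaining $U_X$ conjugation is handled by item \ref{item:entropyChangeOfBasis}.
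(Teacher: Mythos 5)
Your proof is correct: the paper gives no proof of this lemma (it is stated as a list of elementary facts that ``can be easily verified''), and your verification via the Schmidt decomposition for item 1, spectral invariance under unitary conjugation for item 2, and cyclicity of the partial trace on the $\bar X$ factor followed by item 2 for item 3 is exactly the standard argument one would supply. No gaps.
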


\begin{figure}[tbp]
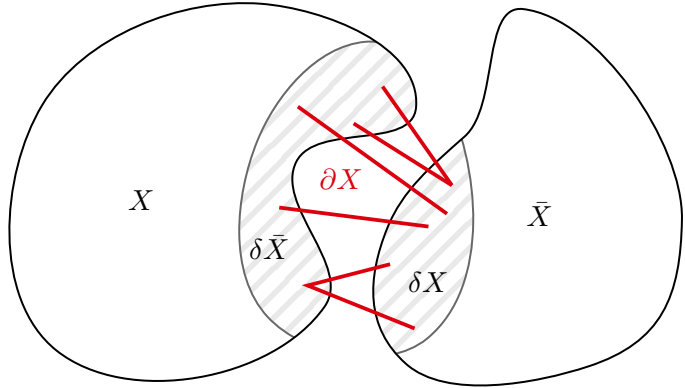

	\centering
	\includeinkscape{bipartite-graph}
	\caption{
		A graph can be partitioned into two sets of vertices $X$ and $\bar X$.
		The vertex boundary $\delta X$ of $X$ is the set of vertices outside of $X$
		that are directly connected to $X$,
		and similarly for $\delta \bar X$ and $\bar X$.
		The edge boundary $\partial X$ (red) of $X$ (and $\bar{X}$) is the set of edges that connect $X$ to $\bar{X}$.
	}\label{fig:biparted-graph}
\end{figure}

The change in entropy of the reduced state on $X \subseteq V(G)$
by a unitary respecting the constraints of the interaction graph $G$
can be bounded by a quantity proportional to the vertex boundary of $X$.
The set $X$ and its \emph{vertex complement} $\bar X \coloneqq V(G) \setminus X$ define a vertex partition of $G$;
see \cref{fig:biparted-graph}.
By the invariance of the entropy under local unitaries (part 3 of \cref{lem:entropyProperties}), we need only consider the unitary acting across the partition.
In particular, the unitary must act  on the \emph{vertex boundary}
\begin{equation}
	\delta X \coloneqq \set{v \in \bar X \mid \set{u,v} \in E(G),\, u \in X},
\end{equation}
which forms a vertex cut in $G$.
We formalize this bound on the change in entropy in the following lemma derived from the small total entangling property~\cite{Marien2016}.

\begin{lemma}[Small total entangling (STE)]\label{lem:ste}
	For a unitary $U$ acting nontrivially only on
	the joint subsystem $\delta X \delta \bar X$,
	the change in the entropy of any state $\rho$ is bounded by
	\begin{equation}
		\abs*{S_X(U\rho U^\dagger) - S_X(\rho)} \leq 2 \min(\abs{\delta X}, \abs{\delta \bar X}).
	\end{equation}
\end{lemma}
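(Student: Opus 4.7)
The plan is to reduce the claim to the standard fact that a bipartite unitary on subsystems of dimensions $d_A$ and $d_B$ can change the von Neumann entropy across a cut by at most $2\log\min(d_A,d_B)$, and then specialize to the boundary subsystems. Since $\delta X \subseteq \bar X$ and $\delta \bar X \subseteq X$ are disjoint, I decompose $X = A \sqcup X'$ and $\bar X = B \sqcup \bar X'$, where $A \coloneqq \delta \bar X$, $B \coloneqq \delta X$, and $U$ acts as the identity on $X'\bar X'$. Write $\rho' \coloneqq U\rho U^\dagger$.

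First I would prove the one-sided bound $\abs*{S_X(\rho') - S_X(\rho)} \le 2\abs{A} = 2\abs{\delta \bar X}$. Subadditivity gives $S(\rho'_X) \le S(\rho'_A) + S(\rho'_{X'})$, and since $U$ is trivial on $X'$, $\rho'_{X'} = \rho_{X'}$. The Araki--Lieb triangle inequality gives $S(\rho_X) \ge S(\rho_{X'}) - S(\rho_A)$. Subtracting and using that the entropy of a $k$-qubit state is at most $k$,
\begin{equation*}
S(\rho'_X) - S(\rho_X) \le S(\rho'_A) + S(\rho_A) \le 2\abs{A}.
\end{equation*}
Running the same argument with $U$ replaced by $U^\dagger$ (equivalently, swapping the roles of $\rho$ and $\rho'$) gives the reverse inequality, so $\abs*{S_X(\rho') - S_X(\rho)} \le 2\abs{\delta \bar X}$.

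To obtain the companion bound $2\abs{\delta X}$, I would pass to a purification. Adjoin a reference system $R$ so that the joint state $\ket{\psi}$ on $X \bar X R$ is pure; then $U \otimes I_R$ still acts only on $AB$. By the symmetry of the entropy for pure states (part~\ref{item:entropySymmetric} of \cref{lem:entropyProperties}), $S(\rho_X) = S(\psi_{\bar X R})$ and likewise $S(\rho'_X) = S(\psi'_{\bar X R})$. Repeating the subadditivity plus Araki--Lieb argument on the bipartition $B \mid \bar X' R$ of $\bar X R$, where $U \otimes I_R$ is the identity on $\bar X' R$, yields $\abs*{S_X(\rho') - S_X(\rho)} \le 2\abs{B} = 2\abs{\delta X}$. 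Taking the minimum of the two bounds gives the claimed inequality.

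The only subtle step is the purification: for mixed $\rho$ the two cut sizes $\abs{\delta X}$ and $\abs{\delta \bar X}$ enter asymmetrically in the subadditivity argument, and they interchange only once the global state is pure so that $S_X$ coincides with the entropy on the complementary side of the cut. Everything else is a direct application of subadditivity and Araki--Lieb.
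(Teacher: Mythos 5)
Your proof is correct and follows essentially the same route as the paper's: both arguments reduce the entropy change across the cut to (twice) a conditional entropy of a boundary subsystem, bound it by its log-dimension, and use a purification so that the two boundary sizes $\abs{\delta X}$ and $\abs{\delta \bar X}$ can each play the role of that subsystem. The only difference is presentational: the paper invokes the dimension bound on conditional entropy as a cited fact, whereas you derive it inline from subadditivity and the Araki--Lieb inequality (which is exactly how that bound is proved), and you obtain the $2\abs{\delta \bar X}$ bound directly on the mixed state without purifying.
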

\begin{proof}
	We consider a purification system $R$ such that $\ptrace{R}{\ket{\psi}\bra{\psi}} = \rho$
	for some pure state $\ket\psi$ on the joint system $X\bar X R$.
	For any subsystem $Y$ of $X\bar X$,
	\begin{equation}
	    S_Y(\ket\psi) \coloneqq S_Y(\ptrace{R}{\ket\psi\bra\psi}).
	\end{equation}
    
	The unitary $U$ is a local unitary on the joint subsystem $\delta X\delta \bar X$, so
	\cref{lem:entropyProperties} implies
	\begin{equation}
		\abs{S_X(U\rho U^\dagger) - S_X(\rho)} = \abs{S_X(U \ket\psi) - S_X(\ket\psi)} = \abs{S_{X|\delta X}(U\ket{\psi}) - S_{X|\delta X}(\ket\psi)}
	\end{equation}
	where we used the conditional quantum entropy, $S_{X|\delta X}(\ket \psi) = S_{X\delta X}(\ket\psi) - S_X(\ket\psi)$.
	By the triangle inequality and \cite[theorem~11.5.1]{WildeBook},
	\begin{equation}
    	\abs{S_{X|\delta X}(U\ket{\psi}) - S_{X|\delta X}(\ket\psi)} \le \abs{S_{X|\delta X}(U\ket{\psi})} + \abs{S_{X|\delta X}(\ket\psi)}
				   		   \le 2 \abs{\delta X}.\label{eq:ste1}
	\end{equation}
	By symmetry of the entropy for pure states, we also obtain
	\begin{equation}
		\abs{S_X(U\rho U^\dagger) - S_X(\rho)} = \abs{S_{\bar X | \delta \bar X}(U\ket{\psi}) - S_{\bar X | \delta \bar X}(\ket\psi)}
				   		   \le 2 \abs{\delta \bar X}.\label{eq:ste2}
	\end{equation}
	The minimum of \cref{eq:ste1,eq:ste2} gives the required bound.
\end{proof}

We can saturate this bound in several special cases.
A \swap{} gate can saturate this bound when the subsystems $\delta X$ and $\delta \bar X$ are single qubits
that are maximally entangled with the remainder of $\bar X$ and $X$, respectively.
Furthermore, with sufficient connectivity,
we can also saturate this bound in higher dimensions:
let
\begin{equation}
	\abs{\delta X} = \abs{\delta \bar X} \le \min(\abs{X}, \abs{\bar X})/2
\end{equation}
and let $\delta X$ and $\delta \bar X$ be maximally entangled with the remainder of $\bar X$ and $X$, respectively.
Then, if we exchange $\delta X$ with $\delta \bar X$ through simultaneous \swap{}s,
the entropy increases by $2\abs{\delta X}$, saturating the bound.

We now prove a lower bound on the time required for \emph{state preparation} of entangled states
based on the maximum matching size in the edge boundary of the vertex cut.
A \emph{matching} is a set of edges $E' \subseteq E(G)$ such that all vertices in $E'$ are distinct.
For any $E' \subseteq E(G)$, we define $\matching{E'} \subseteq E'$ as the maximum(-size) matching in $E'$.
State preparation is the task of preparing some target state $\rho$ given an initial state $\rho_0$.
A special case of state preparation is routing a particular state.
If the change in entanglement between initial state $\rho_0$ and final state $\rho$ is
$\abs{S_X(\rho) - S_X(\rho_0)}$,
then a simple argument from STE gives a circuit depth lower bound of $\abs{S_X(\rho) - S_X(\rho_0)} / (2\abs{\delta X})$,
and similar arguments have been used with the entanglement capacity~\cite{Bennett2003,Eldredge2020}.
However, this does not account for the time required to entangle the boundary subsystem with the bulk subsystem.
A careful accounting gives the following, which we later show can be saturated.

\begin{lemma}\label{lem:gateBasedStatePreparation}
	Given an initial state $\rho_0$ and a target state $\rho$
	on the bipartite system consisting of $X$ and $\bar X$,
	define the change in entropy 
	\begin{equation}
		\Delta S_Z \coloneqq \abs*{S_Z(\rho) - S_Z(\rho_0)}
	\end{equation}
	for any subsystem $Z$.
	Then any gate-based unitary circuit $\mathcal C$ for preparing $\rho$ from $\rho_0$
	restricted by an interaction graph $G$ has depth
	\begin{equation}\label{eq:gateBasedStatePreparation1}
		d \ge \frac{\Delta S_X}{2\abs{\matching{\partial X}}},
	\end{equation}
	for any $X \subsetneq V(G)$,
	and
	\begin{equation}\label{eq:gateBasedStatePreparation2}
		d \ge \frac{\Delta S_X + \Delta S_Y}{2\abs{\matching{\partial(\delta X)}}},
	\end{equation}
	for $Y \coloneqq \bar X \setminus \delta X$.
\end{lemma}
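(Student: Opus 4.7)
The plan is to track the entropy change layer by layer in the circuit $\mathcal C$ and apply the small total entangling property (\cref{lem:ste}) to the gates that cross the relevant cuts. Decompose $\mathcal C = U_d \cdots U_1$, where each layer $U_t$ is a tensor product of two-qubit gates supported on a set of edges $M_t \subseteq E(G)$. Because these gates act simultaneously on disjoint pairs of qubits, $M_t$ is always a matching in $E(G)$.

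For \cref{eq:gateBasedStatePreparation1}, by part~3 of \cref{lem:entropyProperties} the only gates in $U_t$ that can change $S_X$ are those lying on the cut $\partial X$; call this sublayer $M_t^X \coloneqq M_t \cap \partial X$. Grouping these gates into a single unitary acting across the cut and applying \cref{lem:ste} bounds the change in $S_X$ contributed by layer $t$ by $2\abs{M_t^X}$. Since $M_t^X \subseteq M_t$ is a matching contained in $\partial X$, we have $\abs{M_t^X} \le \abs{\matching{\partial X}}$. Summing over $t = 1, \ldots, d$ and using the triangle inequality yields $\Delta S_X \le 2 d \abs{\matching{\partial X}}$, which rearranges to the stated bound.

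For \cref{eq:gateBasedStatePreparation2}, I would apply the same layerwise argument simultaneously to both cuts $(X, \bar X)$ and $(Y, \bar Y)$ and add. The key structural observation is that no vertex of $Y = \bar X \setminus \delta X$ has a neighbor in $X$ (otherwise it would lie in $\delta X$ by definition), so $\partial Y$ consists only of edges from $Y$ to $\delta X$ and $\partial(\delta X)$ decomposes as the disjoint union $\partial X \sqcup \partial Y$. Defining $M_t^X \coloneqq M_t \cap \partial X$ and $M_t^Y \coloneqq M_t \cap \partial Y$, every edge in $M_t^X \cup M_t^Y$ has an endpoint in $\delta X$, and since $M_t$ is a matching, $M_t^X \cup M_t^Y$ is itself a matching in $\partial(\delta X)$. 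Hence $\abs{M_t^X} + \abs{M_t^Y} \le \abs{\matching{\partial(\delta X)}}$. Applying \cref{lem:ste} once to each cut gives per-layer bounds of $2\abs{M_t^X}$ on the change in $S_X$ and $2\abs{M_t^Y}$ on the change in $S_Y$, which sum and telescope to $\Delta S_X + \Delta S_Y \le 2 d \abs{\matching{\partial(\delta X)}}$.

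The main obstacle is the second bound: one must argue that the matching structure of a single circuit layer forces the gates contributing to $\Delta S_X$ and those contributing to $\Delta S_Y$ to combine into a single matching in $\partial(\delta X)$, so that their sizes are controlled by one matching number rather than by two. Once this combinatorial observation is in place, the entropy accounting reduces to the same layerwise STE argument used in the first part.
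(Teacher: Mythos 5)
Your proposal is correct and follows essentially the same route as the paper: a layerwise application of STE to the gates crossing each cut, with the key combinatorial step being that the crossing gates of a single layer form one matching inside $\partial(\delta X)$ (the paper phrases this as $\abs{X_i} + \abs{X_i'} \le \abs{\matching{\partial(\delta X)}}$ for the boundary subsets touched by the $X\delta X$- and $Y\delta X$-acting unitaries). Your explicit decomposition $\partial(\delta X) = \partial X \sqcup \partial Y$ just makes that counting argument, which the paper asserts more tersely, fully precise.
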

\begin{proof}

	We can decompose $\mathcal C$ into a sequence of disjoint unitaries $U_i$ acting on $X\delta X$
	and unitaries $V_i$ acting on $Y\delta X$,
	where $i \in \mathbb N$.
	To perform operations $U_i$ and $V_i$ simultaneously, they must act on disjoint subsets
	$X_i, X_i' \subseteq \delta X$, respectively.
	Between each application of $U_iV_i$,
	there are local unitary operations within $X$, $\delta X$, and $Y$,
	labelled as $O_i$,
	that we allow to be performed instantaneously.
	The circuit can thus be decomposed as
	\begin{equation}
		\mathcal C = O_d U_d V_d \dots O_1 U_1 V_1 O_0.
	\end{equation}

	We lower bound $d$ by considering the change in entropy and applying STE (\cref{lem:ste}).
	First, we note that the operations $O_i$ cannot change the entropy of the respective subsystems.
	By STE, $U_i$ can change the entropy of $X$ by at most $2\abs{X_i}$
	and $V_i$ can change the entropy of $Y$ by at most $2\abs{X_i'}$.
	Therefore, we have two inequalities that must be satisfied:
	\begin{align}
		\Delta S_X &\le 2\sum_{i=1}^d \abs{X_i}\\
		\text{and}\quad \Delta S_Y &\le 2\sum_{i=1}^d \abs{X_i'}.
	\end{align}
	Noting that $\abs{X_i} \le \abs{\matching{\partial X}}$,
	we obtain $\Delta S_X \le 2d\abs{\matching{\partial X}}$, thus proving \cref{eq:gateBasedStatePreparation1}.
	Additionally, we note that $\abs{X_i} + \abs{X_i'} \le \abs{\matching{\partial(\delta X)}}$ so that
	\begin{equation}
		\Delta S_X + \Delta S_Y \le 2\sum_{i=1}^d \paren{\abs{X_i} + \abs{X_i'}} \le 2d\abs{\matching{\partial(\delta X)}},
	\end{equation}
	which implies \cref{eq:gateBasedStatePreparation2}.
\end{proof}

Entanglement capacity-based state preparation lower bounds that are proportional to $\abs{\delta X}$~\cite{Gong2017}
can be weaker than \cref{lem:gateBasedStatePreparation} by a factor $\bigomega{n}$ 
for some partitions $X$.
To see this,
consider the graph $L_{2n}$ that consists of two complete graphs $G_1 = K_n$ and $G_2 = K_n$
with additional edges
\begin{equation}
	(\set{x_1} \times V(G_2)) \cup (V(G_1) \times \set{x_2})
\end{equation}
where we pick arbitrary vertices $x_1 \in V(G_1)$ and $x_2 \in V(G_2)$.
For the partition with $X = V(G_1)$,
we have $\abs{\matching{\partial X}} = 2$ whereas $\abs{\delta X} = n$.

Even so, we can obtain a simpler lower bound on the circuit depth as a corollary by relating the change in entropy $\Delta S_{\bar X}$
to that of the bulk system $\bar X \setminus \delta X$.

\begin{corollary}\label{cor:gateBasedVertexLB}
	Given an initial state $\rho_0$ and target state $\rho$,
	the depth of a gate-based state preparation circuit
	restricted by interaction graph $G$ with partition $X \subsetneq V(G)$ is lower bounded by
	\begin{equation}
		d \ge \frac{\Delta S_X + \Delta S_{\bar X} - 2\abs{\delta X}}{2\abs{\matching{\partial(\delta X)}}} \ge \frac{\Delta S_X + \Delta S_{\bar X}}{2\abs{\delta X}} - 1.
	\end{equation}
\end{corollary}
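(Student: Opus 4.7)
The plan is to derive the corollary directly from \cref{lem:gateBasedStatePreparation}, using entropic bounds to convert the change in entropy on the bulk $Y = \bar X \setminus \delta X$ into a change in entropy on $\bar X$, and then a combinatorial bound to replace the matching size by $\abs{\delta X}$.

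First, I would start from \cref{eq:gateBasedStatePreparation2}, which gives $d \ge (\Delta S_X + \Delta S_Y)/(2\abs{\matching{\partial(\delta X)}})$, and attempt to lower bound $\Delta S_Y$ in terms of $\Delta S_{\bar X}$. Since $\bar X$ decomposes as the disjoint union $Y \sqcup \delta X$, the Araki--Lieb triangle inequality applied to any density operator $\sigma$ on $\bar X$ gives $\abs{S_{\bar X}(\sigma) - S_Y(\sigma)} \le S_{\delta X}(\sigma) \le \abs{\delta X}$, where the last step uses the maximum entropy of a system of $\abs{\delta X}$ qubits. Applying this to both $\rho$ and $\rho_0$ and using the reverse triangle inequality yields $\Delta S_Y \ge \Delta S_{\bar X} - 2\abs{\delta X}$, which is the key ingredient. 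Substituting this into \cref{eq:gateBasedStatePreparation2} produces the first inequality
\begin{equation}
    d \ge \frac{\Delta S_X + \Delta S_{\bar X} - 2\abs{\delta X}}{2\abs{\matching{\partial(\delta X)}}}.
\end{equation}

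For the second inequality, I would observe that any matching $M \subseteq \partial(\delta X)$ consists of edges each having exactly one endpoint in $\delta X$ (by definition of the edge boundary) and that these endpoints are all distinct (by the definition of a matching). Hence $\abs{\matching{\partial(\delta X)}} \le \abs{\delta X}$. Using this to enlarge the denominator and splitting the resulting fraction gives $(\Delta S_X + \Delta S_{\bar X})/(2\abs{\delta X}) - 1$, completing the chain of inequalities.

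The main obstacle is essentially routine: one just needs to be careful that the Araki--Lieb inequality applies to the reduced state on $\bar X$ (so that one can treat $Y$ and $\delta X$ symmetrically without invoking a purification beyond what is already available), and that the matching bound $\abs{\matching{\partial(\delta X)}} \le \abs{\delta X}$ is the correct combinatorial statement. No new machinery beyond \cref{lem:gateBasedStatePreparation} and standard entropy inequalities is required.
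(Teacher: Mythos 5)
Your proposal is correct and follows essentially the same route as the paper: both derive $\Delta S_Y \ge \Delta S_{\bar X} - 2\abs{\delta X}$ from the splitting $\bar X = Y \sqcup \delta X$ (the paper uses subadditivity on $\rho$ and Araki--Lieb on $\rho_0$; your two-sided bound $\abs{S_{\bar X}(\sigma) - S_Y(\sigma)} \le S_{\delta X}(\sigma)$ is exactly the conjunction of those two inequalities, so only your attribution of the subadditivity direction to ``Araki--Lieb'' is imprecise), then substitute into \cref{eq:gateBasedStatePreparation2} and use $\abs{\matching{\partial(\delta X)}} \le \abs{\delta X}$. No substantive difference from the paper's argument.
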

\begin{proof}
	Let $Y \coloneqq \bar X \setminus \delta X$.
	The entropy of the target state can be upper bounded
	using subadditivity and $S_{\delta X}(\sigma) \le \abs{\delta X}$ for any state $\sigma$
	as
	\begin{equation}
		\Delta S_{\bar X} = S_{\bar X}(\rho) - S_{\bar X}(\rho_0) \le S_{Y}(\rho) - S_{\bar X}(\rho_0) + \abs{\delta X}.
	\end{equation}
	The Araki-Lieb triangle inequality,
	\begin{equation}
		S_{\bar X}(\rho_0) \ge \abs{S_{Y}(\rho_0) - S_{\delta X}(\rho_0)},
	\end{equation}
 	then gives
	\begin{equation}\label{eq:complementEntropyIncrease}
		\Delta S_{\bar X} \le S_{Y}(\rho) - S_{\bar X}(\rho_0) + \abs{\delta X} \le \Delta S_Y + 2\abs{\delta X}.
	\end{equation}
	We now apply \cref{lem:gateBasedStatePreparation}, giving
	\begin{equation}
		d \ge \frac{\Delta S_X + \Delta S_{\bar X} - 2\abs{\delta X}}{2\abs{\matching{\partial(\delta X)}}} \ge \frac{\Delta S_X + \Delta S_{\bar X}}{2\abs{\delta X}} - 1
	\end{equation}
	as claimed,
	where the second inequality follows from $\abs{\matching{\partial(\delta X)}} \le \abs{\delta X}$.
\end{proof}

\begin{algorithm}[tbp]
    \caption{%
        Saturating \cref{cor:gateBasedVertexLB} when $\abs{X} = 2k\abs{\delta X} \le \abs{\bar X}$
        for integer $k > 0$,
        assuming every vertex in $\delta X$ is connected to all vertices in $X$
        and $Y \coloneqq \bar X \setminus \delta X$.
        We prepare Bell pairs on $X$ and $\bar X$ and route one end of each Bell pair in $X$ to $\bar X$ and vice versa.
    }\label{alg:saturateStatePreparation}
    \Input{%
        Initial state of $\abs{X}/2$ Bell pairs on each of $X$ and $\bar X$
        where each $v \in \delta X$ is part of a distinct Bell pair.
    }
    simultaneously \swap{} ends of unrouted Bell pairs in $X$ and $\delta X$\;
    \While{$X$ has unrouted ends of Bell pairs}{%
        simultaneously \swap{} ends of unrouted Bell pairs in $Y$ and $\delta X$\;
        simultaneously \swap{} ends of unrouted Bell pairs in $X$ and $\delta X$\;
    }
\end{algorithm}

\cref{cor:gateBasedVertexLB} can be saturated by \cref{alg:saturateStatePreparation}
when $\abs{X} = 2k\abs{\delta X}$ for integer $k > 0$, so that a set of $2\abs{\delta X}$ ends of Bell pairs can be exchanged between $X$ and $\bar X$ every odd time step.
The algorithm makes the additional assumptions that $\abs{X} \le \abs{\bar X}$ to allow for $\Delta S_X =\Delta S_{\bar X} = \abs{X}$
and that $\delta X$ has high connectivity with the rest of the graph so that ends of Bell pairs can easily be routed to and from $\delta X$.
The algorithm saturates \cref{cor:gateBasedVertexLB} after every odd time step up to and including depth $d = 2 k -1$.

\begin{figure}
	\begin{subfigure}{0.5\textwidth}
		\centering
		\includeinkscape[width=\textwidth]{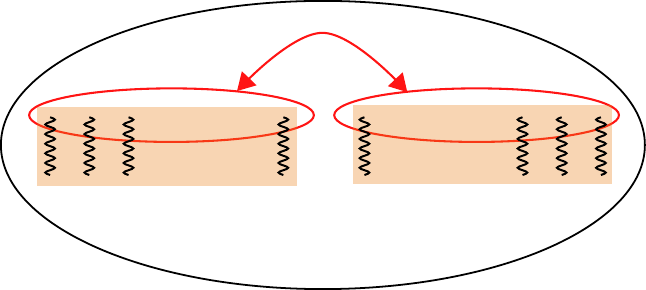}
		\caption{Before routing}
	\end{subfigure}%
	\hfill%
	\begin{subfigure}{0.5\textwidth}
		\centering
		\includeinkscape[width=\textwidth]{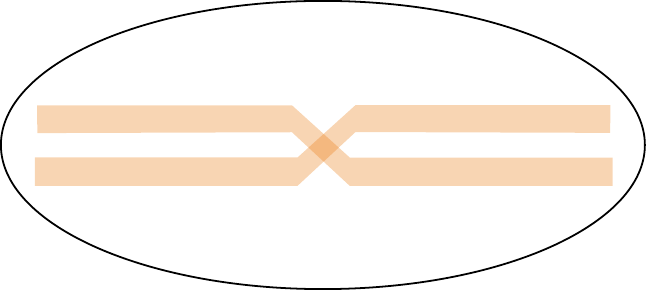}
		\caption{After routing}
	\end{subfigure}
	\caption{%
		For the proof of \cref{thm:qrtExpansionBound}, we consider a bipartite system
		consisting of $X$ and $\bar X$ with $\abs{X} \le \abs{\bar X}$.
		The subsystems $X$ and $\bar X$ consist of qubits represented by vertices and are augmented with ancilla spaces $x$ and $x'$, respectively.
		To each qubit in $X$ and $\bar X$ we associate one ancilla in $x$ and $x'$, respectively.
		We initialize each qubit-ancilla pair in a Bell state (wavy line).
		The entropy of subsystem $Xx$ is 0.
		We then perform routing to exchange $X$ with a subset of $\bar X$ (in red).
		This increases the entropy of $Xx$ to $2\abs{X}$.
		We bound the entanglement increase for each layer of gates
		by twice the maximum matching size in $\partial(\delta X)$,
		thereby lower-bounding the circuit depth and $\qrt(G)$.
	}\label{fig:routingEntanglement}
\end{figure}

We now show that a lower bound on the gate-based quantum routing number follows from \cref{lem:gateBasedStatePreparation}
by preparing an appropriate initial state.
See \cref{fig:routingEntanglement} for an illustration of the proof concept.
\begin{theorem}\label{thm:qrtExpansionBound}
	For any simple graph $G$ and partition $X \subseteq V(G)$ with $\abs{X} \le \abs{V(G)}/2$,
	\begin{equation}\label{eq:qrtExpansionBoundSimple}
		\qrnumber{G} \ge \frac{\abs{X}}{\abs{\matching{\partial X}}}
	\end{equation}
	and
	\begin{equation}\label{eq:qrtExpansionBound}
		\qrnumber{G} \ge \max\paren{\frac{2\abs{X} - \abs{\delta X}}{\abs{\matching{\partial(\delta X)}}},
		\frac{2\abs{X} - \abs{\delta \bar X}}{\abs{\matching{\partial(\delta \bar X)}}}
		}.
	\end{equation}
\end{theorem}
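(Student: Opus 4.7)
The plan is to reduce the theorem to the state-preparation lower bounds of \cref{lem:gateBasedStatePreparation} and \cref{cor:gateBasedVertexLB}, using the ancilla construction illustrated in \cref{fig:routingEntanglement}. I augment $G$ to a graph $G'$ by adding one isolated ancilla vertex $a_v$ for every $v \in V(G)$, and group the ancillas as $x \coloneqq \set{a_v : v \in X}$ and $x' \coloneqq \set{a_v : v \in \bar X}$. Because the ancillas contribute no edges, the boundaries $\delta_{G'}(X \cup x)$, $\partial_{G'}(X \cup x)$, and $\partial_{G'}(\delta_{G'}(X \cup x))$ all coincide with $\delta X$, $\partial X$, and $\partial(\delta X)$ in $G$, so state-preparation bounds applied in $G'$ only reference the combinatorial quantities appearing in the theorem.

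Next I fix any subset $S \subseteq \bar X$ with $\abs{S} = \abs{X}$ (possible since $\abs{X} \le \abs{V(G)}/2 \le \abs{\bar X}$), let $\pi$ be the involution on $V(G)$ that swaps $X$ with $S$ pointwise, and initialize the augmented system in $\bigotimes_{v \in V(G)} \ket{\Phi^+}_{v, a_v}$. A depth-$d$ gate-based routing circuit for $\pi$ on $G$, padded by identities on the ancillas, is a depth-$d$ state-preparation circuit on $G'$. A direct count of which Bell pairs cross the bipartition $X \cup x \mid \bar X \cup x'$ after applying $\pi$ gives $\abs{X}$ crossing pairs coming from vertices in $X$ (qubit moves to $S$, ancilla remains in $x$) plus $\abs{X}$ more from vertices in $S$ (qubit moves to $X$, ancilla remains in $x'$), yielding $\Delta S_{X \cup x} = 2\abs{X}$. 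Purity of the joint state together with the symmetry of entropy on pure states (\cref{lem:entropyProperties}) then gives $\Delta S_{\bar X \cup x'} = 2\abs{X}$ as well.

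Now \cref{eq:gateBasedStatePreparation1} of \cref{lem:gateBasedStatePreparation} applied to the augmented partition immediately yields $\qrnumber{G} \ge \qrnumber{G,\pi} \ge \Delta S_{X \cup x}/(2\abs{\matching{\partial X}}) = \abs{X}/\abs{\matching{\partial X}}$, which is \cref{eq:qrtExpansionBoundSimple}. Applying \cref{cor:gateBasedVertexLB} to the same partition produces
\begin{equation*}
    \qrnumber{G,\pi} \ge \frac{\Delta S_{X \cup x} + \Delta S_{\bar X \cup x'} - 2\abs{\delta X}}{2\abs{\matching{\partial(\delta X)}}} = \frac{2\abs{X} - \abs{\delta X}}{\abs{\matching{\partial(\delta X)}}},
\end{equation*}
and rerunning the argument with the partition sides exchanged (so that $\bar X \cup x'$ plays the role of ``$X$'' in \cref{cor:gateBasedVertexLB}) yields the analogous bound with $\delta \bar X$ in place of $\delta X$; taking the maximum gives \cref{eq:qrtExpansionBound}.

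The only subtle choice in this plan is the initial state: pairing each qubit with its own \emph{isolated} ancilla (rather than with a partner inside $V(G)$) ensures that both $\Delta S_{X \cup x}$ and $\Delta S_{\bar X \cup x'}$ pick up the full $2\abs{X}$ after the swap, which is exactly what turns the $\Delta S_X + \Delta S_{\bar X}$ numerator of \cref{cor:gateBasedVertexLB} into $4\abs{X}$ and hence into the stated $2\abs{X} - \abs{\delta X}$. Everything else is routine Bell-pair bookkeeping and the observation that appending isolated vertices to $G$ preserves all relevant boundary sets.
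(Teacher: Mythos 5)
Your proposal is correct and follows essentially the same route as the paper's own proof: the same isolated-ancilla construction with one Bell pair per vertex, the same choice of a permutation exchanging $X$ with an equal-sized subset of $\bar X$, the computation $\Delta S_{Xx} = \Delta S_{\bar X x'} = 2\abs{X}$, and the same applications of \cref{lem:gateBasedStatePreparation} and \cref{cor:gateBasedVertexLB} (with the roles of the two sides exchanged for the second branch of the maximum). Your extra bookkeeping about the boundaries of the augmented graph and the explicit count of crossing Bell pairs only makes explicit what the paper leaves implicit.
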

\begin{proof}
	We augment the subsystems $X$ and $\bar X$ with ancilla spaces $x$ and $x'$, respectively,
    with one ancilla qubit for each vertex in $X$ and $\bar X$.
	Since these ancillas are not connected with the main graph,
	they cannot help with routing.
    Each qubit and ancilla pair forms a Bell pair in the initial state $\rho_0$.
	Then the entropy $S_{Xx}(\rho_0) = S_{\bar Xx'}(\rho_0) = 0$ since the reduced state is pure.

	The gate-based quantum routing number $\qrnumber{G}$ considers the worst-case permutation of the vertices.
	So, to show a lower bound, it suffices to pick a permutation $\pi$ that routes all vertices $v \in X$ to $\bar X$ arbitrarily
	and routes $\abs{X}$ vertices $u \in \bar X$ to $X$ arbitrarily.
	Let the resulting state be our target state $\rho$.
	This gives $S_{Xx}(\rho) = S_{\bar Xx'}(\rho) = 2\abs{X}$.
	By \cref{lem:gateBasedStatePreparation}, the depth of any circuit performing this state preparation
	and routing task is lower bounded as
	\begin{equation}
		\qrnumber{G, \pi} \ge \frac{\Delta S_{Xx}}{2\abs{\matching{\partial X}}} = \frac{\abs{X}}{\abs{\matching{\partial X}}},
	\end{equation}
	proving~\cref{eq:qrtExpansionBoundSimple}.
	Similarly, \cref{cor:gateBasedVertexLB} implies
	\begin{equation}
		\qrnumber{G, \pi} \ge \frac{\Delta S_{Xx} + \Delta S_{\bar Xx'} - 2\abs{\delta X}}{2\abs{\matching{\partial(\delta X)}}} = \frac{2\abs{X} - \abs{\delta X}}{\abs{\matching{\partial(\delta X)}}}.\label{eq:qrtLB1}
	\end{equation}
	By exchanging the roles of $X$ and $\bar X$, \cref{cor:gateBasedVertexLB} also gives the lower bound
	\begin{equation}
		\qrnumber{G, \pi} \ge \frac{2\abs{X} - \abs{\delta \bar X}}{\abs{\matching{\partial(\delta \bar X)}}}.\label{eq:qrtLB2}
	\end{equation}
	Taking the maximum of \cref{eq:qrtLB1,eq:qrtLB2},
	we obtain \cref{eq:qrtExpansionBound} as required.
\end{proof}

We now show that \cref{thm:qrtExpansionBound} lower bounds the gate-based quantum routing number 
in terms of the \emph{vertex expansion} (or vertex isoperimetric number)
\begin{equation}
\label{eq:cGdefinition}
	c(G) \coloneqq \min_{X \subseteq V(G) : \abs{X} \le \abs{V(G)} /2} \frac{\abs{\delta X}}{\abs{X}},
\end{equation}
which is a well-studied property of graphs~\cite{ChungRevised}.
Intuitively, the vertex expansion lower bounds how many vertices neighbor any small enough set $X$.
Therefore, the number of vertices in the induced subgraph $G[X \cup N(X)]$,
for $N(X)$ the neighborhood of $X$,
grows (or ``expands'') by at least a factor of $1 + c(G)$.
\begin{corollary}\label{cor:qrtVertexLB}
	For any simple graph $G$,
	\begin{equation}
		\qrnumber{G} \ge \max_{X \subseteq V(G): \abs{X} \le \abs{V(G)}/2} \frac{2\abs{X}}{\abs{\delta X}} - 1 = \frac{2}{c(G)} -1.
	\end{equation}
\end{corollary}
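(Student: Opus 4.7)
The plan is to derive the corollary as a direct consequence of \cref{thm:qrtExpansionBound}, specifically \cref{eq:qrtExpansionBound}, after bounding the matching size in the denominator by $\abs{\delta X}$.

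First I would recall that any matching in $\partial(\delta X)$ uses each vertex at most once, and in particular uses each vertex of $\delta X$ at most once. Since every edge of $\partial(\delta X)$ is incident to at least one vertex in $\delta X$ (by definition, $\partial(\delta X)$ consists of the edges from $\delta X$ to its complement in the bipartition we are considering), we get $\abs{\matching{\partial(\delta X)}} \le \abs{\delta X}$. Substituting this into \cref{eq:qrtExpansionBound} yields
\begin{equation}
    \qrnumber{G} \ge \frac{2\abs{X} - \abs{\delta X}}{\abs{\matching{\partial(\delta X)}}} \ge \frac{2\abs{X} - \abs{\delta X}}{\abs{\delta X}} = \frac{2\abs{X}}{\abs{\delta X}} - 1
\end{equation}
for every $X \subseteq V(G)$ with $\abs{X} \le \abs{V(G)}/2$.

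Next, I would take the maximum of this inequality over all such $X$, giving the first equality in the statement. To finish, I would invoke the definition of the vertex expansion in \cref{eq:cGdefinition}, namely $c(G) = \min_X \abs{\delta X}/\abs{X}$ over the same range of $X$, so that $\max_X \abs{X}/\abs{\delta X} = 1/c(G)$. Substituting this gives the final form $\qrnumber{G} \ge 2/c(G) - 1$.

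The argument is essentially a one-line reduction, so there is no real obstacle; the only subtlety worth flagging is checking that the $X$ over which the maximum/minimum is taken in \cref{thm:qrtExpansionBound} is the same range as in the definition of $c(G)$ (both use $\abs{X} \le \abs{V(G)}/2$), so that the passage from the maximum over $X$ to $1/c(G)$ is valid without any additional factor.
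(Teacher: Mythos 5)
Your proposal is correct and follows essentially the same route as the paper: choose the first branch of \cref{eq:qrtExpansionBound}, bound $\abs{\matching{\partial(\delta X)}} \le \abs{\delta X}$, maximize over $X$ with $\abs{X} \le \abs{V(G)}/2$, and identify the result with $2/c(G) - 1$. The only difference is that you spell out why a matching in $\partial(\delta X)$ has size at most $\abs{\delta X}$, which the paper states without justification.
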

\begin{proof}
	By maximizing over all allowed partitions $X$ in \cref{thm:qrtExpansionBound},
	choosing one branch of \cref{eq:qrtExpansionBound},
	and noting $\abs{\matching{\partial(\delta Y)}} \le \abs{\delta Y}$, for any $Y \subseteq V(G)$,
	we have
	\begin{align}
		\qrt(G) &\ge \max_{X \subseteq V(G) : \abs{X} \le \abs{V(G)}/2} \frac{2\abs{X} - \abs{\delta X}}{\abs{\matching{\partial(\delta X)}}}\\
			&\ge \max_{X \subseteq V(G) : \abs{X} \le \abs{V(G)}/2} \frac{2\abs{X} - \abs{\delta X}}{\abs{\delta X}}\\
			&= \max_{X \subseteq V(G) : \abs{X} \le \abs{V(G)}/2} \frac{2\abs{X}}{\abs{\delta X}} - 1
	\end{align}
	as required.
\end{proof}
In \cref{app:matchingExpansion}, we show that the vertex expansion and a similarly defined \emph{matching expansion},
\begin{equation}\label{eq:matchingExpansion}
	\matching{G} \coloneqq \min_{X \subseteq V(G): \abs{X} \le \abs{V(G)}/2} \frac{\abs{\matching{\partial X}}}{\abs{X}},
\end{equation}
are asymptotically equivalent, i.e., $c(G) = \bigtheta{\matching{G}}$.

A simple consequence of \cref{cor:qrtVertexLB} is that gate-based quantum routing on the star graph,
$S_n \coloneqq K_{1,n}$ (the complete bipartite graph with parts of size $1$ and $n$, as shown in \cref{fig:star}),
is no faster than classical routing up to a constant factor.
A trivial classical routing strategy has a depth upper bounded by $3n/2$,
whereas we have $c(S_n) \le 2/n$ so that $\qrnumber{S_n} \ge n-1$.
This is a consequence of the small vertex cut in the star graph.

\subsection{Hamiltonian routing}
In this section, we consider a stronger model for quantum routing,
namely using two-qubit Hamiltonians with fast local operations.
The \emph{Hamiltonian routing time},
\begin{equation}
	\hrnumber{G} \coloneqq \max_\pi \hrnumber{G, \pi}
\end{equation}
where $\pi$ is a permutation of qubits and $\hrnumber{G, \pi}$,
is the minimum evolution time, normalized so that a \swap{} gate takes time 1 (discussed below),
of some time-dependent Hamiltonian $H(t)$ that
respects the architecture constraints given by $G$
(i.e., it is 2-local and only has interactions along the edges $E(G)$)
and implements $\pi$.
Note that here we consider minimizing the time as opposed to circuit depth.

A time scale follows from a normalization condition on the two-qubit interaction strength of the Hamiltonian $H(t)$
at all times.
We can write any two-qubit local Hamiltonian in the \emph{canonical form}~\cite{Bennett2002}
\begin{equation}\label{eq:CanonicalForm}
	K \coloneqq \sum_{j \in \set{x,y,z}} \mu_j \sigma_j \otimes \sigma_j
\end{equation}
up to local unitaries,
where $\mu_x \geq \mu_y \geq \abs{\mu_z} \geq 0$, and $\sigma_x, \sigma_y, \sigma_z$ are the Pauli matrices.
We impose the condition that $\norm{K} = \sum_j \abs{\mu_j} \leq 3\pi/4$ for all interactions
in $H(t)$ at all times $t$~\cite{Bapat2020},
where $\norm{\cdot}$ is the spectral norm.
Recall that we consider a model in which local operations can be performed arbitrarily quickly.
The shortest \cnot{}  time in this model is $1/3$
and the shortest \swap{} time is $1$~\cite{Vidal2002}.
Furthermore, any two-qubit unitary takes at most time 1
since any such gate can be decomposed into at most 3 \cnot{} gates and single-qubit rotations~\cite{Vatan2004}.
Therefore, this normalization guarantees
$\hrnumber{G, \pi} \le \qrnumber{G, \pi}$ for any permutation $\pi$,
and in particular, $\hrnumber{G} \le \qrnumber{G}$.
We now show that the Hamiltonian routing time is lower bounded by the diameter of the graph over the maximum degree.

\begin{theorem}\label{thm:hqrtDiam}
	For any simple graph $G$,
	\begin{equation}
		\hrnumber{G} = \bigomega*{\frac{\diam(G)}{\max_v d_v}},
	\end{equation}
	where $d_v$ is the degree of $v \in V(G)$.
\end{theorem}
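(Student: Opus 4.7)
The plan is a Lieb-Robinson signaling argument. Fix vertices $u, v \in V(G)$ with $d(u, v) = \diam(G)$ and take $\pi$ to be the transposition $(u\ v)$; since $\hrnumber{G}$ is a worst-case quantity, it suffices to lower bound $\hrnumber{G, \pi}$. Let $H(t)$ be any admissible Hamiltonian implementing $\pi$ in total time $T$, with generated unitary $U$. Because $U$ realizes the swap of qubits $u$ and $v$, conjugation of the diagonal Pauli $Z_u$ gives $U^\dagger Z_u U = Z_v$ (any diagonal phases from $U$ cancel). Choosing $X_v$ to anticommute with $Z_v$ then yields the signaling identity $\|[U^\dagger Z_u U, X_v]\| = 2$.

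Next I upper bound the same commutator via a Lieb-Robinson estimate. Decompose $H(t) = \sum_{e \in E(G)} h_e(t) + H_\mathrm{loc}(t)$, where $\|h_e(t)\| \le 3\pi/4$ for all $e, t$ and $H_\mathrm{loc}$ is a sum of (possibly unbounded) single-qubit terms. The standard time-dependent Lieb-Robinson bound on a graph gives
\begin{equation*}
\|[U^\dagger Z_u U, X_v]\| \le C \exp\!\left(-\frac{d(u,v) - \nu T}{\xi}\right)
\end{equation*}
for absolute constants $C, \xi > 0$, where the velocity $\nu$ scales with the maximum total two-body interaction strength at any vertex. Since each vertex is incident to at most $\max_w d_w$ edges with $\|h_e\| \le 3\pi/4$, we obtain $\nu = \bigo{\max_w d_w}$. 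Combining with the signaling identity forces $\nu T \ge d(u, v) - \bigo{1}$, giving $T = \bigomega*{\diam(G) / \max_w d_w}$.

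The main technical subtlety is the fast local operations permitted by the model, which allow arbitrarily strong single-qubit terms in $H_\mathrm{loc}$. These do not inflate $\nu$: conjugation by single-qubit unitaries preserves the support of any operator, so the Heisenberg light-cone growth is driven entirely by the two-body terms $h_e$. Formally, one may split the evolution into short intervals, pull single-qubit rotations to the boundary of each interval, and absorb them into the local observables on either side; the Dyson-series derivation of the Lieb-Robinson bound then applies without modification, yielding a velocity that depends only on the per-edge norm cap and on the maximum degree, which is exactly the scaling asserted in the theorem.
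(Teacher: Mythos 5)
Your proposal is correct and follows essentially the same route as the paper: a signaling argument at a diameter-achieving pair $(u,v)$ combined with a Lieb--Robinson bound whose velocity scales with the per-vertex sum of two-body interaction norms, hence with $\max_w d_w$ under the normalization $\norm{H^{(e)}} \le 3\pi/4$ (the paper instantiates this via the Nachtergaele--Sims bound). Your explicit interaction-picture treatment of the arbitrarily strong single-qubit terms is a welcome extra detail that the paper leaves implicit.
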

\begin{proof}
	Pick two vertices $u,v \in V(G)$ at a distance $\diam(G)$.
	In the Heisenberg evolution picture,
	routing must be able to map an $X$ operator on $u$ at time 0,
	$X_u(0)$,
	to $X_u(T)$ supported on $v$ after some time $T$.
	This means a $Z$ operator on $v$ at time 0, $Z_v$, has
	$\norm{[Z_v, X_u(T)]} = 2$.
	\textcite[Eq.~7]{Nachtergaele2006} bound this unequal time commutator
	after time $t$ by
	\begin{equation}
		\norm{[Z_v, X_u(t)]} \le 2 e^{C\abs{t} - \diam(G)},
	\end{equation}
	where
	\begin{equation}
	    C = 2^6 e \max_{w \in V(G)} \sum_{e=(w,w') \in E(G)}  \norm{H^{(e)}} \le 3 \pi 2^4 e \max_w d_w
	\end{equation}
	and $H^{(e)}$ is a two-qubit Hamiltonian term acting only on the ends of the edge $e$.
	Therefore, the time is lower bounded by $t = \bigomega*{\frac{\diam(G)}{\max_w d_w}}$.
\end{proof}

The dependence on the maximum degree is necessary when we consider a multigraph
with two vertices connected by $k$ edges.
We can then speed up any normalized interaction between the two vertices
linearly in the degree, $k$.
In particular, it is possible to implement a \swap{} in time $1/k$.
We use a similar idea to show separations between strengthened gate-based and Hamiltonian routing models in \cref{sec:separation}.
It is an open question whether a Hamiltonian routing protocol on a simple graph
can have a routing time that is upper bounded by $\littleo{\diam(G)}$.

We show that the Hamiltonian routing time can also be lower bounded by an edge cut in the graph $G$.
An edge cut partitions $G$ into two vertex subsets $X \subseteq V(G)$ and $\bar X$.
The edges leaving $X$ form the \emph{edge boundary} of $X$,
\begin{equation}
	\partial X \coloneqq \set{(x, \bar x) \in E \mid x \in X, \bar x \in \delta X} = \partial \bar X,
\end{equation}
and are an edge cut.
We define the \emph{edge expansion} (or edge isoperimetric number or Cheeger constant) as
\begin{equation}\label{eq:edgeExpansion}
	h(G) \coloneqq \min_{X \subseteq V(G) : \abs{X} \le \abs{V(G)}/2} \frac{\abs{\partial X}}{\abs{X}}.
\end{equation}
Intuitively, this corresponds to a lower bound on how many edges
leave any small enough set $X$.
Therefore, the number of edges in the induced subgraph $G[X \cup N(X)]$
grows (or ``expands'') by at least $1 + h(G)$.

In the following, we show a lower bound of $\hrnumber{G} = \bigomega{1/h(G)}$.
Because $|\partial X| \geq |\delta X|$, the edge expansion is always at least as large as the vertex expansion, i.e., $h(G) \ge c(G)$,
so this is a weaker lower bound than we showed above on gate-based quantum routing.
In particular, the star graph has $h(S_n) = \bigtheta{1}$
so our lower bound gives $\hrnumber{S_n} = \bigomega{1/h(S_n)} = \bigomega{1}$.
Since $\qrnumber{S_n} = \bigomega{n}$,
this does not rule out the possibility of a large separation between Hamiltonian and gate-based quantum routing.


To prove the lower bound on Hamiltonian routing,
we use the continuous analogue of STE,
the \emph{small incremental entangling} (SIE) theorem, adapted to our setting.
SIE was conjectured by Kitaev~\cite{Bravyi2007}
and first proven in~\cite{Acoleyen2013}.
\begin{lemma}[Small Incremental Entangling (SIE)]\label{lem:sie}
	Given a finite joint system $X\bar X$,
	any Hamiltonian $H$ with support only on $\delta X \delta \bar X$ and
	any initial pure state $\rho$,
	the entanglement capacity $\capacity{H}{\rho}$ is bounded as
	\begin{equation}
		\capacity{H}{\rho} \coloneqq \frac{d S_X(\rho(t))}{dt} \leq \sieConst \norm{H} \log d,
	\end{equation}
	where $\rho(t) = U(t)\rho U(t)^\dagger$ for $U(t) = e^{-iHt}$,
	$0< \sieConst \le 4$ is a constant, and $d = \min(\abs{\delta X}, \abs{\delta \bar X})$.
\end{lemma}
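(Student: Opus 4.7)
The plan is to adapt the proof of the SIE theorem from Van Acoleyen et al.\ to our graph-based setting. First, I would observe that since $H$ has support only on $\delta X \delta \bar X$, by the invariance of entropy under local unitaries (\cref{lem:entropyProperties}), only the bipartite coupling between $\delta X$ and $\delta \bar X$ across the cut $X | \bar X$ can affect $S_X$. Thus, with $A = \delta X$ and $B = \delta \bar X$, the problem reduces to bounding the entanglement rate of a bipartite Hamiltonian on $AB$ for a pure state on $X\bar X$, where the complements $X \setminus \delta X$ and $\bar X \setminus \delta \bar X$ simply act as purifying ancillas for $A$ and $B$, respectively.

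Next, I would invoke the standard derivative formula for pure states,
\begin{equation}
    \capacity{H}{\rho} = -i\, \trace{[H,\rho]\,(\log \rho_X \otimes \idm_{\bar X})},
\end{equation}
and expand $H$ in the Schmidt basis of $\rho$ across $X|\bar X$. Applying the operator-theoretic inequality at the heart of the SIE proof then yields a bound proportional to $\norm{H}$ times the maximum entropy of the smaller boundary subsystem, contributing the $\log d$ factor in our notation.

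The main obstacle is the dimension-dependent constant $\alpha$: a naive argument that discretizes time into Trotter steps and applies \cref{lem:ste} at each step would give a bound scaling with $\min(\abs{X},\abs{\bar X})$ rather than the boundary size $\min(\abs{\delta X},\abs{\delta \bar X})$. Obtaining a constant that depends only on the smaller boundary is the substantive technical content of Van Acoleyen et al.'s argument (with later refinements tightening the constant to $\alpha \le 4$). Since our only modification relative to the original statement is the identification of the bipartite sites $A$ and $B$ with the vertex boundaries $\delta X$ and $\delta \bar X$, and since the full systems $X$ and $\bar X$ are absorbed into the purification without changing $\capacity{H}{\rho}$, their proof transfers to our setting without further work.
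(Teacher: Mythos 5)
Your proposal matches the paper's treatment: the paper does not reprove SIE but states it as an adaptation of the known bipartite result (citing Van Acoleyen et al.\ for the proof and Audenaert for $\sieConst \le 4$), with exactly the reduction you describe --- identify the interacting sites with $\delta X$ and $\delta \bar X$ and absorb the bulk of $X$ and $\bar X$ into purifying ancillas, which cannot decrease generality since the entanglement capacity is unaffected. Your additional remark that a Trotter-plus-STE discretization would only yield a bound scaling with the full subsystem sizes correctly pinpoints why the dimension-dependence on the boundary is the nontrivial content deferred to the cited works.
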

It is conjectured that $\sieConst =2$~\cite{Bravyi2007} but the best known bound gives $\sieConst=4$~\cite{Audenaert2014}.
No generality is lost by assuming pure states since
we can add an ancillary purification system $C$ to $X$ without loss of generality.
The resulting state on the joint system $X\bar X C$ is pure and constrained by SIE.
Since including $C$ as an ancilla can only increase the entanglement capacity
(we can always ignore it),
we see that the entanglement capacity is also bounded for mixed states on $X\bar X$.

We can derive another expression for $\capacity{H}{\rho}$ by writing
\begin{align}
	\capacity{H}{\rho} &= -\frac{d}{dt}\trace*{ \rho_X(t) \log \rho_X(t) }\\
								 &= -\trace*{\frac{d \rho_X(t)}{dt} \log \rho_X(t)}\\
								 &= i\trace*{\ptrace{\bar X}{[H,\rho]} \log \rho_X(t)},\label{eq:capacityLinear}
\end{align}
where we used the Schr\"{o}dinger equation $i\frac{d\rho}{dt} = [H,\rho]$ (setting $\hbar = 1$).
We see that the entanglement capacity is linear in $H$.

The evolution of a system with interaction graph $G$,
for any $X \subseteq V(G)$,
can be described by a Hamiltonian $H = H_X + H_{\bar X} + H_{\delta X \delta \bar X}$,
where $H_Y$ only has support on the subsystem of vertices $Y \subseteq V(G)$.
Operations local to $X$ or $\bar X$ do not generate entanglement,
so
\begin{equation}
	\capacity{H}{\rho} = \capacity{H_{\delta X \delta \bar X}}{\rho}.\label{eq:localCapacity}
\end{equation}
We can verify this by first explicitly computing
\begin{equation}
	\ptrace*{\bar X}{[H_{\bar X}, \rho]} = 0
\end{equation}
because the partial trace is cyclic on the $\bar X$ subsystem.
Second,
\begin{equation}
	\capacity{H_X}{\rho} = i\trace{[H_X, \rho_X(t)] \log \rho_X(t)} = 0
\end{equation}
because $\log \rho_X(t)$ commutes with $\rho_X(t)$ and the trace is cyclic.
By linearity, \cref{eq:localCapacity} holds, 
and we can restrict ourselves to consider only Hamiltonians of the form $H_{\delta X \delta \bar X}$.

Now we can bound the entanglement capacity of any edge cut in the graph
as specified by the edge boundary of a vertex subset $X$.
A slightly weaker result up to constant factors was proved in~\cite{Bravyi2006} by
using bounds on the entanglement capacity of bipartite product Hamiltonians~\cite{Childs2004} instead of SIE.

\begin{theorem}\label{thm:entanglementCapacityCut}
	Given any $X \subseteq V(G)$
	and any pure state $\rho$,
	the entanglement capacity of a Hamiltonian $H$ with support
	only on the joint subsystem $\delta X \delta \bar X$
	satisfies
	\begin{equation}
		\capacity{H}{\rho} = \frac{d S_X(\rho(t))}{dt} \leq \frac{3\pi\sieConst}{4} \abs{\partial X},
	\end{equation}
	for $\sieConst$ the constant of SIE.
\end{theorem}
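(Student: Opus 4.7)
The plan is to decompose $H$ into its individual edge interactions, apply SIE to each term, and then sum using the linearity of the entanglement capacity. Because the routing Hamiltonian is two-local along the edges of $G$ and $H$ is supported on $\delta X \delta \bar X$, I would write $H = H_{\delta X} + H_{\delta \bar X} + \sum_{e \in \partial X} H^{(e)}$, where $H_{\delta X}$ collects the edge terms internal to $\delta X$, $H_{\delta \bar X}$ collects those internal to $\delta \bar X$, and each $H^{(e)}$ is a two-qubit Hamiltonian supported on the edge $e = (x_e, \bar x_e) \in \partial X$ with $x_e \in X$ and $\bar x_e \in \bar X$. By the normalization in the Hamiltonian routing model, $\norm{H^{(e)}} \le 3\pi/4$ for every $e$.

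Next, I would invoke the linearity of $\Gamma$ in $H$ established from \cref{eq:capacityLinear}, together with the observation \cref{eq:localCapacity} that Hamiltonian terms supported entirely within $X$ or within $\bar X$ do not contribute to the entanglement capacity across the bipartition. The terms $H_{\delta X}$ and $H_{\delta \bar X}$ are local to $\bar X$ and $X$ respectively, so they drop out, leaving $\capacity{H}{\rho} = \sum_{e \in \partial X} \capacity{H^{(e)}}{\rho}$.

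I would then apply SIE (\cref{lem:sie}) to each $H^{(e)}$ with respect to the same bipartition $X \mid \bar X$. Each $H^{(e)}$ is supported on a single qubit in $X$ and a single qubit in $\bar X$, so the SIE logarithmic factor evaluates to $\log 2 = 1$ (the base-2 logarithm of the dimension of the smaller boundary subsystem, a single qubit). This yields $\capacity{H^{(e)}}{\rho} \le \sieConst \norm{H^{(e)}} \le \sieConst \cdot (3\pi/4)$. Summing over the $\abs{\partial X}$ cross-cut edges gives $\capacity{H}{\rho} \le \frac{3\pi \sieConst}{4} \abs{\partial X}$, as claimed.

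The main subtlety lies in justifying the per-edge application of SIE: the statement in \cref{lem:sie} is phrased with the full boundaries $\delta X$ and $\delta \bar X$, but the same bound applies when the interaction is further localized to any sub-boundary, here consisting of just one qubit on each side of the partition. A direct single application of SIE to the whole $H$ would only produce $\capacity{H}{\rho} \le \sieConst \norm{H} \min(\abs{\delta X}, \abs{\delta \bar X})$, which could be vastly looser than the edge-count bound sought; the per-edge decomposition combined with linearity of $\Gamma$ is the essential step for obtaining the tight $\abs{\partial X}$ scaling.
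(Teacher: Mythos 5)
Your proposal is correct and follows essentially the same route as the paper: decompose $H$ into two-qubit terms on the cut edges, use linearity of $\Gamma$ (with terms internal to $\delta X$ or $\delta\bar X$ dropping out by \cref{eq:localCapacity}), apply SIE per edge with $\log 2 = 1$, and sum using the normalization $\norm{H^{(e)}}\le 3\pi/4$. Your explicit handling of the intra-boundary terms and of the per-edge (sub-boundary) application of SIE makes precise what the paper leaves implicit, but it is the same argument.
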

\begin{proof}
We decompose the Hamiltonian into a sum of local terms $H = \sum_{e \in \partial X} H^{(e)}$
where each $H^{(e)}$ is a two-qubit Hamiltonian acting only on the ends of the edge $e$.
By linearity,
\begin{equation}
	\capacity{H}{\rho} = \capacity*{\sum_{e \in \partial X} H^{(e)}}{\rho}
		= \sum_{e \in \partial X} \capacity*{H^{(e)}}{\rho}.
\end{equation}
We bound each term by SIE (\cref{lem:sie}):
\begin{equation}
	\sum_{e \in \partial X} \capacity*{H^{(e)}}{\rho} \le \sieConst \sum_{e\in \partial X} \norm*{H^{(e)}}.
\end{equation}
By unitary similarity (which the norm is invariant under),
we can rewrite each term in canonical form~\cref{eq:CanonicalForm}
and apply our normalization condition such that
$\sum_{e\in \partial X} \norm{H^{(e)}} \le (3\pi/4)\abs{\partial X}$.
\end{proof}

Using this relation of entanglement capacity to edge cuts in the graph,
we show a lower bound on the time to perform state preparation 
in the Hamiltonian model dependent on the edge cut.

\begin{corollary}\label{cor:hamiltonianStatePreparation}
	Given an initial pure state $\rho_0$ and target pure state $\rho$
	on a bipartite system $X \bar X$,
	define the change in entanglement entropy $\Delta S_X \coloneqq \abs{S_X(\rho) - S_X(\rho_0)}$.
	Then any Hamiltonian unitary 
	evolution from $\rho_0$ to $\rho$
	restricted by interaction graph $G$ must have evolution time
	\begin{equation}
		t \ge \frac{4}{3\pi\sieConst}\frac{\Delta S_X}{\abs{\partial X}}.
	\end{equation}
\end{corollary}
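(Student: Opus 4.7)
The plan is to integrate the instantaneous entanglement capacity bound from \cref{thm:entanglementCapacityCut} over the evolution time. Since the Hamiltonian routing model allows a time-dependent Hamiltonian $H(t)$ respecting the graph constraints, at each instant $s \in [0,t]$ we may decompose $H(s) = H_X(s) + H_{\bar X}(s) + H_{\delta X \delta \bar X}(s)$. By \cref{eq:localCapacity}, only the boundary term contributes to the rate of change of entropy, so $\frac{dS_X(\rho(s))}{ds} = \capacity{H_{\delta X \delta \bar X}(s)}{\rho(s)}$.

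Next I would apply \cref{thm:entanglementCapacityCut} to the boundary Hamiltonian $H_{\delta X \delta \bar X}(s)$ at each time $s$, giving
\begin{equation}
    \frac{dS_X(\rho(s))}{ds} \le \frac{3\pi\sieConst}{4}\abs{\partial X}.
\end{equation}
To handle entropy \emph{decrease} as well as increase, I would observe that running the evolution with $-H(s)$ gives the same normalization and hence the same capacity bound, so the absolute value of the derivative is bounded by the same constant. (Equivalently, the capacity bound applies to $\pm H$, controlling $\abs{dS_X/ds}$.)

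Finally, by the fundamental theorem of calculus and the triangle inequality for integrals,
\begin{equation}
    \Delta S_X = \abs*{\int_0^t \frac{dS_X(\rho(s))}{ds}\,ds} \le \int_0^t \abs*{\frac{dS_X(\rho(s))}{ds}}\,ds \le \frac{3\pi\sieConst}{4}\abs{\partial X}\,t,
\end{equation}
which rearranges to the claimed bound $t \ge \frac{4}{3\pi\sieConst}\frac{\Delta S_X}{\abs{\partial X}}$.

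The main subtlety, rather than a genuine obstacle, is the symmetry argument for entropy decrease: \cref{thm:entanglementCapacityCut} is stated as an upper bound on the signed derivative, not its absolute value, so one must verify that $-H(s)$ still satisfies the norm normalization used in the theorem (it does, since $\norm{-H^{(e)}} = \norm{H^{(e)}}$), ensuring $-\capacity{H}{\rho} \le \tfrac{3\pi\sieConst}{4}\abs{\partial X}$ as well. Everything else is a direct integration of the instantaneous bound.
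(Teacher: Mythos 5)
Your proposal is correct and matches the paper's approach: the paper's own proof is the single sentence ``the claim follows directly from \cref{thm:entanglementCapacityCut},'' and your argument simply makes explicit the integration of the instantaneous capacity bound that this sentence leaves implicit. Your care in noting that the bound also controls entropy \emph{decrease} (via linearity of $\capacity{H}{\rho}$ in $H$ and the invariance of the normalization under $H \mapsto -H$) is a detail the paper glosses over but which is needed since $\Delta S_X$ is defined with an absolute value.
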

\begin{proof}
	The claim follows directly from \cref{thm:entanglementCapacityCut}.
\end{proof}

A lower bound on Hamiltonian routing follows
since routing a particular state is a special case of state preparation.
\begin{theorem}\label{thm:hqrtEdgeBound}
	For any simple graph $G$,
	\begin{equation}
	\hrnumber{G} \ge \frac{8}{3\pi\sieConst}\frac{1}{h(G)}.
	\end{equation}
\end{theorem}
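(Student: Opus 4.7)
The plan is to imitate the proof of \cref{thm:qrtExpansionBound}, but replace the gate-based state preparation bound with its Hamiltonian analogue, \cref{cor:hamiltonianStatePreparation}. Since the corollary is already in hand, the task reduces to designing an entanglement-distribution task whose completion is forced by any permutation that swaps a worst-case subset across an edge cut.

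Concretely, I would first pick a vertex subset $X \subseteq V(G)$ with $\abs{X} \le \abs{V(G)}/2$ achieving the minimum in \cref{eq:edgeExpansion}, so that $\abs{\partial X}/\abs{X} = h(G)$. Next, as in the proof of \cref{thm:qrtExpansionBound} (and the setup in \cref{fig:routingEntanglement}), I would augment the system with ancilla registers $x$ and $x'$ attached to $X$ and $\bar X$ respectively, one ancilla per vertex, disconnected from the rest of $G$. Initializing each vertex with its paired ancilla in a Bell state produces a pure $\rho_0$ with $S_{Xx}(\rho_0) = 0$ (the reduced state on $Xx$ is pure).

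Then I would choose a permutation $\pi$ that routes every qubit of $X$ into $\bar X$ and some $\abs{X}$ qubits of $\bar X$ into $X$. Since the ancillas are fixed while their partners cross the cut, the resulting target state $\rho$ has $S_{Xx}(\rho) = 2\abs{X}$. Because the ancillas contribute no edges across the $Xx \mid \bar X x'$ partition, the edge boundary is still $\partial X$, and \cref{cor:hamiltonianStatePreparation} applied to the bipartition $(Xx, \bar X x')$ yields
\begin{equation}
\hrnumber{G,\pi} \;\ge\; \frac{4}{3\pi\sieConst}\cdot\frac{\Delta S_{Xx}}{\abs{\partial X}} \;=\; \frac{4}{3\pi\sieConst}\cdot\frac{2\abs{X}}{\abs{\partial X}} \;=\; \frac{8}{3\pi\sieConst}\cdot\frac{\abs{X}}{\abs{\partial X}}.
\end{equation}
Since $\hrnumber{G} \ge \hrnumber{G,\pi}$ and $X$ was chosen to minimize $\abs{\partial X}/\abs{X}$, the bound becomes $\hrnumber{G} \ge (8/(3\pi\sieConst))\cdot(1/h(G))$, as required.

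No step is particularly delicate: the only thing to verify is that adjoining disconnected ancillas does not alter $\abs{\partial X}$ and does not violate the norm-bound normalization on two-qubit interactions (there are simply no interactions touching the ancillas during the routing protocol, so the bound on $H(t)$ is inherited from the original graph). The main conceptual point, as in the gate-based case, is that routing a permutation forces the same entropy change across the cut as the Bell-pair preparation task, so any lower bound for the latter transfers to $\hrnumber{G}$.
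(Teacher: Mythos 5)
Your proposal is correct and follows essentially the same route as the paper's own proof: Bell pairs with disconnected ancillas, a permutation that exchanges $X$ with part of $\bar X$ forcing $\Delta S_{Xx} = 2\abs{X}$, and an application of \cref{cor:hamiltonianStatePreparation} across $\partial X$ (the paper maximizes over $X$ at the end rather than fixing the minimizer up front, which is an immaterial difference).
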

\begin{proof}
	We prepare the same initial state as in \cref{thm:qrtExpansionBound},
	where we have one half of a Bell pair at each vertex $v \in V(G)$
	that is entangled with an ancilla.
	To show a lower bound, we pick some $X \subseteq V(G)$ with $\abs{X} \le \abs{V(G)}/2$
	and an associated ancilla space $x$,
	and pick a permutation $\pi$ that routes all vertices $v \in X$ to $\bar X$ arbitrarily
	and routes $\abs{X}$ vertices $u \in \bar X$ to $X$ arbitrarily.
	Let the resulting state be our target state $\rho$.
	This gives $\Delta S_{Xx} = S_{Xx}(\rho) = 2\abs{X}$.
	\cref{cor:hamiltonianStatePreparation} implies that the time to implement this state preparation and routing task is lower bounded as
	\begin{equation}\label{eq:hrnumberLBX}
		\hrnumber{G, \pi} \ge \frac{4}{3\pi\sieConst} \frac{\Delta S_{Xx}}{\abs{\partial X}} = \frac{8}{3\pi\sieConst} \frac{\abs{X}}{\abs{\partial X}}.
	\end{equation}
	We now maximize over all $X$ to lower bound the Hamiltonian routing time
	\begin{equation}
	\hrnumber{G} = \max_\pi \hrnumber{G, \pi} \ge \max_{X: \abs{X} \le \abs{V(G)}/2} \frac{8}{3\pi\sieConst} \frac{\abs{X}}{\abs{\partial X}} = \frac{8}{3\pi\sieConst} \frac{1}{h(G)}
	\end{equation}
	as claimed.
\end{proof}

\begin{figure}
	\centering
	\begin{subfigure}{0.4\textwidth}
		\centering
		\includeinkscape[height=85pt]{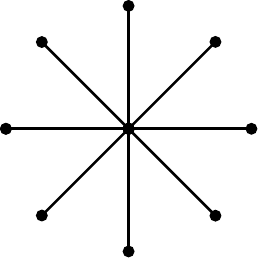}
		\caption{Star graph $S_n$, for $n=8$}\label{fig:star}
	\end{subfigure}%
	\hfill%
	\begin{subfigure}{0.6\textwidth}
		\centering
		\includeinkscape[height=85pt]{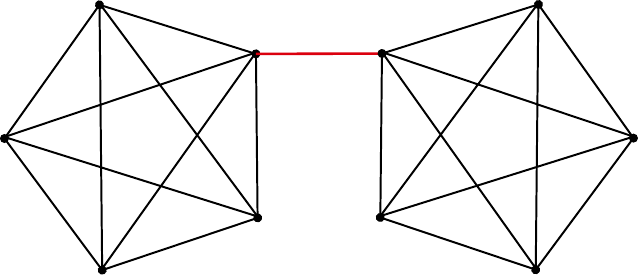}
		\caption{Barbell graph $C_{2n}$, for $n=5$}\label{fig:ebarbell}
	\end{subfigure}
	\caption{%
		Gate-based routing models cannot be separated up to a constant additive factor from classical routing on the star graph,
		but our lower bound on Hamiltonian routing is trivial in this case because of its $\bigomega{1}$ edge expansion.
		Classical and Hamiltonian routing cannot be separated on the barbell graph
		because of its $\bigo{1/n}$ edge expansion.
	}
\end{figure}

One simple example where this rules out a separation between classical and Hamiltonian routing is the \emph{barbell graph}, $C_{2n}$~\cite{Ghosh2008}.
The barbell graph consists of two complete graphs, $K_n$, connected by a single edge at some vertex in each complete graph,
as shown in \cref{fig:ebarbell}.
Since $h(C_{2n}) \le 2/n$,
\cref{thm:hqrtEdgeBound} implies the Hamiltonian routing time on this graph is lower bounded as
\begin{equation}
	\hrnumber{C_{2n}}\ge \frac{4n}{3\pi\sieConst}.
\end{equation}
By routing on its spanning tree, $\rnumber{C_{2n}} = \bigo{n}$, so classical routing is tight up to a constant factor.

An entanglement capacity bound of $\bigo{\abs{\partial X}}$, as given by \cref{thm:entanglementCapacityCut},
matches previous results on entanglement area laws for dynamics~\cite[theorem~1]{Gong2017} on lattices of constant dimension.
For graphs of superconstant degree,
the distinction between bounds on the entanglement capacity proportional to edge cuts (for Hamiltonian routing)
and vertex cuts (for gate-based quantum routing) are significant.
In general, $\abs{\partial X} \leq \abs{\delta X} \max_v d_v $.
It remains an open question whether Hamiltonian routing can be separated by a superconstant factor from gate-based quantum routing,
and in particular, if the Hamiltonian routing time can also be lower bounded by the vertex expansion $\bigomega{1/c(G)}$.
However, we show in \cref{sec:separation} that a stronger model of Hamiltonian routing can be separated from gate-based quantum routing
and its routing time cannot be lower bounded by $\bigomega{1/c(G)}$.

Another case that has been well studied is the path graph, $P_n$.
Here, the \emph{odd-even sort}~\cite{Knuth1998} gives a simple classical routing algorithm that upper bounds
the circuit depth by $n$.
A simple bound on the vertex expansion of the path graph is $c(P_n) \le 2/n$,
so $\qrnumber{P_n} \ge n - 1$, matching the diameter lower bound (\cref{thm:diamLB}) up to an additive constant.
Thus, a constant-factor improvement over classical routing on the path is only possible in the Hamiltonian routing model.
In that case,
we have $h(P_n) \le 2/n$, giving $\hrnumber{P_n} \ge 4n/(3\pi \sieConst)$.
This is slightly weaker (even if $\sieConst = 2$) than a specialized bound of $4n/(3\pi\alpha_0) \approx 0.222n$, for $\alpha_0 \approx 1.912$,
based on the entanglement capacity~\cite{Bapat2020}.
Indeed, \textcite{ReversalSort} show that $\hrnumber{P_n} \le (1-\varepsilon)n + \bigo{\log^2 n}$
for a constant $\varepsilon \approx 0.034$,
so, for large enough $n$, $\hrnumber{P_n} < \qrnumber{P_n}$ with a constant-factor speedup.

\section{Comparison with classical routing}\label{sec:classical} 
Fast classical routing algorithms are already known for some graph families~\cite{Alon1994,Childs2019}.
An example is the family of grid graphs,
which are Cartesian products of path graphs $P_{L_1} \square P_{L_2}$ with dimensions $L_1,L_2 \in \mathbb N$,
where we know $\rnumber{P_{L_1} \square P_{L_2}} \le 2L_1 + L_2$.
We can exclude a superconstant quantum advantage simply by the diameter lower bound (\cref{thm:hqrtDiam}).

In this section, we compare our quantum routing results with general bounds on classical routing.
In particular, this gives more conditions for a superpolynomial separation.
Our results and proofs are generalizations of results in~\cite{Alon1994}
from regular graphs to irregular graphs.

In classical routing, we route a permutation $\pi$ in multiple time steps.
We first assign to each vertex $v$ a \emph{token} labeled $\pi(v)$.
Then, in each time step, we perform \swap{} gates on neighboring vertices
to exchange their tokens
with the constraint that each vertex participates in at most one \swap{}.
Routing terminates when all tokens have been moved to their destination vertices.
The difficulty of classical routing on $G$ is characterized by the \emph{routing number}~\cite{Alon1994}
\begin{equation}
	\rnumber{G} \coloneqq \max_{\pi} \rnumber{G, \pi},
	\label{eq:rnumber}
\end{equation}
where $\rnumber{G, \pi}$ is defined as the minimal number of time steps needed to implement the permutation $\pi$.
Since gate-based routing generalizes \swap{}-based routing, $\qrnumber{G, \pi} \le \rnumber{G, \pi}$ for any permutation $\pi$, and in particular, $\qrnumber{G} \le \rnumber{G}$.


\subsection{General classical routing}
We now describe a classical routing algorithm that performs \swap{}s along a set of walks (connecting each token with its destination) that are close to random.
The number of \swap{}s that act on the same vertices at the same time is bounded from above by the inverse spectral gap of the (normalized) graph Laplacian,
leading to high parallelism in graphs with large spectral gap.

The set of vertices is isomorphic to an integer labeling, $V(G) \cong [n]$, so we identify each $v \in V(G)$ with a unique integer index.
Let the adjacency matrix $A$ have entries
\begin{equation}
	A_{uv} = \begin{cases*}
		1 & if $(v,u) \in E(G)$,\\
		0 & otherwise,
	\end{cases*}
\end{equation}
for $v,u \in V(G)$,
and let the diagonal matrix $T$ have entries $T_{vv} = d_v$ and 0 otherwise,
for $d_v = (A\vec 1)_v$ the degree of $v$ and $\vec 1$ the all-ones vector.
Then the (normalized) graph Laplacian is $\mathcal L \coloneqq \idm - T^{-1/2}AT^{-1/2}$.
The Laplacian is symmetric and positive semidefinite~\cite{ChungRevised}
and has a 0 eigenvalue for the eigenvector $T^{1/2}\vec 1$.
Let the \emph{spectral gap}, $\lambda(G)$, be the smallest non-zero eigenvalue of $\mathcal L$.

In this section,
we assume $n \ge 2$ and show a general bound on the routing number
without attempting to minimize the constants.
Let $v_1v_2\dots$ denote a walk on the vertices $v_i\in V(G)$ that passes through $v_i$ at time step $i$.
We consider memoryless random walks with transition probabilities
denoted by $P_{vu} = \probability{x_{i+1} = v \mid x_i=u}$.
These probabilities form the transition matrix $P$ of the random walk on $G$.
We choose the \emph{lazy random walk} $P = (\idm + AT^{-1})/2$, i.e.,
\begin{equation}
	P_{vu} = \begin{cases*}
		1/2 & if u = v,\\
		1/(2d_u) & if $(u,v) \in E(G)$,\\
		0 & otherwise.
	\end{cases*}
\end{equation}
In the following, we will refer to lazy random walks simply as random walks.
Note that we default to right multiplication with the transition matrix
so our probability distributions can be interpreted as column vectors.
Therefore, the probability that a random walk starting at $u$ is at $v$ after $i \in \mathbb N$ steps is given by
\begin{equation}
	\probability{x_i = v \mid x_0 = u} =\vec e(v)\tran P^i \vec e(u),\label{eq:pathIntersect}
\end{equation}
where $\vec e(v)$ is the column vector with a 1 in position $v$ and 0 otherwise.
The stationary distribution of the walk $P$ is $\vec \pi \coloneqq T\vec 1$ since $PT\vec 1 = T\vec 1$.

\begin{figure}
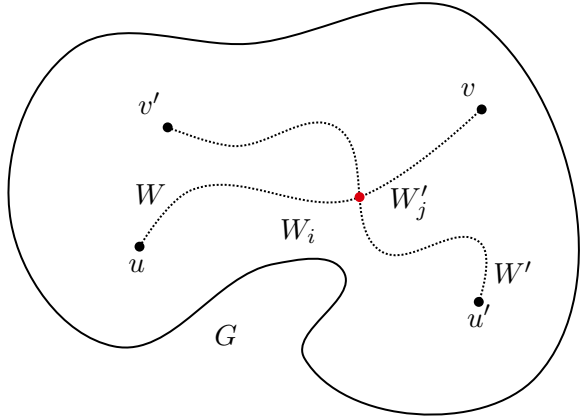

	\centering
	\includeinkscape{random-walks}
	\caption{%
		Shown are walks $W$ from $u$ to $v$
		and $W'$ from $u'$ to $v'$.
		Walks may intersect at a vertex (red)
		such that the $i$th location of $W$ and
		the $j$th location of $W'$ are the same,
		i.e., $W_i = W'_j$.
		We say that $W$ and $W'$ \emph{interfere} if there exist $i, j$ with $\abs{i-j} \le 1$ such that $W_i = W'_j$.
		We show that, with high probability,
		there exists a set of walks for a permutation $\sigma$ of order two on $V(G)$
		that connect $v$ to $\sigma(v)$ such that the number of interfering walks can be bounded.
		\textsc{Swap}s along sets of walks that do not interfere with each other
		significantly parallelize the routing process.
	}\label{fig:randomWalks}
\end{figure}

We first define a useful notion of interference between walks. 
\begin{definition}[Interfering walks, \cref{fig:randomWalks}]
Two walks $W$ and $W'$ are said to \emph{interfere} if $W_i = W'_j$ for some $i, j\in\mathbb{N}$ with $|i-j| \le 1$.
\end{definition}
The condition that $\abs{i-j} > 1$ ensures that tokens can be swapped along $W$ in parallel with token swaps along $W'$,
namely a token being swapped along $W$ has \swap{}s that overlap at a location for two time steps.

Now, let us perform a simple random walk of a given length starting at each vertex $u$ and call this walk $W(u)$.
We show that, with high probability and for sufficiently long walks, the number of walks that interfere with a given walk can be bounded from above.
This is a generalization of \cite[lemma~2]{Alon1994} to irregular graphs,
where we explicitly analyze the dependence on the degree.
In particular, the entries of $T/\min_v d_v$ are bounded from above by the \emph{degree ratio}
\begin{equation}
	d_* \coloneqq \frac{\max_v d_v}{\min_v d_v}.
\end{equation}

\begin{lemma}\label{lem:NonintersectingWalk}
	Let $G$ be a connected simple graph on $n$ vertices and suppose $l\ge \ln(n) / \lambda(G)$.
	For every $v\in V(G)$,
	let $W(v)$ denote a random walk of length $l$ starting at vertex $v$.
	Let $I(v)$ denote the total number of other walks $W(u)$
	that interfere with $W(v)$.
	Then with probability at most $n^{-20}$ there is a vertex $v\in V(G)$ with $I(v) > 30ld_*$.
\end{lemma}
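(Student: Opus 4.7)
The plan is to fix a vertex $v$, condition on its random walk $W(v)$, and exploit the mutual independence of the remaining walks $\{W(u)\}_{u \ne v}$ via a Chernoff-type concentration inequality, then conclude with a union bound over $v$. This reduces the problem to (i) bounding the conditional expectation of $I(v)$ given $W(v)$, and (ii) controlling the upper tail of a sum of independent Bernoulli random variables.

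First, I would fix $v \in V(G)$ and condition on the realization $W(v) = (v_0, v_1, \ldots, v_l)$. For each $u \ne v$, let $X_u \in \{0,1\}$ indicate whether $W(u)$ interferes with $W(v)$, so that $I(v) = \sum_{u \ne v} X_u$. Conditional on $W(v)$, the $X_u$ are mutually independent because the walks are drawn independently. A union bound over time pairs $(i, j)$ with $|i-j| \le 1$ gives
\begin{equation*}
	\expectation{X_u \mid W(v)} \le \sum_{i=0}^{l} \sum_{\substack{0 \le j \le l \\ |i - j| \le 1}} P^j(u, v_i).
\end{equation*}
Summing over $u \ne v$ and invoking reversibility of the lazy random walk, $d_u P^j(u, v_i) = d_{v_i} P^j(v_i, u)$, bounds the inner sum by
\begin{equation*}
	\sum_{u \ne v} P^j(u, v_i) \le d_{v_i} \sum_u \frac{P^j(v_i, u)}{d_u} \le \frac{d_{v_i}}{\min_w d_w} \le d_*.
\end{equation*}
Hence $\expectation{I(v) \mid W(v)} \le 3(l+1) d_* \le 6 l d_*$ for $l \ge 1$.

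Next, I would apply a multiplicative Chernoff bound to the sum $I(v)$ of independent Bernoulli variables. Since the threshold $30 l d_*$ exceeds the mean bound by at least a factor of $5$, Chernoff yields a tail estimate of the form $\probability{I(v) > 30 l d_* \mid W(v)} \le e^{-c l d_*}$ for an absolute constant $c > 0$. The hypothesis $l \ge \ln(n)/\lambda(G)$, combined with the standard bound $\lambda(G) \le 2$ on the normalized Laplacian spectrum and $d_* \ge 1$, ensures $l d_* \ge (\ln n)/2$, which is enough to make the tail bound at most $n^{-21}$ once $c$ is taken large enough. A union bound over the $n$ vertices then yields $\probability{\exists v : I(v) > 30 l d_*} \le n^{-20}$, as claimed.

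The main technical subtlety is calibrating the Chernoff constants: the multiplier $30$ in the threshold must be large enough, relative to the conditional-mean bound, for the Chernoff exponent to clear $21 \ln n$ in the worst case (namely when $\lambda(G)$ is close to $2$), so that the per-vertex tail survives the union bound over $n$ vertices. The reversibility step is the key place where the generalization from regular graphs (where $d_* = 1$ and one recovers the original bound of \cite{Alon1994}) to irregular graphs enters, and is the reason the conclusion scales with the degree ratio $d_*$ rather than with the maximum degree.
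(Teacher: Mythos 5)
Your overall strategy is the same as the paper's: indicator variables for pairwise interference, a union bound over time indices $(i,j)$ with $\abs{i-j}\le 1$, a bound on the expected number of interfering walks via the stationary distribution (your detailed-balance identity $d_u P^j(u,v_i) = d_{v_i}P^j(v_i,u)$ is exactly equivalent to the paper's substitution of $\vec 1$ by the entrywise larger $\vec\pi/\min_w d_w$, and both yield the per-term bound $d_*$), followed by a multiplicative Chernoff bound and a union bound over vertices. Your explicit conditioning on $W(v)$ to justify the independence of the summands is actually handled more carefully than in the paper, which invokes independence of the $W(u)$ directly even though the unconditional indicators all depend on the common walk $W(v)$.

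There is, however, a genuine quantitative gap in the last step. Your conditional-mean bound is $3(l+1)d_* \le 6ld_*$, so the threshold $30ld_*$ exceeds your mean bound only by a factor of $5$, i.e., $\delta = 4$ in the Chernoff bound $\probability{X > (1+\delta)\mu} \le (e^\delta/(1+\delta)^{1+\delta})^\mu$. This gives an exponent $6\ln(e^4/5^5) \approx -24.3$ per unit of $ld_*$, and since the hypothesis only guarantees $ld_* \ge \ln(n)/2$ in the worst case, the per-vertex tail is about $n^{-12}$, which after the union bound yields roughly $n^{-11}$, not the claimed $n^{-20}$. The constant $c$ in your tail estimate is not a free parameter that can be ``taken large enough'': it is pinned down by the ratio of the threshold to the mean bound, and a ratio of $5$ is simply insufficient. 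The paper avoids this by counting the time indices so that the mean bound is $3ld_*$, making the threshold $10$ times the mean ($\delta = 9$), which gives exponent $3\ln(e^9/10^{10}) < -42$ per unit of $ld_*$ and hence $n^{-21}$ per vertex. The repair is straightforward: the number of relevant index pairs is at most about $3l$ rather than $3(l+1)\cdot 2$, so you should tighten $3(l+1)d_*$ back to (essentially) $3ld_*$ before applying Chernoff, rather than absorbing the slack into $6ld_*$.
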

\begin{proof}
	We wish to bound $I(v)$ for any $v\in V(G)$.
	We introduce an indicator random variable
	depending on the random walks $W(v)$:
	\begin{equation}
		X_{uv} \coloneqq \begin{cases*}
			1 & if $W(u)$ and $W(v)$ interfere,\\
			0 & otherwise.
		\end{cases*}
	\end{equation}
	We include the random walk starting at $v$ in the total which only increases the expectation of $I(v)$.
	By summing over $u \in V(G)$ and including $v$,
	the expected value of $I(v)$ over random walks is bounded by
	\begin{align}
		\expectation*{I(v)}
			&\le \expectation*{\sum_u X_{uv}} = \sum_{u \ne v} \probability*{X_{uv} = 1} \\
			&= \sum_u \probability*{\bigvee_{i\in [l]} \bigvee_{j:\abs{i-j} \le 1} {W(v)}_i = {W(u)}_j} \\
			&\le \sum_u \sum_{i\in [l]} \sum_{j: \abs{i-j} \le 1} \probability*{{W(v)}_i = {W(u)}_j}.\label{eq:pathIntersect3}
	\end{align}
	Using \cref{eq:pathIntersect}, we have
	\begin{align}
		\sum_u \sum_{\substack{i \in [l] \\ j:\abs{i-j} \le 1}}\probability*{{W(v)}_i = {W(u)}_j}
			&= \sum_u \sum_{i\in [l]} {\vec e(W(v)_i)}\tran \sum_{j : \abs{i-j} \le 1} P^j \vec e(u) \\
			&= \sum_{i\in [l]} {\vec{e}(W(v)_i)}\tran \sum_{j: \abs{i-j} \le 1} P^j \vec 1.
	\end{align}
	The transposed vector $e(W(v)_i)\tran P^j$ on the right-hand side has non-negative entries for all $i, j$,
	therefore an upper bound follows from substituting $\vec 1$ by the entrywise larger vector $\vec \pi/\min_v d_v$ as
	\begin{equation}
			\sum_{i\in [l]} {\vec{e}(W(v)_i)}\tran \sum_{j: \abs{i-j} \le 1} P^j \vec 1
			\le  \sum_{i\in [l]} \vec e(W(v)_i)\tran \sum_{j:\abs{i-j} \le 1} P^j \frac{\vec \pi}{\min_v d_v}.\label{eq:irregularDegrees}\\
	\end{equation}
	The distribution $\vec \pi$ is stationary under the walk $P$, so
	\begin{align}
			\sum_{i\in [l]} \vec e(W(v)_i)\tran \sum_{j:\abs{i-j} \le 1} P^j \frac{\vec \pi}{\min_v d_v}
			&\le 3\sum_{i\in [l]}\vec e(W(v)_i)\tran \frac{\vec \pi}{\min_v d_v} \\
			&\le  3ld_*,
	\end{align}
	since $T_{uu} \le \max_v d_v$ for any $u \in V(G)$.
	Therefore $\expectation{I(v)} \le 3ld^*$.

	We now bound the tail probability of $I(v)$.
	We use the multiplicative Chernoff bound,
	which states that for a random variable $Y = \sum_i Y_i$ with mean $\mu$
	where the $Y_i$ are independent random variables,
	$\probability*{X>(1+\delta)\mu}\le \left(e^\delta/(1+\delta)^{1+\delta}\right)^{\mu}$ for any $\delta>0$.
	We see that the Chernoff bound applies to $I(v) \le \sum_u X_{uv}$ since the walks $W(u)$ are independent
	(note that they may depend on $v$).
	Applying the Chernoff bound with $\delta = 9$, we have
	\begin{equation}
		\probability*{I(v) > 30ld_*} \le \probability*{I(v) > 10\expectation*{I(v)}}
          < e^{3\ln(e^9/10^{10})ld_*}.
    \end{equation}
    Given that $3\ln(e^9/10^{10}) < -42$
    and using the lemma's assumption that $l \ge \ln(n)/\lambda(G)$,
    we obtain
    \begin{equation}
        e^{3\ln(e^9/10^{10})ld_*} < e^{-42ld_*} \le n^{-42d_*/\lambda(G)}.
    \end{equation}
	We lower bound $d_* \ge 1$
	and $1/\lambda(G) \ge 1-1/n \ge 1/2$~\cite[lemma~1.7]{ChungRevised} (we assumed $n \ge 2$)
	to obtain
  	\begin{align}
        n^{-42d_*/\lambda(G)} &\le n^{-21}.
	\end{align}
	Since there are $n$ vertices, the probability that there exists a vertex $v$ with $I(v) > 30ld_*$
	is at most $n^{-20}$.
	The lemma follows from the contrapositive.
\end{proof}

We can ``glue'' together pairs of random walks starting at
the $k/2$ pairs of vertices that are mapped to each other in
a permutation of order two
to obtain a set of $k$ glued walks.
We show that, with high probability, no glued walk in this set will have many interfering other glued walks.
This is an adaptation of \cite[lemma~3]{Alon1994} to irregular graph.

\begin{lemma}\label{lem:NonintersectingWalks}
	Let $G$ be a simple connected graph on $n$ vertices, let $\sigma$ be a permutation of order two on $V(G)$
	with $k$ vertices $v \in V(G)$ such that $\sigma(v) \ne v$,
	and let $l=\frac{20}{\lambda(G)}\ln n$.
	Then there is a set of $k$ walks $W(v)$ of length $2l$,
	where both $W(v)$ and $W(\sigma(v))$ have endpoints
	$v$ and $\sigma(v)$ and traverse the same edges (in opposite directions),
	satisfying the following:
	if $I(v)$ denotes the total number of other walks $W(u)$ that interfere with either $W(v)$ or $W(\sigma(v))$,
    then $I(v) < 120ld_*$ for all $v$ with probability at least $1 - \bigo{n^{-7}}$.
\end{lemma}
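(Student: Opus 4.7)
The plan is to construct each $W(v)$ by gluing two length-$l$ lazy random walks from the setup of \cref{lem:NonintersectingWalk}, and then to adapt the expectation-and-Chernoff argument of that lemma to glued walks of length $2l$.

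For the construction, sample an independent length-$l$ lazy random walk $W_v^+$ starting at each non-fixed vertex $v$ of $\sigma$. For each pair $(v,\sigma(v))$, condition on $W_v^+$ and $W_{\sigma(v)}^+$ ending at a common vertex $w$; equivalently, first sample $w$ from the stationary distribution $\vec\pi/\vol(G)$ and then sample $W_v^+, W_{\sigma(v)}^+$ as length-$l$ bridges from $v$ and $\sigma(v)$ to $w$. Define $W(v)\coloneqq W_v^+\cdot\overline{W_{\sigma(v)}^+}$ as a walk of length $2l$ from $v$ to $\sigma(v)$ (writing $\overline{W}$ for the time-reversal of $W$), and set $W(\sigma(v))\coloneqq\overline{W(v)}$. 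These traverse the same edges in opposite directions, as required.

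For the interference bound, first note that because $l=20\ln(n)/\lambda(G)$ comfortably exceeds the mixing time of the lazy random walk, each unconditioned length-$l$ walk has its endpoint exponentially close to $\vec\pi/\vol(G)$, so the bridge distribution differs from the unconditioned distribution by at most a $\poly{n}$ multiplicative factor per walk. Extending the expectation calculation of \cref{lem:NonintersectingWalk} to glued walks of length $2l$---summing the interference indicators over $(i,j)\in[2l]\times[2l]$ with $\abs{i-j}\le 1$---gives $\expectation{I(v)}\le 6ld_*$ for the unconditioned process, where the extra factor of $2$ relative to the previous lemma comes from the doubled walk length. The generous constant $20$ in $l$ supplies enough Chernoff slack to absorb the polynomial inflation from the bridge conditioning, so $I(v)<60ld_*$ holds per glued walk with failure probability at most $n^{-8}$ per $v$. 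Counting interferences against both $W(v)$ and $W(\sigma(v))$---which share an edge trace but carry a time-reversed parametrization, so they may have disjoint interference events---doubles the bound to $I(v)<120ld_*$; a union bound over the $k\le n$ vertices gives the overall failure probability $\bigo{n^{-7}}$.

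The main obstacle is controlling the multiplicative inflation from the endpoint conditioning step. The key insight is that $l=20\ln(n)/\lambda(G)$ sits well above the mixing time, so the bridge measure is close to the free-walk measure up to a $\poly{n}$ factor, which is in turn absorbed by the exponentially small Chernoff tail available for the interference count.
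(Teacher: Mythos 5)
Your construction of the glued walks is the same as the paper's (independent length-$l$ lazy walks from each non-fixed vertex, paired walks conditioned on a common terminal vertex sampled from the stationary distribution, then concatenated with a time reversal), but there are two genuine gaps in how you transfer the probabilistic bound of \cref{lem:NonintersectingWalk} to these conditioned walks. First, the claim that the bridge measure differs from the free-walk measure by ``at most a $\poly{n}$ multiplicative factor'' which is then ``absorbed by Chernoff slack'' does not work: a polynomial multiplicative inflation of the walk measure inflates $\expectation{I(v)}$ by the same polynomial factor, and a Chernoff upper-tail bound controls deviations \emph{above} the mean---it cannot rescue a threshold like $60ld_*$ that sits far below an inflated mean. (Nor can you inflate event probabilities walk-by-walk, since the event $\set{I(v)\ge 60ld_*}$ depends on all $\bigtheta{n}$ walks and the per-walk factors would compound.) The argument only goes through because the inflation is in fact $1+\bigo{n^{-8}}$, not $\poly{n}$: since $l=20\ln(n)/\lambda(G)$, the relative pointwise distance of $P_v^{(l)}$ to $\vec\pi$ is at most $2n^{-8}$, so $\vec\pi(w)/P_v^{(l)}(w)\le 1+\bigo{n^{-8}}$ uniformly and the conditioned-walk measure is within a $1+\bigo{n^{-8}}$ factor of the open-walk measure. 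You need to state and use this quantitative closeness; ``well above the mixing time, hence $\poly{n}$-close'' is both too weak to be useful and not the estimate that actually holds.

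Second, you apply the Chernoff step of \cref{lem:NonintersectingWalk} directly to $I(v)=\sum_u X_{uv}$ over all other glued walks, but those summands are no longer independent: $W(u)$ and $W(\sigma(u))$ are deterministic reversals of one another (they share the sampled terminal vertex $w$), so $X_{uv}$ and $X_{\sigma(u)v}$ are perfectly correlated. The multiplicative Chernoff bound for independent indicators does not apply as stated. The paper repairs exactly this by partitioning $V(G)$ into $X$ and $\bar X$ with one member of each pair $(u,\sigma(u))$ in each part, decomposing $I(v)=I(v)_X+I(v)_{\bar X}$, and applying \cref{lem:NonintersectingWalk} separately to each part, where the walks are mutually independent; the four resulting bounds (for $I(v)$ and $I(\sigma(v))$, each split over $X$ and $\bar X$) sum to the $120ld_*$ threshold. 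An alternative fix is to group the sum into independent pair-contributions bounded by $2$ and use a Chernoff/Hoeffding bound for bounded independent summands, but some such device is needed and your write-up has none.
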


\begin{proof}
    We first show the existence of $k$ conditioned random walks (defined below) of length $l$,
    one for each vertex $v \in V(G)$ with $\sigma(v) \ne v$,
    that are close to random walks.
	
	Define the probability of an \emph{open random walk} starting at $v$ and ending at a random (not a priori specified) vertex $w \in V(G)$,
	after $t \in \mathbb N$ steps,
	as
	\begin{align}
		P_{v}^{(t)}(w) &\coloneqq \probability*{{W(v)}_t = w} \\
					   &= \sum_{v_2,\dots,v_{t-1} \in V(G)} \probability*{W(v) = (v,v_2,\dots,v_{t-1},w)}.
	\end{align}
	We now define the \emph{relative pointwise distance}, $\Delta\colon V(G) \to \mathbb R$,
	of $P_{v}^{(t)}(w)$ to the stationary distribution $\vec \pi$ as~\cite{ChungRevised}
	\begin{equation}
		\Delta(t) \coloneqq \max_{v,w} \frac{\abs*{P_{v}^{(t)}(w) - \vec \pi(w)}}{\vec \pi(w)}.
	\end{equation}
	All random walks of length $l$ are close to stationary with respect to the relative pointwise distance
	since~\cite[theorem~1.16]{ChungRevised}\footnote{%
		\textcite[theorem~1.16]{ChungRevised} shows a slightly stronger version for a modified lazy random walk,
		which has different transition probability to remain at the same vertex.
	}
	\begin{equation}
			\Delta(l)
			\le  2e^{-\frac{l\lambda(G)}{2}} \frac{\abs{E(G)}}{\min_x d_x}
			=  2n^{-10} \frac{\abs{E(G)}}{\min_x d_x}
				<  2n^{-8},
	\end{equation}
	where we used $|E(G)| < n^2$, and $\min_x d_x \ge 1$.
	
	Now we compare the statistics of an open random walk $W(v)$ of length $l$
	to a \emph{conditioned random walk} where we condition on the last vertex being $w \in V(G)$,
	which is sampled according to the stationary distribution.
	The probability of a particular open random walk $W(v)$ can be related to
	that of the conditioned random walk by
	\begin{align}
		\label{eq:OpenVsConditionedWalk1}	\probability*{W(v)}
			&= \sum_w \probability*{W(v) \,\middle|\, {W(v)}_l = w} P_{v}^{(l)}(w)\\
			\label{eq:OpenVsConditionedWalk2}
			&\le  \sum_w \probability*{W(v) \,\middle|\, {W(v)}_l = w} (1+2n^{-8})\vec \pi(w)\\
			\label{eq:OpenVsConditionedWalk3}
		  	&= 2n^{-8} + \sum_w \probability*{W(v) \,\middle|\, W(v)_l = w} \vec \pi(w),
	\end{align}
  	and a corresponding lower bound can be derived similarly.
	Therefore, for large $n$ a conditioned random walk has vanishing deviation from an open random walk.

    For all vertices $v$ with $\sigma(v) \ne v$, we now condition $W(v)$ and $W(\sigma(v))$ to have the same terminal vertex $w$,
    which is sampled once from $\vec\pi$.
    We call the combined walk of $W(v)$ followed by the reverse of $W(\sigma(v))$ the \emph{glued walk} for $v$.
    There are $k$ glued walks that connect $v$ to $\sigma(v)$ in pairs.
    
    Finally, we bound the number of interfering glued walks with high probability by applying \cref{lem:NonintersectingWalk}.
    We first arbitrarily partition the graph into $X \subseteq V(G)$ and $\bar X$
    such that the vertices in each pair $(v,\sigma(v))$ lie in different parts.
    For any conditioned random walk $W(v)$, we can write the number of walk interferences $I(v)$ as a sum of two random variables,
    $I(v)_X$ and $I(v)_{\bar X}$,
    defined as the number of interferences with conditioned random walks $W(u)$ with $u$ from $X$ and $\bar X$, respectively,
    excluding $W(\sigma(v))$.
    Note that since
    the terminal vertices of each conditioned random walk in $X$ (or $\bar X$) are sampled independently,
    we can apply~\cref{lem:NonintersectingWalk} individually to $I(v)_X$ and $I(v)_{\bar X}$
    (taking advantage of~\eqref{eq:OpenVsConditionedWalk3}).
    We can then similarly bound $I(\sigma(v))_X$ and $I(\sigma(v))_{\bar X}$.
    The number of glued walks that interfere with a given $(v,\sigma(v))$ glued walk
    is a random variable bounded above by the sum 
    \begin{equation}
        I(v) + I(\sigma(v)) = I(v)_X + I(v)_{\bar X} + I(\sigma(v))_X + I(\sigma(v))_{\bar X}
    \end{equation}
    that can now be bounded from above by \cref{lem:NonintersectingWalk}.
    This gives $I(v) + I(\sigma(v)) < 120ld_*$ with probability at least $1-\bigo{n^{-7}}$.
\end{proof}

The existence of walks between opposite ends of an order-two permutation with few intersections
leads to a classical routing algorithm that
divides the walks into disjoint sets that do not intersect.
This adapts \cite[theorem~4.10]{ChungRevised} to the irregular graph setting using our previous lemmas.

\begin{theorem}\label{thm:OrderTwoRouting}
	Let $\sigma$ denote a permutation of order two on the vertex set of a connected graph $G$.
	Then, for $l=20\ln(n)/\lambda(G)$,
	\begin{equation}
		\rnumber*{G,\sigma} \le 2l\left(120ld_* + 1\right) = \bigo*{\frac{d_*}{\lambda(G)^2} \log^2 n}.
	\end{equation}
\end{theorem}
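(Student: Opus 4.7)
The plan is to invoke \cref{lem:NonintersectingWalks} with $l = 20\ln(n)/\lambda(G)$ to produce, for each non-trivial orbit $\{v,\sigma(v)\}$ of $\sigma$, a \emph{glued walk} of length $2l$ joining $v$ to $\sigma(v)$ with the property that each glued walk interferes with at most $120ld_*$ others (fixed points of $\sigma$ need no action). I would then build a conflict graph $H$ on the set of glued walks, placing an edge whenever two walks interfere. Because $H$ has maximum degree at most $120ld_*$, greedy vertex coloring partitions the walks into at most $120ld_* + 1$ color classes, each consisting of pairwise non-interfering glued walks.

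For each color class I would then execute a parallel ``token-sliding'' pass lasting exactly $2l$ steps: along a glued walk visiting $W_0,\dots,W_{2l}$, swap the edge $(W_t,W_{t+1})$ at time $t$. After $2l$ steps this exchanges the tokens at $W_0=v$ and $W_{2l}=\sigma(v)$ and returns every intermediate token to its starting position. The non-interference window $|i-j|>1$ is precisely what guarantees parallel compatibility: if two walks in the same color class share a vertex $u = W_i = W'_j$, the time steps in which $u$ is an endpoint of a \swap{} (namely $\{i-1,i\}$ for $W$ and $\{j-1,j\}$ for $W'$) are disjoint, so no vertex ever participates in two simultaneous \swap{}s. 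Concatenating the $120ld_* + 1$ color passes realizes $\sigma$ in at most $2l(120ld_* + 1)$ time steps, and substituting $l=20\ln(n)/\lambda(G)$ yields the claimed $\bigo{d_*\log^2(n)/\lambda(G)^2}$.

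The main subtlety I expect is verifying that the interference definition aligns exactly with the edge-conflict constraints of parallel token sliding, in particular the off-by-one case where two walks pass through a common vertex in consecutive time steps; this is precisely why the $|i-j|\le 1$ window appears in the definition rather than only $i=j$. A secondary bookkeeping point is that \cref{lem:NonintersectingWalks} bounds the number of component walks $W(u)$ interfering with a glued walk rather than the number of glued walks directly, but since each glued walk contributes at most two such component walks, the degree bound in $H$ is still at most $120ld_*$, which is all the greedy coloring argument requires.
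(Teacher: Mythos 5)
Your overall architecture matches the paper's: the same invocation of \cref{lem:NonintersectingWalks}, the same conflict graph on glued walks, a $(\Delta+1)$-coloring (greedy versus the paper's appeal to Brooks' theorem is immaterial), and the same $2l(120ld_*+1)$ accounting; your bookkeeping remark about component walks versus glued walks is also fine. However, the token-moving subroutine you run within each color class is incorrect. A single forward sweep that swaps edge $(W_t,W_{t+1})$ at time $t$ for $t=0,\dots,2l-1$ implements a \emph{cyclic rotation} of the tokens along the walk, not a transposition of its endpoints: after the sweep the token from $W_0=v$ has indeed reached $W_{2l}=\sigma(v)$, but the token that started at $W_t$ now sits at $W_{t-1}$ for every $t\ge 1$; in particular the token from $\sigma(v)$ ends at $W_{2l-1}$, not at $v$, and no intermediate token is restored. (Check it on three vertices carrying tokens $a_0,a_1,a_2$: swapping edge $01$ and then edge $12$ yields $a_1,a_2,a_0$.) Concatenating such passes therefore does not realize $\sigma$.

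The repair, which is what the paper does, is a \emph{two-sided} sweep exploiting the symmetric gluing of $W(v)$ and $W(\sigma(v))$: at each step a walk performs two swaps, one near the $v$ end and one at the mirrored position near the $\sigma(v)$ end, so the two end tokens march toward each other, are exchanged when they become adjacent in the middle, and continue to the opposite ends, while every intermediate token is first displaced and then pushed back to its original position. This exchanges the endpoints of a length-$2l$ walk in $\bigo{l}$ rounds while fixing the interior, consistent with the $2l$ rounds per color class in the stated bound. Your analysis of why the $\abs{i-j}\le 1$ interference window guarantees that no vertex participates in two simultaneous \swap{}s is correct and carries over to the two-sided sweep, so once the subroutine is replaced the rest of your argument goes through.
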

\begin{proof}
	Let $W(v)$ be a system of walks of length $2l$ satisfying \cref{lem:NonintersectingWalks}.
	Let $H$ be the graph whose vertices are the walks $W(v)$
	and in which $W(v)$ and $W(u)$ are adjacent if there exist two indices $0\le i, j \le l, \abs{i-j} \le 1$,
	so $W(v)_i = W(u)_j$ or $W(v)_i = W(u)_{2l-j}$.
	By \cref{lem:NonintersectingWalks}, the maximum degree of $H$ is at most $120ld_*$ with high probability,
	hence by Brooks' theorem it is vertex colorable with at most $120ld_* + 1$ colors.
	We can therefore divide the walks $W(v)$ into at most $120ld_*+1$ sets of independent walks
	of length $2l$.

	We now present the routing algorithm.
	For each set of independent walks we sequentially do the following.
	For step $i$ with $1\le i\le l$,
	we flip tokens along the edges numbered $i$ and $2l-1-i$ in each of the walks.
	After $l$ steps, the tokens at either end of the walk have been exchanged
	and the tokens not involved in any walk have not moved.
	After repeating this for all independent sets,
	all tokens have reached their destinations.

	Since this routing algorithm succeeds with positive probability,
	there exists an algorithm achieving the claimed routing number.
\end{proof}
Now we generalize to all permutations.
\begin{corollary}\label{cor:GeneralRouting}
	For every connected simple graph $G$
	and $l=20\ln(n)/\lambda(G)$,
	\begin{equation}
		\rnumber*{G} \le 4l\left(120ld_* + 1\right) = \bigo*{\frac{d_*}{\lambda(G)^2} \log^2 n}.
	\end{equation}
\end{corollary}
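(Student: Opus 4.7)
The plan is to reduce the general case to the order-two case handled by \cref{thm:OrderTwoRouting}. The key combinatorial fact I would invoke is that every permutation $\pi$ on a finite set can be written as the product of two involutions, i.e., $\pi = \sigma_1 \sigma_2$ where each $\sigma_i$ is a permutation of order at most two. This is a classical result that can be proven cycle-by-cycle: a single cycle $(a_1 \, a_2 \, \dots \, a_k)$ is the product of the two involutions obtained by ``folding'' the cycle around its two ends, and these involutions can be combined across disjoint cycles.

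Given such a decomposition, the routing strategy is straightforward: first route according to $\sigma_2$, then route according to $\sigma_1$. Applying \cref{thm:OrderTwoRouting} twice gives
\begin{equation}
    \rnumber{G, \pi} \le \rnumber{G, \sigma_1} + \rnumber{G, \sigma_2} \le 2 \cdot 2l\left(120 l d_* + 1\right) = 4l\left(120ld_* + 1\right).
\end{equation}
Maximizing the left-hand side over $\pi$ then yields the claimed bound on $\rnumber{G}$. Substituting $l = 20\ln(n)/\lambda(G)$ gives the asymptotic form $\bigo{d_*\log^2(n)/\lambda(G)^2}$.

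There is essentially no obstacle here beyond the invocation of the involution-decomposition lemma; once that is in hand, routing sequentially is clearly valid and the factor of $2$ matches the statement exactly. The only thing worth being careful about is to justify (or cite) the involution decomposition, and to note that the bound from \cref{thm:OrderTwoRouting} holds simultaneously for both $\sigma_1$ and $\sigma_2$ since it holds for \emph{every} order-two permutation (the probabilistic argument establishes existence of a suitable set of walks for each involution independently).
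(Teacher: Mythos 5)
Your proposal is correct and follows the paper's own proof exactly: the paper likewise decomposes an arbitrary permutation into a product of two order-two permutations and applies \cref{thm:OrderTwoRouting} to each sequentially, yielding the factor of $2$ and the bound $4l(120ld_*+1)$. Your extra care in justifying the involution decomposition and in noting that the order-two bound applies to each factor independently is a welcome elaboration of what the paper states in one line.
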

\begin{proof}
	Any permutation of $V(G)$ can be written as a product of two permutations of order two.
	Use \cref{thm:OrderTwoRouting} to route each sequentially to obtain the result.
\end{proof}

To the best of our knowledge,
\cref{cor:GeneralRouting} provides novel upper bounds for certain irregular graphs.
Of particular interest are irregular graphs where $d_*/\lambda(G)^2 = \littleo{n}$.
One such example is an Erd\"{o}s-R\'{e}nyi graph $G_{n,p}$,
which is an $n$-vertex graph where each edge is independently present with some probability $p$.
\citeauthor{Hoffman2019}~\cite{Hoffman2019} showed that for $p \ge (1+\delta)\log n/n$, for constant $\delta > 0$,
there is a constant $C(\delta)$ such that
\begin{equation}
	\abs{1-\lambda(G)} < \frac{C(\delta)}{\sqrt{p(n-1)}} = \bigo*{\frac{1}{\sqrt n}}
\end{equation}
with high probability.
Thus, we have that $\lambda(G) = \bigomega{1}$ with high probability
for such $p$ and large enough $n$.
Moreover, the degree ratio $d_* \to 1$ for $n \to \infty$ with high probability,
though it does not exactly equal 1 for finite $n$,
giving some irregularity.
Under these conditions, \cref{cor:GeneralRouting} shows that $\rnumber{G_{n,p}} = \bigo{\log^2 n}$ with high probability.

\subsection{Conditions for a superpolynomial separation}
To compare our upper bound on the routing number
and the Hamiltonian routing time lower bound,
we bound the Hamiltonian routing time in terms of the spectral gap.
We use the \emph{Cheeger inequality}~\cite{Cheeger1971,ChungRevised} that we state here without proof.
\begin{lemma}[Cheeger inequality]\label{lem:CheegerInequality}
	For any connected graph $G$,
	\begin{equation}
		2h_G \ge \lambda(G) > \frac{h_G^2}{2},
	\end{equation}
	where the Cheeger constant is
	\begin{equation}
		h_G \coloneqq \min_{X \subseteq V(G) : \abs{X} \le \abs{V(G)}/2} \frac{\abs{\partial X}}{\sum_{x \in X} d_x}.
	\end{equation}
\end{lemma}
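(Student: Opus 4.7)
The plan is to prove the two inequalities of \cref{lem:CheegerInequality} separately, via the variational (Rayleigh-quotient) characterization of $\lambda(G)$. Courant-Fischer applied to the normalized Laplacian $\mathcal L$, together with the substitution $g = T^{-1/2} f$, yields
\begin{equation*}
    \lambda(G) = \min_{g}\, \frac{\sum_{\set{u,v} \in E(G)} (g(u) - g(v))^2}{\sum_{v \in V(G)} d_v\, g(v)^2},
\end{equation*}
where the minimum is over nonzero $g$ orthogonal to the degree vector $\vec d$.

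For the upper bound $\lambda(G) \le 2 h_G$, I would plug in an explicit test vector. Let $X$ achieve the minimum in $h_G$ and write $\mathrm{vol}(Y) \coloneqq \sum_{v \in Y} d_v$. Choose $g(v) = 1/\mathrm{vol}(X)$ for $v \in X$ and $g(v) = -1/\mathrm{vol}(\bar X)$ for $v \in \bar X$, which makes $g \perp \vec d$ automatically. Only edges in $\partial X$ contribute to the numerator of the Rayleigh quotient, giving $\abs{\partial X}\paren{1/\mathrm{vol}(X) + 1/\mathrm{vol}(\bar X)}^2$, while the denominator collapses to $1/\mathrm{vol}(X) + 1/\mathrm{vol}(\bar X)$. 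Since $\mathrm{vol}(\bar X) \ge \mathrm{vol}(X)$ for the minimizing $X$, the ratio is bounded above by $2\abs{\partial X}/\mathrm{vol}(X) = 2 h_G$.

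For the lower bound $\lambda(G) > h_G^2/2$, I would use the classical sweep argument. Take an eigenfunction $f$ of $\mathcal L$ at eigenvalue $\lambda(G)$, set $g = T^{-1/2} f$, and after a shift replace $g$ by its positive part $g_+$ restricted to a vertex set of volume at most $\mathrm{vol}(G)/2$; this truncation can only decrease the Rayleigh quotient. Sort vertices by decreasing $g_+(v)^2$ and form the level sets $S_i$. Two complementary bounds then close the argument: from below, $\sum_{\set{u,v} \in E(G)} \abs{g_+(u)^2 - g_+(v)^2} \ge h_G \sum_v d_v\, g_+(v)^2$, obtained by expressing the left-hand side as a sum over $\abs{\partial S_i}$ weighted by level-set gaps and invoking $\abs{\partial S_i} \ge h_G\, \mathrm{vol}(S_i)$; and from above, using the factorization $g_+(u)^2 - g_+(v)^2 = (g_+(u) - g_+(v))(g_+(u) + g_+(v))$ together with Cauchy-Schwarz applied to the Rayleigh quotient, one obtains $\sum_{\set{u,v} \in E(G)} \abs{g_+(u)^2 - g_+(v)^2} \le \sqrt{2\lambda(G)}\, \sum_v d_v\, g_+(v)^2$. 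Combining these yields $h_G \le \sqrt{2\lambda(G)}$, with strict inequality because a connected graph cannot have an eigenvector saturating the Cauchy-Schwarz step exactly.

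The main obstacle will be the hard direction. In particular, the truncation step must be done so that the Rayleigh quotient does not increase (which works because shifting by the correct constant removes the cross term from the orthogonality constraint), and the Cauchy-Schwarz step must be combined with a careful telescoping that converts the edge-wise sum $\sum_{\set{u,v} \in E(G)} \abs{g_+(u)^2 - g_+(v)^2}$ into a sum of level-set boundaries weighted by consecutive differences of the squared values. These are the places where the constants are genuinely pinned down and where one must be careful with the direction of each inequality.
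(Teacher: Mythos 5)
The paper states \cref{lem:CheegerInequality} explicitly ``without proof,'' citing \cite{Cheeger1971,ChungRevised}, so there is no internal argument to compare against. Your sketch is the standard spectral proof---Rayleigh-quotient characterization plus a two-valued test vector for the easy direction, and the sweep/level-set argument with Cauchy--Schwarz for the hard direction---and as a proof of the classical (volume-normalized) Cheeger inequality it is structurally sound, including the strictness of the lower bound.

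There is, however, a genuine mismatch between your argument and the statement as written here. The paper normalizes $h_G$ by $\sum_{x\in X} d_x$ but constrains the minimization by \emph{cardinality}, $\abs{X}\le\abs{V(G)}/2$, rather than by volume. Your step ``since $\mathrm{vol}(\bar X)\ge\mathrm{vol}(X)$ for the minimizing $X$'' is exactly where this matters: under the cardinality constraint the minimizer can carry more than half the volume, and then the easy direction fails outright. Concretely, for the star $S_n$ one has $\lambda(S_n)=1$, but taking $X$ to be the centre together with $(n-1)/2$ leaves satisfies $\abs{X}\le\abs{V(S_n)}/2$ and gives $\abs{\partial X}/\sum_{x\in X}d_x=(n+1)/(3n-1)\to 1/3$, so $2h_G\to 2/3<\lambda(S_n)$; the first inequality of the lemma is false under the paper's definition. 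The hard direction $\lambda(G)>h_G^2/2$---the only one the paper uses, in \cref{lem:QRoutingSpectral}---does survive: writing $\tilde h_G$ for the volume-normalized Cheeger constant $\min_X \abs{\partial X}/\min(\sum_{x\in X}d_x,\sum_{x\in\bar X}d_x)$, every cut has a representative with $\abs{X}\le\abs{V(G)}/2$ and for it $\abs{\partial X}/\sum_{x\in X}d_x\le\abs{\partial X}/\min(\cdot,\cdot)$, hence $h_G\le\tilde h_G$ and $\lambda(G)>\tilde h_G^2/2\ge h_G^2/2$. Your sweep argument needs this same patch: the level sets $S_i$ produced by the volume-median truncation satisfy a volume constraint, not a cardinality constraint, so the inequality $\abs{\partial S_i}\ge h\cdot\mathrm{vol}(S_i)$ must be invoked for $\tilde h_G$ and then transferred to the paper's $h_G$ via $h_G\le\tilde h_G$. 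With that adjustment your proof establishes everything the paper actually relies on.
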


The edge expansion $h(G)$ relates to $h_G$ as
\begin{equation}
	h(G) = \min_X \frac{\abs{\partial X}}{\abs{X}} \le \min_X \frac{\abs{\partial X}}{\sum_{x \in X} d_x}\max_v d_v  = h_G \max_v d_v,\label{eq:CheegerConstantConversion}
\end{equation}
where $X \subseteq V(G)$ and $\abs{X} \le \abs{V(G)}/2$.
We now rewrite the Hamiltonian routing time lower bound, \cref{thm:hqrtEdgeBound}, in terms of the spectral gap.
\begin{lemma}\label{lem:QRoutingSpectral}
	For a connected graph $G$,
	\begin{equation}
		\hrnumber{G} \ge \frac{8}{3\pi \sieConst \max_v d_v} \sqrt{\frac{1}{2\lambda(G)}}.
	\end{equation}
\end{lemma}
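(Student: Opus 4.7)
The plan is to chain together three facts already established in the paper: the edge-expansion lower bound on Hamiltonian routing (\cref{thm:hqrtEdgeBound}), the inequality $h(G) \le h_G \max_v d_v$ from \cref{eq:CheegerConstantConversion}, and the Cheeger inequality \cref{lem:CheegerInequality}. The overall direction is that a large spectral gap forces a large Cheeger constant $h_G$, which in turn forces the (unnormalized) edge expansion $h(G)$ to be not too large relative to the maximum degree, and a small $h(G)$ is exactly what gives a lower bound on $\hrnumber{G}$.

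Concretely, I would first invoke \cref{thm:hqrtEdgeBound} to write
\begin{equation}
\hrnumber{G} \ge \frac{8}{3\pi\sieConst}\cdot\frac{1}{h(G)}.
\end{equation}
Next, I would apply the upper bound $h(G) \le h_G \max_v d_v$ from \cref{eq:CheegerConstantConversion} to replace $1/h(G)$ by the smaller quantity $1/(h_G \max_v d_v)$, pulling the degree factor out of the denominator. Finally, I would use the right-hand inequality of the Cheeger inequality, $\lambda(G) > h_G^2/2$, which rearranges to $h_G < \sqrt{2\lambda(G)}$, and substitute to obtain $1/h_G > 1/\sqrt{2\lambda(G)}$. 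Stringing these together gives exactly the claimed bound.

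Since each of the three ingredients is already stated and proved earlier in the excerpt, there is no real obstacle; the only care required is tracking inequality directions (we need an upper bound on $h(G)$ and on $h_G$ in order to obtain a lower bound on $\hrnumber{G}$), and making sure the factor $\max_v d_v$ ends up in the denominator rather than the numerator. No auxiliary lemmas or new constructions are needed, so the proof is essentially a one-line calculation once the three cited results are composed in the correct order.
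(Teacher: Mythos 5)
Your proposal is correct and follows exactly the same three-step chain as the paper's own proof: apply \cref{thm:hqrtEdgeBound}, then \cref{eq:CheegerConstantConversion}, then the Cheeger inequality $\lambda(G) > h_G^2/2$. The inequality directions are all handled correctly, so nothing further is needed.
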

\begin{proof}
	Using~\cref{thm:hqrtEdgeBound,eq:CheegerConstantConversion,lem:CheegerInequality}, we have
	\begin{align}
		\hrnumber{G} &\ge \frac{8}{3\pi \sieConst \cdot h(G)} \\
					 &\ge \frac{8}{3\pi \sieConst \cdot h_G\max_v d_v}\\
					 &> \frac{8}{3\pi \sieConst \max_v d_v} \sqrt{\frac{1}{2\lambda(G)}}
	\end{align}
	as claimed.
\end{proof}

A simple way to bound the slowdown when a classical routing algorithm
is used instead of a Hamiltonian routing algorithm is the ratio of the routing times.
By \cref{cor:GeneralRouting,lem:QRoutingSpectral}, we have
\begin{equation}
	\frac{\rnumber{G}}{\hrnumber{G}} = \bigo*{\frac{d_* \max_v d_v}{\lambda(G)^{3/2}} \log^2 n}.\label{eq:advantage}
\end{equation}
By routing on a spanning tree of $G$, we have $\rnumber{G} = \bigo{n}$~\cite{Alon1994},
and, trivially, $\hrnumber{G} = \bigomega{1}$.
Therefore, \cref{eq:advantage} is nontrivial if
\begin{equation}
	d_* \max_v d_v = \littleo*{\frac{n}{\log^2 n}\lambda(G)^{3/2}}.
\end{equation}

Moreover, it is possible to bound the routing number
by a polynomial in the Hamiltonian routing time when $\lambda(G)$ is sufficiently small.
\begin{corollary}\label{cor:noSeparation}
	For a simple connected graph $G$,
	$\rnumber{G} = \bigo{\poly{\hrnumber{G}}}$,
	where $\poly{x}$ is a polynomial in $x$,
	if
	\begin{equation}\label{eq:noSeparationCondition}
		\frac{1}{h(G)} =
		\bigomega*{\poly*{d_*, \frac{1}{\lambda(G)}, \log n}}.
	\end{equation}
\end{corollary}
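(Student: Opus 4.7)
The plan is to combine the upper bound $\rnumber{G} = \bigo{d_*\log^2(n)/\lambda(G)^2}$ from \cref{cor:GeneralRouting} with the lower bound $\hrnumber{G} = \bigomega{1/h(G)}$ from \cref{thm:hqrtEdgeBound}. Since the latter gives $1/h(G) = \bigo{\hrnumber{G}}$, it suffices to show that, under the hypothesis \cref{eq:noSeparationCondition}, the quantity $d_*\log^2(n)/\lambda(G)^2$ is bounded above by some single-variable polynomial in $1/h(G)$; chaining the two bounds then yields $\rnumber{G} = \bigo{\poly{1/h(G)}} = \bigo{\poly{\hrnumber{G}}}$.

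To extract usable bounds from \cref{eq:noSeparationCondition}, I would first unpack the multivariate $\bigomega{\poly{\cdot}}$ notation. The natural reading is that there is a multivariate polynomial with a monomial of the form $c\,x_1^{a_1} x_2^{a_2} x_3^{a_3}$ with each $a_i>0$, which provides a lower bound
\begin{equation}
\frac{1}{h(G)} \ge c\,d_*^{a_1}\paren*{\frac{1}{\lambda(G)}}^{a_2}(\log n)^{a_3}
\end{equation}
for all sufficiently large $n$. Since every connected simple graph on $n\ge 2$ vertices satisfies $d_*\ge 1$, $1/\lambda(G)\ge 1$, and $\log n\ge 1$, dropping all but one factor on the right gives three individual lower bounds, so each of $d_*$, $1/\lambda(G)$, and $\log n$ is itself $\bigo{\poly{1/h(G)}}$.

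With these individual bounds in hand, I would substitute them into \cref{cor:GeneralRouting}: the product $d_*\cdot(1/\lambda(G))^2\cdot\log^2(n)$ is a product of three quantities each polynomially bounded in $1/h(G)$, hence is itself polynomial in $1/h(G)$. Combining with $1/h(G) = \bigo{\hrnumber{G}}$ closes the argument.

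The only delicate point is the precise interpretation of $\bigomega{\poly{\cdot}}$ in several variables in \cref{eq:noSeparationCondition}, since the statement is essentially vacuous unless the polynomial has strictly positive degree in each argument. Once that reading is fixed, the rest of the proof is a one-line chain of substitutions requiring no nontrivial calculation; I would expect the write-up to simply state the hypothesis in the monomial form above and then quote \cref{cor:GeneralRouting,thm:hqrtEdgeBound}.
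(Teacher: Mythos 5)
Your proof is correct and follows essentially the same route as the paper: both chain the classical upper bound of \cref{cor:GeneralRouting} with the Hamiltonian lower bound of \cref{thm:hqrtEdgeBound}, the paper merely phrasing the sufficiency via the ratio $\log\rnumber{G}/\log\hrnumber{G}$ being $\bigo{1}$ while you substitute directly (and are usefully more explicit about how to read the multivariate $\poly{\cdot}$ hypothesis). One tiny quibble: $1/\lambda(G)\ge 1$ can fail (e.g., complete graphs have $\lambda(G)=n/(n-1)>1$), but since $1/\lambda(G)\ge 1-1/n\ge 1/2$ always, dropping that factor still costs only a constant and your argument is unaffected.
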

\begin{proof}
	We wish to understand when $\rnumber{G}$ is some polynomial of $\qrnumber{G}$,
	i.e., $\rnumber{G} = \bigo{\hrnumber{G}^k}$ for constant $k \ge 1$.
	By \cref{cor:GeneralRouting,thm:hqrtEdgeBound},
	this is the case if
	\begin{equation}
		\frac{\log \rnumber{G}}{\log \hrnumber{G}} = \bigo*{\frac{\log\left(\frac{d_*}{\lambda(G)}\log n\right)}{\log \frac{1}{h(G)}}}
	\end{equation}
	is bounded by a constant.
	\Cref{eq:noSeparationCondition} is
	a sufficient condition for this to hold.
\end{proof}

Similarly, we can use Cheeger's inequality and the diameter lower bound
to obtain conditions for polynomially relating the routing number and the Hamiltonian routing time.
\begin{theorem}\label{thm:noSeparation}
	For a simple connected graph $G$,
	$\rnumber{G} = \bigo{\poly{\hrnumber{G}}}$ if
	\begin{equation}
		\max\left(\frac{1}{\lambda(G) \max_v d_v}, \diam(G)\right) =
		\bigomega*{\poly*{d_*, \frac{1}{\lambda(G)}, \log n}}.\label{eq:noSeparationSufficient}
	\end{equation}
\end{theorem}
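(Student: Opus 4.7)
The plan is to mirror the proof of Corollary~\ref{cor:noSeparation}, showing that the hypothesis forces $\log\rnumber{G}/\log\hrnumber{G} = O(1)$, which immediately gives $\rnumber{G} = O(\poly(\hrnumber{G}))$. The numerator is handled by Corollary~\ref{cor:GeneralRouting}, which yields $\rnumber{G} = O(d_* \log^2 n/\lambda(G)^2)$ and hence $\log\rnumber{G} = O(\log(d_*/\lambda(G)\cdot \log n))$. So the task reduces to showing that condition~\eqref{eq:noSeparationSufficient} forces $\hrnumber{G}$ to grow polynomially in $d_*, 1/\lambda(G), \log n$.

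For the lower bound on $\hrnumber{G}$, I would combine two complementary results. Lemma~\ref{lem:QRoutingSpectral} gives the Cheeger-type bound $\hrnumber{G} = \Omega(1/(\max_v d_v\sqrt{\lambda(G)}))$, which rearranges to
\begin{equation}
    \frac{1}{\lambda(G)\max_v d_v} = O\paren*{\max_v d_v \cdot \hrnumber{G}^2}.
\end{equation}
Theorem~\ref{thm:hqrtDiam} gives the Lieb--Robinson-style bound $\hrnumber{G} = \Omega(\diam(G)/\max_v d_v)$, which rearranges to
\begin{equation}
    \diam(G) = O\paren*{\max_v d_v \cdot \hrnumber{G}}.
\end{equation}
Taking the maximum of the two left-hand sides — exactly the quantity appearing in~\eqref{eq:noSeparationSufficient} — yields an upper bound of the form $O(\max_v d_v \cdot \poly(\hrnumber{G}))$.

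Invoking the hypothesis then gives $\max_v d_v\cdot\poly(\hrnumber{G}) = \Omega(\poly(d_*,1/\lambda(G),\log n))$. Because $\max_v d_v \le d_*\cdot\min_v d_v$ and is in any case polynomially bounded in the relevant quantities (the polynomial in the condition can be strengthened to dominate any fixed power of it, exactly as constant and logarithmic factors are absorbed in Corollary~\ref{cor:noSeparation}), this simplifies to $\hrnumber{G} = \Omega(\poly'(d_*,1/\lambda(G),\log n))$ for some polynomial $\poly'$. Combining with $\rnumber{G} = O(\poly(d_*,1/\lambda(G),\log n))$ from Corollary~\ref{cor:GeneralRouting} gives $\rnumber{G} = O(\poly(\hrnumber{G}))$ as claimed.

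The main obstacle I anticipate is the bookkeeping of the $\max_v d_v$ prefactor that appears when converting between $\hrnumber{G}$ and the left-hand side of~\eqref{eq:noSeparationSufficient}. This is a purely cosmetic issue — the polynomial in the hypothesis has enough slack to absorb any fixed power of $\max_v d_v$ — but it is where the proof needs the most care, and it is the reason condition~\eqref{eq:noSeparationSufficient} includes the $\max_v d_v$ factor inside the first term while omitting it from the second (which is already only linear in $\hrnumber{G}$).
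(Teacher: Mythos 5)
Your proposal follows the paper's proof in all essentials: both combine the upper bound $\rnumber{G} = \bigo{d_*\lambda(G)^{-2}\log^2 n}$ from \cref{cor:GeneralRouting} with the two lower bounds on $\hrnumber{G}$ from \cref{lem:QRoutingSpectral} and \cref{thm:hqrtDiam}, and reduce the claim to showing that $\log\rnumber{G}/\log\hrnumber{G}$ is bounded by a constant under the hypothesis. The one place where your write-up overreaches is the parenthetical claim that $\max_v d_v$ is ``in any case polynomially bounded in the relevant quantities'': it is not---for the complete graph $K_n$ one has $d_* = 1$ and $\lambda(G) = \Theta(1)$, yet $\max_v d_v = n-1$. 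In the first branch this is harmless, since the hypothesis $\frac{1}{\lambda(G)\max_v d_v} = \bigomega{\poly{\cdot}}$ itself forces $\max_v d_v \le 1/(\lambda(G)\cdot\poly{\cdot})$, which is exactly the bound needed to divide out the stray factor; in the diameter branch the leftover $\max_v d_v$ is a genuine looseness, but the paper's own one-line proof glosses over precisely the same point, so your argument is at the same level of rigor as the original.
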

\begin{proof}
By \cref{cor:GeneralRouting}, \cref{lem:QRoutingSpectral}, and the diameter lower bound,
this happens when
\begin{equation}
	\frac{\log \rnumber{G}}{\log \hrnumber{G}} = \bigo*{\frac{\log\left(\frac{d_*}{\lambda(G)}\log n\right)}{\log \max\left(\frac{1}{\lambda(G) \max_v d_v}, \diam(G)\right)}}
\end{equation}
can be upper bounded by a constant.
\Cref{eq:noSeparationSufficient} is a sufficient condition.
\end{proof}

We define a \emph{separation} between the routing number and the Hamiltonian routing time
as a function $f\colon \mathbb R \to \mathbb R$ such that
\begin{equation}
    \rnumber{G} = \bigomega{f(\hrnumber{G})}.
\end{equation}
For example, a quadratic separation corresponds to $f(x) = x^2$.
\Cref{thm:noSeparation} bounds the separation to polynomial
for trivial cases such as graphs with $\diam(G) = \bigomega{n^c}$ for $c > 0$
since $\rnumber{G} = \bigo{n}$ from routing on a spanning tree~\cite{Zhang1999}.
Furthermore, there is no superpolynomial separation for bounded-degree graphs $G$,
since $\diam(G) = \bigomega{\log n}$ such that \cref{eq:noSeparationSufficient} simplifies to
\begin{equation}
	\max\left(\frac{1}{\lambda(G)}, \log n\right) = \bigomega*{\poly*{\frac{1}{\lambda(G)}, \log n}},
\end{equation}
which is always satisfied.
In particular, the separation is quadratic in the case $\lambda(G) = \bigomega{1}$.

There are families of graphs where \cref{thm:noSeparation} limits the separation to polynomial
that cannot be obtained from the diameter lower bound on Hamiltonian routing, \cref{thm:hqrtDiam},
and results for classical routing on regular graphs~\cite{Alon1994}.
An example is given by a family of irregular bounded-degree graphs constructed by \textcite{Raghavan1994} with arbitrary $h(G)$.
The diameter of this graph family is $\bigtheta{1/h(G)}$.
Thus, when we pick a subpolynomial $1/h(G)$, i.e., $1/h(G) = \littleo{n^c}$ for all constant $c > 0$,
\cref{thm:noSeparation} implies a polynomial limit on the separation
that does not follow from the diameter lower bound on Hamiltonian routing.

However, there are graphs with large spectral gap but unbounded degree
that are not restricted to a polynomial separation by \cref{thm:noSeparation}.
The star graph $S_n$ has $\lambda(S_n) = 1$~\cite{ChungRevised}
but is a poor vertex expander since $c(S_n) = \bigo{n^{-1}}$,
giving $\qrnumber{S_n} = \bigtheta{n}$.
We cannot exclude the possibility that Hamiltonian quantum routing could exhibit a superpolynomial separation in this case,
since our lower bound on $\hrnumber{S_n}$ from \cref{thm:hqrtEdgeBound} is trivial.
We take a first step toward exhibiting separations in the next section.

\section{Toward a separation}\label{sec:separation}
We have given necessary conditions for a superpolynomial separation between Hamiltonian and classical routing,
but we are not even aware of any superconstant separation.
In this section, we describe separations in stronger routing models.

\subsection{A quadratic separation with ancillas}
First we show that such a separation is possible in a variant of the Hamiltonian routing model that allows local ancilla qubits. The main idea of our construction is to consider a vertex bottleneck.
This argument also shows that the Hamiltonian routing with ancillas
cannot be lower bounded by $\bigomega{1/c(G)}$.

\begin{figure}
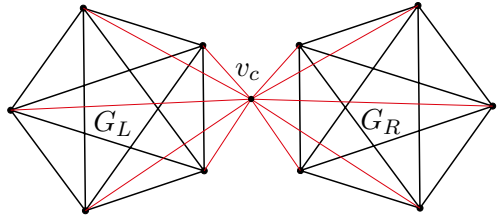

	\centering
	\includeinkscape{barbell}
	\caption{%
		The vertex barbell graph $B_{2n}$, for $n=5$,
		consisting of two complete graphs connected through an additional vertex.
		We have $\lambda(G) \le 2/n$ by \cref{lem:CheegerInequality}.
	}\label{fig:barbell}
\end{figure}

We show a separation on a graph $B_{2n}$, for $n \in \mathbb N$,
that we call the \emph{vertex barbell graph} (see \cref{fig:barbell}).
It consists of two complete graphs, $G_L$ and $G_R$, of $n$ vertices each and a central vertex $v_c$
where each complete graph is fully connected with $v_c$, forming two complete graphs of size $n+1$ joined at a vertex.
We have $\qrnumber{B_{2n}} = \Theta(n)$:
\cref{cor:qrtVertexLB} with $1/c(B_{2n}) \ge n$ implies the lower bound
and a trivial \swap{} routing strategy implies the upper bound.
The Hamiltonian routing time is not similarly bounded
because \cref{thm:hqrtEdgeBound} only implies a trivial lower bound, $\hrnumber{B_{2n}} = \bigomega{1}$,
since $1/h(B_{2n}) = \bigomega{1}$,
making the vertex barbell graph a potential candidate for a separation.

We are able to show a separation in the stronger model of \emph{Hamiltonian routing with ancillas}.
This model is based on Hamiltonian routing with two additional assumptions:
\begin{ienumerate}%
\item each qubit has one associated ancilla qubit available,
and \item the ancilla can perform a \swap{} with its associated qubit in negligible time.
\end{ienumerate}
We denote the \emph{Hamiltonian routing time with ancilla} as
\begin{equation}
	\hrAnumber{G} \coloneqq \max_\pi \hrAnumber{G, \pi},
\end{equation}
where $\hrAnumber{G, \pi}$ is the routing time
in the Hamiltonian routing with ancilla model of $\pi$ on graph $G$.
As a point of comparison,
we may define a modified gate-based quantum routing number $\qrt_{\mathrm{a}}(G)$ analogously.
Due to the vertex bottleneck, we still have $\qrt_{\mathrm{a}}(B_{2n}) = \Theta(n)$.

We can use a protocol for fast state transfer~\cite{Guo2019}
to implement Hamiltonian routing with ancillas for the hard case of routing on $B_{2n}$.
\begin{theorem}\label{thm:barbellSwap}
	Given a vertex barbell graph $B_{2n}$ and a permutation $\sigma$
	that permutes all vertices from $G_L$ to $G_R$ and vice versa,
	we have
	\begin{equation}
		\hrAnumber{B_{2n}, \sigma} = \bigo{\sqrt{n}}.
	\end{equation}
\end{theorem}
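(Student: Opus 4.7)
The plan is to leverage the observation that both induced subgraphs $G_L \cup \{v_c\}$ and $G_R \cup \{v_c\}$ are copies of $K_{n+1}$, and to apply the fast state transfer protocol of~\cite{Guo2019} on each. On $K_{n+1}$ with our Hamiltonian normalization, that protocol implements a \swap{} between any two designated vertices in time $\bigo{1/\sqrt{n}}$, using the remaining $n-1$ vertices as ``resource'' qubits that must be initialized in a known state (say $|0\rangle$) and are returned to that state at the end of the transfer.

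The local ancillas play the crucial role of parking the to-be-routed states of all non-participating qubits while a fast state transfer is in progress, so that the resource qubits needed by the protocol are clean. I would first apply the free ancilla \swap{}s at every vertex of $G_L \cup G_R$, moving all $2n$ to-be-routed states into the ancillas and leaving $G_L$ and $G_R$ in the all-zero computational basis state (with $v_c$ and its ancilla untouched).

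The $n$ swap pairs $(v^L_i, v^R_{\sigma(i)})$ are then processed sequentially. For each pair I would: (i) use free ancilla \swap{}s to load the two relevant states back onto $v^L_i$ and $v^R_{\sigma(i)}$; (ii) perform a constant-length sequence of fast state transfers routed through the bottleneck $v_c$, interleaved with free ancilla \swap{}s at $v_c$ that use its ancilla as a temporary register, so that the states of $v^L_i$ and $v^R_{\sigma(i)}$ are exchanged while the resource qubits in $G_L \setminus \{v^L_i\}$ and $G_R \setminus \{v^R_{\sigma(i)}\}$ together with $v_c$ and its ancilla are restored to their pre-iteration values; and (iii) park the newly-exchanged states back into the ancillas so that the resources are clean for the next iteration. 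Each iteration uses $\bigo{1}$ fast state transfers of cost $\bigo{1/\sqrt{n}}$ each, so each iteration costs $\bigo{1/\sqrt{n}}$, and the total routing time is $\bigo{n \cdot 1/\sqrt{n}} = \bigo{\sqrt{n}}$, as claimed.

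The main obstacle will be translating the fast state transfer bound of~\cite{Guo2019} into our Hamiltonian normalization, and carefully verifying that each invocation of the protocol acts as an honest two-qubit \swap{} on the source and destination while leaving the rest of the system---in particular the $2n-2$ parked states in the ancillas and the uninvolved side ($G_R$ or $G_L$, respectively)---completely undisturbed. The latter should follow from the fact that the Hamiltonian driving each fast state transfer is supported only on the relevant copy of $K_{n+1}$ and so cannot affect any ancilla or any vertex of the other side.
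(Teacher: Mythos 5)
Your proposal follows essentially the same route as the paper's proof: park all data in the local ancillas so that the clique qubits become clean $\ket{0}$ registers, use the W-state--mediated fast state transfer of Guo et al.\ inside each copy of $K_{n+1}$ to move one qubit through the bottleneck $v_c$ in time $\bigo{1/\sqrt{n}}$, and process the $\bigo{n}$ required crossings sequentially for a total time $\bigo{\sqrt{n}}$. The normalization worry you raise does work out: each edge term of the transfer Hamiltonian $\sum_{v\in\mathcal S} c_x^\dagger c_v + \mathrm{h.c.}$ equals $(X_xX_v+Y_xY_v)/2$, whose canonical-form norm is $1 \le 3\pi/4$, so the encode/decode steps each take time $\pi/(2\sqrt{n-1})$ as advertised, and the Hamiltonian is supported entirely on one clique, so the parked ancilla states and the other side are indeed untouched.

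The one step that would fail as written is the reduction to ``$n$ swap pairs $(v_i^L, v_{\sigma(i)}^R)$.'' Exchanging the states of $u$ and $\sigma(u)$ sends the state of $\sigma(u)$ to $u$, which is its correct destination only if $\sigma^2 = \mathrm{id}$. The theorem's $\sigma$ is an arbitrary permutation interchanging the two sides --- it could, for instance, be a single $2n$-cycle alternating between $G_L$ and $G_R$ --- so a pairwise exchange does not implement it in general. The paper avoids this by following the cycle structure of $\sigma$: transfer $u$'s state into $\sigma(u)$'s ancilla, then load $\sigma(u)$'s own (still parked) state onto the now-clean data qubit and continue to $\sigma^2(u)$, alternating sides until the cycle closes. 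Each hop still costs $\bigo{1/\sqrt{n}}$ and there are $2n$ hops in total, so your time bound survives this correction unchanged; you just need to replace the pairwise exchanges with cycle-following one-way transfers (or, alternatively, write $\sigma$ as a product of two involutions and run your pairwise protocol twice, at the cost of also handling the within-side factor).
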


\begin{figure}
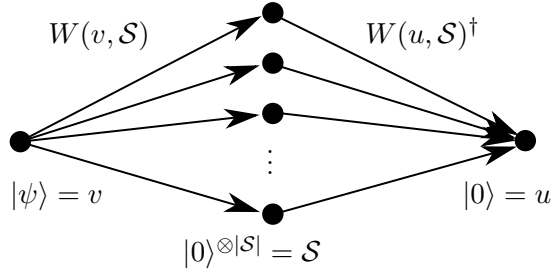

	\centering
	\includeinkscape{state-transfer}
	\caption{%
		In our routing protocol for the vertex barbell graph, we transfer the state $\ket \psi$ on qubit $v$ to $u$
		by using the intermediate qubits $\mathcal S$ as ancillas (in the $\ket{0}^{\otimes \abs{\mathcal S}}$ state).
		The operation $W(v, \mathcal S)$ encodes $\ket\psi$ in a subspace spanned by $\ket{0}^{\otimes \abs{\mathcal S}}$ and the W-state
		in time $\pi/(2\sqrt{\abs{\mathcal S}})$~\cite{Guo2019}.
		Since this procedure is unitary, we can use its inverse to transfer the state to $u$.
		We repeat this procedure in $G_R$ to transfer the state to its destination.
	}\label{fig:state-transfer}
\end{figure}

\begin{proof}
	We define a Hamiltonian to construct a W-state~\cite{Guo2019},
	\begin{equation}
		W(x,\mathcal S) \coloneqq \sum_{v \in \mathcal S} c^\dagger_x c_v + \text{h.c.},
	\end{equation}
	where $\mathcal S \subseteq V(B_{2n})$, $x \in V(B_{2n}) \setminus \mathcal S$,
	and $c_y = \ket{0}_y\bra{1}_y$ (resp.\ $c_y^\dagger$) are annihilation (resp.\ creation) operators acting on qubit $y \in V(B_{2n})$.
	Evolving for time $\pi / (2\sqrt{\abs{\mathcal S}})$
	with initial state $\ket\psi = a_0 \ket{0} + a_1 \ket 1$ on $x$,
	we have
	\begin{equation}
		e^{-i W(x,\mathcal S) T} (a_0\ket{0}_x + a_1\ket{1}_x) \ket{0}_{\mathcal S} = \ket{0}_x (a_0\ket{0}_{\mathcal S}+ a_1\ket{W}_{\mathcal S}),
		\label{eq:wstateencoding}
	\end{equation}
	where $\ket{W} \coloneqq \frac{1}{\sqrt{\abs{\mathcal S}}} \sum_{v \in \mathcal S} c_v^\dagger\ket{0}_{\mathcal S}$
	is the W-state over the qubits $\mathcal S$ (an equal superposition over Hamming weight 1 strings).

	The protocol is then as follows.
	We first use (fast) \swap{}s between each qubit and its ancilla 
	so all data qubits in the graph are in the state $\ket{0}$.
	We now pick some vertex $v \in V(G_L)$ and show how to route the state originally at $v$ to $\sigma(v)$.
	We \swap{} the data qubit at $v$ with its ancilla to return $v$ to its initial state.
	Then we evolve by the Hamiltonian $W(v, V(G_L) \setminus \set{v})$ to encode the state on $v$
	on the data qubits associated with $V(G_L) \setminus \set{v}$,
	creating a state similar to \cref{eq:wstateencoding},
	followed by the inverse operation $W(v_c, V(G_L) \setminus \set{v})^\dagger$.
	Overall, this sends the state from $v$ to the central vertex $v_c$ in total time $2T = \pi/\sqrt{n-1}$
	(see also \cref{fig:state-transfer}).
	We repeat this process to transfer the qubit from $v_c$ to $\sigma(v)$ in time $2T$.
	Then we \swap{} the qubit at $\sigma(v)$ with its ancilla.
	If the qubit that is now at $\sigma(v)$ needs to be routed, we follow an analogous procedure and send it to $\sigma(\sigma(v))$.
	If it does not, we pick some other vertex in $V(G_R)$ that still needs to be routed.
	We iterate in this way, alternately handling a vertex from $G_L$, then $G_R$,
	until all vertices are routed to their destination ancillas.
	Finally, we simultaneously \swap{} all qubits with their ancillas to finish the routing.
	The total time is $4T\cdot 2n = \bigo{\sqrt{n}}$.
\end{proof}

We can now generalize the algorithm to all permutations on $B_{2n}$.
\begin{corollary}\label{cor:barbellRouting}
	For the vertex barbell graph $B_{2n}$,
	we have
	\begin{equation}
		\hrAnumber{B_{2n}} = \bigo{\sqrt{n}}.
	\end{equation}
\end{corollary}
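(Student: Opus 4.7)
The plan is to decompose an arbitrary permutation $\pi$ on $V(B_{2n})$ into a bounded number of stages, only one of which is expensive, and then reduce the expensive stage to (essentially) the all-exchange permutation already handled by \cref{thm:barbellSwap}. Specifically, I would write the routing as: (i) a local permutation inside each complete graph that gathers the qubits to be transferred onto a designated subset of vertices, (ii) a cross-transfer through $v_c$ that swaps these subsets, and (iii) a final local permutation inside each complete graph that moves the transferred qubits to their final destinations.

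First, I would identify $A \subseteq V(G_L)$ consisting of the vertices whose image under $\pi$ lies in $V(G_R) \cup \set{v_c}$, and likewise $B \subseteq V(G_R)$. Because $\pi$ is a bijection, $\abs{A}$ and $\abs{B}$ agree up to at most $1$, depending on where $v_c$ maps and what maps to $v_c$. Both $G_L \cup \set{v_c}$ and $G_R \cup \set{v_c}$ induce complete graphs on $n+1$ vertices, so by the constant classical routing bound on a complete graph of \textcite{Alon1994}, any permutation inside either side can be realized as an $\bigo{1}$-depth circuit of edge-disjoint \swap{} gates. Each parallel layer is a $2$-local Hamiltonian whose norm respects our normalization and so can be implemented in Hamiltonian time $\bigo{1}$. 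I would use this first round to gather $A$ onto a convenient block of vertices inside $G_L$ and $B$ onto a block inside $G_R$, paired so that stage (ii) only has to perform the simple alternating transfer of \cref{thm:barbellSwap}.

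Next, I would run the protocol of \cref{thm:barbellSwap} on the gathered vertices, alternating transfers from $G_L$ to $G_R$ with transfers from $G_R$ to $G_L$. Each transfer uses a W-state encoding on a set of size at most $n$ and therefore takes time $\bigo{1/\sqrt{n}}$; with at most $2n$ such transfers, the total time is $\bigo{\sqrt{n}}$. The central vertex $v_c$, which participates in at most one extra transfer in either direction, contributes a further $\bigo{1/\sqrt{n}}$. Finally, one more round of complete-graph routing inside each side, again in time $\bigo{1}$, places every qubit at its final destination.

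The main obstacle is purely combinatorial bookkeeping: one has to choose the gathered blocks in stage (i) so that stage (iii) can indeed be realized within each complete graph using the $\bigo{1}$-time routing on $K_{n+1}$, and one has to check that the mild asymmetry $\abs{\abs{A} - \abs{B}} \le 1$ caused by $v_c$ does not break the alternating-transfer argument. Once this is handled, summing the $\bigo{1}$, $\bigo{\sqrt{n}}$, and $\bigo{1}$ costs gives $\hrAnumber{B_{2n}} = \bigo{\sqrt{n}}$ as claimed.
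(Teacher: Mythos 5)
Your proposal is correct and follows essentially the same route as the paper: handle the central vertex separately, dispatch the within-$G_L$ and within-$G_R$ parts in $\bigo{1}$ time via constant-depth routing on complete graphs, and reduce the cross-partition part to \cref{thm:barbellSwap} for the $\bigo{\sqrt{n}}$ cost. The ``bookkeeping'' you flag is resolved in the paper exactly as your setup suggests: the token destined for $v_c$ is stashed in the ancilla at $\sigma(v_c)$ before the cross-transfer and retrieved afterward, which removes the $\abs{\abs{A}-\abs{B}}\le 1$ asymmetry.
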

\begin{proof}
	Let $\sigma$ be any permutation of the vertices $V(G)$.
	First, we \swap{} $\sigma(v_c)$ with its ancilla, \swap{} $v_c$ with $\sigma(v_c)$,
	and finally \swap{} $\sigma(v_c)$ with its ancilla again.
	Then, we route all vertices that are permuted only within $G_L$ or $G_R$
	in $\bigo{1}$ time using \swap{}s since $\rnumber{K_n} \le 2$~\cite{Alon1994}.
	Consider now the vertex barbell subgraph of the remaining vertices that need to move
	between $G_L$ and $G_R$, together with $v_c$.
	This routing can be done in time $\bigo{\sqrt{n}}$ by \cref{thm:barbellSwap}
	and starting at $\sigma(v_c)$.
	Finally, we \swap{} $\sigma^{-1}(v_c)$ with its ancilla, and then with $v_c$.
\end{proof}

\Cref{cor:barbellRouting} shows a quadratic separation
\begin{equation}
	\qrt_{\mathrm{a}}(B_{2n}) = \bigomega{\hrAnumber{B_{2n}}^2}.
\end{equation}
It also shows $\hrAnumber{B_{2n}} \notin \bigomega*{c(B_{2n})^{-1}} = \bigomega{n}$, so \cref{cor:qrtVertexLB} does not generalize to Hamiltonian routing with ancillas.

\subsection{Optimal routing with fast local interactions}\label{sec:optimalRouting}
In this section, we show optimal routing
for stronger models of classical and Hamiltonian routing
that allow arbitrarily fast interactions within partitions of the graph $G$.
Given a partition $X \subsetneq V(G)$,
let us define the \emph{fast classical routing model}
as classical routing with arbitrarily fast \swap{} operations within the vertex-induced subgraphs $G[X]$ and $G[\bar X]$.
Then we can define the $X$-\emph{fast routing number} of $G$, $\rfnumber{G, X}$,
as the worst-case time to route
any permutation in the fast classical routing model for a given graph $G$ and partition $X$.
We denote with subscript ``f'' that the interactions within the partitions $X$ and $\bar X$ are fast.

We now show that $\rfnumber{G, X} = \ceil{\frac{\abs{X}}{\abs{\matching{\partial X}}}}$
for any connected simple graph $G$ and partition $X \subseteq V(G)$ with $\abs{X} \le \abs{V(G)}/2$.
The upper bound is given by the following routing algorithm.
\begin{theorem}\label{thm:fastClassicalRouting}
	For any connected simple graph $G$ and partition $X \subseteq V(G)$ with $\abs{X} \le \abs{V(G)}/2$,
	\begin{equation}
		\rfnumber{G, X} \le \ceil*{\frac{\abs{X}}{\abs{\matching{\partial X}}}}.
	\end{equation}
\end{theorem}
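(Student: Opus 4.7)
The plan is to describe an explicit round-based routing algorithm whose running time matches the stated upper bound. Let $m \coloneqq \abs{\matching{\partial X}}$ and fix a maximum matching $M \subseteq \partial X$ with endpoint sets $M_X \subseteq X$ and $M_{\bar X} \subseteq \bar X$. Given a permutation $\pi$ to route, define the crossing demand $A \coloneqq \set{v \in X : \pi(v) \in \bar X}$. Since $\pi$ is a bijection, exactly $\abs{A}$ tokens of $\bar X$ must likewise move to $X$, and trivially $\abs{A} \le \abs{X}$.

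Each round of the algorithm consists of two stages. First, free intra-partition \swap{}s inside $G[X]$ and $G[\bar X]$ relocate up to $m$ still-uncrossed tokens from $A$ onto the endpoints $M_X$, and simultaneously up to $m$ tokens destined for $X$ onto the endpoints $M_{\bar X}$. Second, all $m$ edges of $M$ are applied as \swap{}s in parallel in a single (non-free) time step, exchanging the staged tokens across $\partial X$. Since each round completes $m$ crossings, after $\ceil{\abs{A}/m} \le \ceil{\abs{X}/m}$ rounds every token sits on its target side. A final free intra-partition permutation then places each token at its exact destination vertex at no extra cost, establishing the desired upper bound.

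The main obstacle is verifying that the staging stage is truly free in every round. When $G[X]$ and $G[\bar X]$ are each connected this is immediate, since any connected graph admits \swap{}-routing of an arbitrary permutation. When one of them is disconnected, however, transferring tokens between different components of $G[X]$ (or $G[\bar X]$) can only be done by routing them through the other side via matching edges, which are \emph{not} free. To handle this case, I would plan all crossings in advance by building a bipartite transport schedule indexed by components on one axis and by matching endpoints on the other, then invoke a Hall-type counting argument to show that the cross-component transfers can be amortized into the same $\ceil{\abs{X}/m}$ rounds already budgeted for $\partial X$ crossings. Carrying out this scheduling is the delicate part of the proof; the easy direction (the connected case) falls out immediately from the round counting above.
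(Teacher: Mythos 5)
Your round-based algorithm for the case where $G[X]$ and $G[\bar X]$ are both connected is exactly the paper's proof: fix a maximum matching in $\partial X$, use free intra-partition \swap{}s to stage up to $\abs{\matching{\partial X}}$ crossing tokens on its endpoints, exchange them in one unit-time step, and repeat $\ceil{\abs{X}/\abs{\matching{\partial X}}}$ times before a final free clean-up. The paper stops there and never discusses connectivity of the induced subgraphs, so on that portion you and the paper coincide, and you deserve credit for noticing that the staging step is only ``free'' when each side is connected.

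However, your proposed repair for the disconnected case---a Hall-type schedule that amortizes cross-component transfers into the same $\ceil{\abs{X}/\abs{\matching{\partial X}}}$ rounds---cannot be carried out, because the bound is false at that level of generality. Let $G[X]$ have two components $C_1 = K_{10}$ and $C_2 = K_{10}$, let $\bar X$ be any connected graph on the $20$ vertices $\set{w, z_1,\dots,z_{19}}$, and let the cross edges be the single edge $(u_1, w)$ with $u_1 \in C_1$ together with a perfect matching $(v_i, z_i)$, $i=1,\dots,10$, out of $C_2$. Then $G$ is connected, $\abs{X} = \abs{V(G)}/2 = 20$, and $\partial X$ is itself a matching of size $11$, so the claimed bound is $\ceil{20/11} = 2$. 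Yet for a permutation sending all of $X$ into $\bar X$, the ten tokens in $C_1$ can only exit through the single edge $(u_1,w)$, at most one per unit-time step, so at least $10$ steps are required; no scheduling or counting argument can evade this per-component bottleneck. The honest resolution is to add the hypothesis that $G[X]$ and $G[\bar X]$ are connected (true for the paper's only application, the vertex barbell graph with $X = V(G_L)$, so nothing downstream breaks), or to replace the right-hand side by a quantity that accounts for the matching capacity of each component of $G[X]$ separately. You should state that restriction explicitly rather than pursue the amortization, which is chasing a statement that does not hold.
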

\begin{proof}
	For any permutation $\pi$ to be routed,
	call the $k$ vertices $x \in X$ such that $\pi(x) \notin X$ \emph{marked}.
	Similarly, we mark the vertices $x' \in \bar X$ such that $\pi(x') \notin \bar X$.
	Fix a maximum matching $\matching{\partial X} = \set{(x_i, x_i')}_{i=1}^k$ with $x_i \in X$ and $x_i' \in \bar X$.
	We repeat the following two steps for $\ceil{\abs{X}/\abs{\matching{\partial X}}}$ times:
	\begin{enumerate}
		\item Route as many marked vertices in $X$ as possible to $x_1,\dots,x_k$ in order,
			and route as many marked vertices in $\bar X$ as possible to $x_1',\dots,x_k'$ in order.
		\item\label{item:fastClassical2} Perform parallel \swap{}s along
			$(x_i,x_i')$ for all $i$ less than the number of remaining marked vertices.
	\end{enumerate}
	This routes all marked vertices to their destination partitions.
	The only contribution to the $X$-fast routing number of $G$ is a unit contribution of \cref{item:fastClassical2}
	every iteration.
	Finally, we route all qubits within $X$ and $\bar X$ to their destinations using fast \swap{}s.
\end{proof}

The lower bound \cref{eq:qrtExpansionBoundSimple} on the gate-based quantum routing number
applies to the model with fast interactions
since STE still upper bounds the change in entropy for any unitary acting on $\matching{\partial X}$.
The lower bound (rounded up, since the routing number cannot be a fraction)
is attained by the $X$-fast classical routing algorithm for $G$,
and thus the algorithm is optimal for all gate-based models.

Given a partition $X \subsetneq V(G)$,
let us define the \emph{fast Hamiltonian routing model}
as Hamiltonian routing with ancilla and arbitrary interactions within the vertex-induced subgraphs $G[X]$ and $G[\bar X]$.
Then we can define the $X$-\emph{fast Hamiltonian routing time} of $G$, $\hrfanumber{G, X}$,
as the worst-case time to route
any permutation in the fast Hamiltonian routing model for a given graph $G$ and partition $X$.
The additional subscript ``a'' indicates the presence of an ancilla at every qubit.

We give an $X$-fast Hamiltonian routing algorithm that attains the lower bound on Hamiltonian routing
up to a constant factor $3\pi\sieConst/8$ (where $\sieConst$ is the constant of SIE, \cref{lem:sie})
for any graph $G$.
We first prove that it is possible to perform any two-qubit unitary
in the fast Hamiltonian routing model in time $\bigo{1/\abs{\partial X}}$.
\begin{lemma}\label{lem:unitaryFastHamiltonian}
	For a connected simple graph $G$,
	any two-qubit unitary $U$ can be performed in the $X$-fast Hamiltonian routing model
	with partition $X \subsetneq V(G)$ in time at most $\frac{1}{\abs{\partial X}}$.
\end{lemma}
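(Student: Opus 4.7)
The plan is to exploit all $k \coloneqq \abs{\partial X}$ cut edges in parallel by encoding the two target qubits of $U$ into GHZ states spread across the $X$- and $\bar X$-endpoints of $\partial X$. The $k$ separately-normalized single-edge interactions then combine coherently into a logical two-qubit Hamiltonian of norm $\Theta(k)$, yielding the desired $1/k$ speedup. The interesting case is when one target qubit $a$ lies in $X$ and the other, $b$, in $\bar X$; if both lie on the same side, a fast op suffices.

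First I would invoke the canonical form \cref{eq:CanonicalForm}: up to local single-qubit unitaries (free as fast ops), $U = e^{-iK}$ with $K = \sum_{j\in\set{x,y,z}} \mu_j\,\sigma_j\otimes\sigma_j$ and $\sum_j\abs{\mu_j} \le 3\pi/4$. The three Pauli-Pauli terms commute pairwise, so $e^{-iK} = \prod_j e^{-i\mu_j\,\sigma_j\otimes\sigma_j}$ exactly. It thus suffices to implement each factor in time $\frac{4\abs{\mu_j}}{3\pi k}$, since the three times then sum to at most $\frac{4}{3\pi k}\sum_j\abs{\mu_j} \le \frac{1}{k}$.

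To implement the $e^{-i\mu_z\,Z\otimes Z}$ factor, I would use a fast op in $G[X]$ to encode $a$ into the GHZ state $\alpha\ket{0}^{\otimes\abs{U}}+\beta\ket{1}^{\otimes\abs{U}}$ supported on $U=\set{u_e : e\in\partial X}$ (via a CNOT fan-out from $a$ after swapping $a$ into $U$), and analogously encode $b$ on $V=\set{v_e : e\in\partial X}$. Then I apply the Hamiltonian $H = \frac{3\pi}{4}\sum_{e\in\partial X} Z_{u_e}Z_{v_e}$, which respects the per-edge norm bound since each term has norm $3\pi/4$. On the GHZ-encoded subspace, every single-qubit $Z_u$ (for $u \in U$) acts as logical $Z_L$, and likewise on the $V$ side, so $H$ acts as $\frac{3\pi k}{4}\,Z_L\otimes Z_L$. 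Evolving for time $\frac{4\abs{\mu_z}}{3\pi k}$ and then decoding via the inverse fast op implements $e^{-i\mu_z\,Z\otimes Z}$ on the logical qubits. The $\sigma_x\otimes\sigma_x$ and $\sigma_y\otimes\sigma_y$ factors are obtained from the same protocol after conjugating $a,b$ by the appropriate single-qubit basis rotations before and after the encoding; these are manifestly fast local ops.

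The key arithmetic is that $\sum_{e\in\partial X} Z_{u_e}Z_{v_e} = k\,Z_L\otimes Z_L$ on the GHZ subspace even when cut edges share endpoints, because each of the $k$ summands equals $Z_L\otimes Z_L$ on the code regardless of which specific pair of vertices it touches. The main technical obstacle I anticipate is the case when $G[X]$ or $G[\bar X]$ is disconnected, so that the CNOT fan-out from $a$ cannot reach all of $U$ by fast ops alone: in that case one either distributes $a$ across the relevant components of $G[X]$ at negligible cost by first using a constant number of cross-cut interactions in parallel, or else initializes the unreached ancillas in a $Z$-eigenstate and absorbs the resulting single-qubit drift on $b$ into a compensating fast op on $\bar X$ so that the effective logical Hamiltonian is still $\Theta(k)\,Z_L\otimes Z_L$.
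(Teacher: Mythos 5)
Your proposal is essentially the paper's own argument: both encode the two logical qubits into GHZ-type states on the $X$- and $\bar X$-endpoints of the cut edges using fast intra-partition unitaries, then drive all $\abs{\partial X}$ normalized edge interactions in parallel so that the logical two-qubit phase accumulates at rate $\Theta(\abs{\partial X})$. The only cosmetic difference is the final gate synthesis: you implement the canonical nonlocal part $e^{-iK}$ directly via three commuting GHZ-encoded $\sigma_j\otimes\sigma_j$ evolutions in total time $\tfrac{4}{3\pi\abs{\partial X}}\sum_j\abs{\mu_j}\le 1/\abs{\partial X}$, whereas the paper realizes a \cz{} in time $1/(3\abs{\partial X})$ from per-edge terms $3\pi\ket{11}\bra{11}$ and invokes the $3$-\cz{} decomposition of an arbitrary two-qubit gate. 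One step you must make explicit: before the \cnot{} fan-out, the data held at the boundary vertices other than the targets has to be stashed into their local ancillas via fast \swap{}s (the model supplies these precisely for this purpose); a fan-out onto occupied qubits does not produce the GHZ code, and the individual $Z_{u_e}$ would then fail to act as logical $Z$. Your worry about $G[X]$ being disconnected is moot under the paper's model, which allows arbitrary (not edge-restricted) fast unitaries within each partition---this is also how the paper's proof encodes onto $\delta\bar X$ without any connectivity assumption.
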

\begin{proof}
	We show how to perform a \cz{} operation between $v, u \in V(G)$ in time $t = \frac{1}{3\abs{\partial X}}$.
	By a decomposition of $U$ into at most 3 \cz{} operations plus single-qubit operations~\cite{Vatan2004}, the result follows.
	The result is trivial if $v$ and $u$ are both within $X$ or $\bar X$.

	Suppose, without loss of generality, $v \in X$ and $u \in \bar X$.
	We use (fast) \swap{}s between qubits on the boundary $\delta X \cup \delta \bar X \setminus \set{u,v}$ and their ancillas.
	Suppose the qubit at $v$ is in the state $a_0\ket{0} + a_1\ket{1}$
	and $u$ is in the state $a_0'\ket{0} + a_1'\ket{1}$
	(by linearity, the protocol also works if $v$ and $u$ are initially entangled with other qubits).
	We then encode the state of $v$ onto $\delta \bar X$ as
	\begin{equation}
		a_0\ket{0\dots0} + a_1\ket{1\dots1} = a_0\ket{\bar 0} + a_1\ket{\bar 1}
	\end{equation}
	by fast unitaries.
	Similarly, with some abuse of notation, we encode the state of $u$
	onto $\delta X$ as $a_0'\ket{\bar 0}+a_1'\ket{\bar 1}$
	where we disregard the different register sizes with the overline notation.

	Notice that a Hamiltonian on $x,x' \in V(G)$
	\begin{equation}
		\frac{3\pi}{4}\paren{\idm - Z_x - Z_{x'} + Z_xZ_{x'}} = 3\pi\ket{11}\bra{11}_{xx'},
	\end{equation}
	with $Z_x$ a Pauli-Z operator acting on the qubit at $x$,
	consists of local terms and one normalized $ZZ$ interaction.
	We can therefore evolve by the Hamiltonian
	\begin{equation}
		H = 3\pi \sum_{xx' \in \partial X}\idm_{\overline{xx'}} \otimes \ket{11}\bra{11}_{xx'},
	\end{equation}
	where $\idm_{\overline{xx'}}$ is the identity operator on all subsystems besides $xx'$,
	using fast local unitaries and $ZZ$ interactions along each edge in $\partial X$.
	By commutativity of the terms in $H$, we see that
	\begin{multline}
		e^{-itH} (a_0 \ket{\bar 0} + a_1 \ket{\bar 1}) \otimes (a_0'\ket{\bar 0} + a_1' \ket{\bar 1}) =\\
		a_0a_0' \ket{\bar0 \bar0} + a_0a_1'\ket{\bar0 \bar1} + a_0'a_1\ket{\bar1\bar0} + a_1a_1' e^{-i3\pi t\abs{\partial X}} \ket{\bar1 \bar1}.
	\end{multline}
	After performing our initial \swap{} and encoding operations in reverse,
	we have applied a \cz{} operation between $u$ and $v$ in time $t$.
\end{proof}

With the ability to quickly perform arbitrary two-qubit operations in the fast Hamiltonian routing model,
we give a routing algorithm with the $X$-fast Hamiltonian routing time of $G$ upper bounded by $\abs{X}/\abs{\partial X}$.
\begin{theorem}\label{thm:fastHamiltonianRouting}
	For a connected simple graph $G$ and partition $X \subsetneq V(G)$,
    \begin{equation}
    	\hrfanumber{G, X} \le \frac{\abs{X}}{\abs{\partial X}}.
    \end{equation}
\end{theorem}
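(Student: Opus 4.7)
The plan is to reduce the routing task to a sequence of cross-cut two-qubit operations, leveraging \cref{lem:unitaryFastHamiltonian} to implement each in time $1/\abs{\partial X}$. Given any permutation $\pi$ on $V(G)$, let $k$ denote the number of vertices $v \in X$ with $\pi(v) \notin X$. Since $\pi$ is a bijection, exactly $k$ vertices in $\bar X$ have images in $X$, and $k \le \min(\abs{X}, \abs{\bar X}) \le \abs{X}$, so it suffices to show that the routing can be completed in time $k/\abs{\partial X}$.

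Next I would pair up the cross-cut vertices arbitrarily: fix a bijection between the $k$ vertices $v_1, \ldots, v_k \in X$ that must exit and the $k$ vertices $u_1, \ldots, u_k \in \bar X$ that must enter. The algorithm proceeds in $k$ sequential rounds; in round $i$, it applies a \swap{} between the tokens currently at $v_i$ and $u_i$. By \cref{lem:unitaryFastHamiltonian}, each such \swap{} can be realized in time at most $1/\abs{\partial X}$, even though $v_i$ and $u_i$ need not lie on the boundary, because the lemma's construction first uses fast within-partition unitaries to encode the states of $v_i$ and $u_i$ onto $\delta \bar X$ and $\delta X$ respectively, then exploits all $\abs{\partial X}$ cross-cut edges in parallel. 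After all $k$ rounds, every token sits in its destination partition, and the remaining within-partition permutation required to realize $\pi$ is implemented in zero time using the fast interactions available in $G[X]$ and $G[\bar X]$. The total time is thus $k/\abs{\partial X} \le \abs{X}/\abs{\partial X}$, and maximizing over $\pi$ gives the claim.

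The only subtlety is making sure that \cref{lem:unitaryFastHamiltonian} can be invoked sequentially, i.e., that each round leaves the rest of the system undisturbed and ready for the next round. Its proof swaps boundary data qubits into their local ancillas, performs the encoded diagonal evolution, and then uncomputes both the encoding and the boundary-ancilla \swap{}s, so every qubit and ancilla not belonging to the current pair is restored to its pre-lemma state. Composition across rounds is therefore immediate, and no error accumulates. I expect no substantial obstacle beyond this bookkeeping: the content of the theorem is essentially that the per-\swap{} cost $1/\abs{\partial X}$ from the preceding lemma, multiplied by the combinatorial bound $k \le \abs{X}$ on the number of cross-cut tokens, saturates (up to the $3\pi\sieConst/8$ factor) the Hamiltonian routing lower bound of \cref{thm:hqrtEdgeBound}.
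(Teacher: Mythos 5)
Your proposal is correct and follows essentially the same route as the paper: mark the $k$ cross-cut vertices on each side, pair them arbitrarily, perform $k$ sequential \swap{}s each in time $1/\abs{\partial X}$ via \cref{lem:unitaryFastHamiltonian}, and finish with fast within-partition \swap{}s, giving total time $k/\abs{\partial X} \le \abs{X}/\abs{\partial X}$. The extra bookkeeping you supply about the lemma restoring ancillas and boundary qubits so that rounds compose cleanly is a sound (and slightly more explicit) justification of a step the paper leaves implicit.
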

\begin{proof}
	Let $\pi$ be any permutation to be routed,
	and suppose there are $k$ \emph{marked} vertices $x \in X$ such that $\pi(x) \notin X$.
	Then there are also $k$ marked vertices $x' \in \bar X$ such that $\pi(x') \notin \bar X$.
	We perform $k$ \swap{}s between each pair of marked vertices,
	which, by \cref{lem:unitaryFastHamiltonian}, can be done in time $\frac{k}{\abs{\partial X}}$.
	Finally, we use fast local \swap{}s to route all qubits to their destination.
	The result follows since $k \le \abs{X}$.
\end{proof}

\Cref{thm:entanglementCapacityCut} also bounds the entanglement capacity of any Hamiltonian acting
on the edge boundary $\partial X$ in the fast Hamiltonian model.
Therefore, \cref{eq:hrnumberLBX} implies
\begin{equation}
    \hrfanumber{G, X} \ge \frac{8}{3\pi\sieConst} \frac{\abs{X}}{\abs{\partial X}}\label{eq:fhqrLB}
\end{equation}
for any $X \subsetneq V(G)$ with $\abs{X} \le \abs{V(G)}/2$.
It follows that \cref{thm:fastHamiltonianRouting} is tight up to a multiplicative constant $3\pi\sieConst/8 \le 4.713$.

When we compare the $X$-fast Hamiltonian routing time
with the $X$-fast classical routing time (even with ancilla),
we see that
\begin{equation}
	\frac{\rfnumber{G,X}}{\hrfanumber{G,X}} = \bigtheta*{\frac{\abs{\partial X}}{\abs{\matching{\partial X}}}}.
\end{equation}
The vertex barbell graph with the partition $V(G_L)$ is an example
where a speedup of $\bigtheta{n}$ is realized.

\section{Conclusion}
In this paper, we have explored the power of gate-based and Hamiltonian models of quantum routing,
investigating both lower bounds and separations.
We showed conditions on the spectrum of the architecture graph for a superpolynomial separation.
In particular, our conditions exclude bounded-degree graphs from exhibiting a superpolynomial separation.
We also gave an example graph~\cite{Raghavan1994} where diameter-based lower bounds and known classical routing algorithms~\cite{Alon1994} cannot exclude such a separation.

One natural open question is whether the star graph $S_n$, which has $\lambda(S_n) = 1$,
can exhibit a superconstant quantum routing separation.
While our results imply that gate-based quantum routing essentially gives no improvement over classical routing since $\qrnumber{S_n} \ge n - 1$,
the same cannot be said for Hamiltonian routing, for which the corresponding lower bound is trivial.
In fact, if the Hamiltonian model is strengthened by allowing a constant number of ancillas per qubit,
a quadratic separation holds on the vertex barbell graph,
which also exhibits a similar vertex bottleneck.
By allowing fast interactions within certain regions of the graph,
we can give optimal routing algorithms for gate-based and Hamiltonian models
and exhibit a speedup from $\bigtheta{n}$ for gate-based models
to $\bigo{1}$ for Hamiltonian models.

Our depth (or time) lower bounds can be strengthened to include computational models with local operations and classical communication (LOCC).
LOCC models give a stronger class of quantum routing
and would allow, e.g., teleportation to bridge long distances.
Trivially, this can exceed the Lieb-Robinson velocity~\cite{Lieb1972}
and seemingly invalidates simple lower bounds based on the diameter of the graph.
\citeauthor{Piroli2021}~\cite{Piroli2021} showed LOCC circuit lower bounds on state preparation for lattices
and inspired us to show similar state preparation results for general interaction graphs
and to lower bound routing.
Since our depth (and time) lower bounds follow from entropic arguments
and the entropy is non-increasing under LOCC,
we see that STE, SIE, and our state preparation bounds (\cref{lem:gateBasedStatePreparation,cor:hamiltonianStatePreparation}) generalize to models including LOCC when the entropy is non-decreasing.
Thus our quantum routing bounds (\cref{thm:qrtExpansionBound,thm:hqrtEdgeBound}) 
also generalize to models including LOCC.
How much stronger models of routing with LOCC can be is studied in~\cite{loccRouting}.

\subsection*{Acknowledgements}
We thank Minh Tran for suggesting the possibility of using the state transfer protocol of \cite{Guo2019},
Michael Gullans for pointing out properties of Erd\"{o}s-R\'{e}nyi graphs,
and Dhruv Devulapalli and Andrew Guo for helpful discussions.

A.B.\ and A.V.G.\ acknowledge funding by the NSF PFCQC program, ARO MURI, DoE QSA,  DoE ASCR Quantum Testbed Pathfinder program (award No.~DE-SC0019040), NSF QLCI (award No.~OMA-2120757), DoE ASCR Accelerated Research in Quantum Computing program (award No.~DE-SC0020312), DARPA SAVaNT ADVENT, AFOSR, AFOSR MURI, and U.S.~Department of Energy Award No.~DE-SC0019449.
E.S.\ and A.M.C.\ acknowledge support by the U.S.\ Department of Energy, Office of Science, Office of Advanced Scientific Computing Research, Quantum Testbed Pathfinder program (award number DE-SC0019040) and the U.S.\ Army Research Office (MURI award number W911NF-16-1-0349).
E.S.\ acknowledges support from an IBM PhD Fellowship.

\printbibliography%

\appendix
\section{Asymptotic equivalence  of the matching expansion and vertex expansion}\label{app:matchingExpansion}
We show that the matching expansion~\eqref{eq:matchingExpansion} is equivalent to the vertex expansion,
i.e., $c(G) = \bigtheta{\matching{G}}$.
The lower bound $\matching{G} \le c(G)$ follows from the trivial bound $\abs{\matching{\partial X}} \le \abs{\delta X}$ for any $X$.
The following theorem provides the upper bound.
\begin{theorem}\label{thm:vertexCutMatchingBound}
	For any simple graph $G$,
	\begin{equation}
		c(G) \le 2\matching{G} + \bigo{\matching{G}^2}.
	\end{equation}
\end{theorem}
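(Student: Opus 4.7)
The plan is to use König's theorem to convert the matching bound defining $\matching{G}$ into a vertex cover bound, and then use that cover to surgically remove a small subset from the minimizer of $\matching{G}$ to produce a new set with small vertex boundary.

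First, let $X \subseteq V(G)$ with $\abs{X} \le \abs{V(G)}/2$ attain the minimum in $\matching{G}$, so that $k \coloneqq \abs{\matching{\partial X}} = \matching{G}\abs{X}$. The edges of $\partial X$ naturally form a bipartite graph between $X \cap N(\bar X)$ and $\delta X \subseteq \bar X$, so by König's theorem there is a vertex cover $C$ of this bipartite graph of size $k$. Split $C = C_X \cup C_{\bar X}$ with $C_X \subseteq X$ and $C_{\bar X} \subseteq \bar X$.

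Next I would define the candidate set $X' \coloneqq X \setminus C_X$ and argue that $\delta X' \subseteq C$, giving $\abs{\delta X'} \le k$. Concretely, any $v \in \delta X' \cap \bar X$ is adjacent to some vertex of $X \setminus C_X$ along an edge of $\partial X$, which by the vertex-cover property forces $v \in C_{\bar X}$; and any $v \in \delta X' \cap X$ must lie in $X \setminus X' = C_X$. Combined with $\abs{X'} = \abs{X} - \abs{C_X} \ge \abs{X} - k$ and the obvious $\abs{X'} \le \abs{X} \le \abs{V(G)}/2$, whenever $X'$ is nonempty we obtain
\begin{equation}
    c(G) \le \frac{\abs{\delta X'}}{\abs{X'}} \le \frac{k}{\abs{X} - k} = \frac{\matching{G}}{1 - \matching{G}}.
\end{equation}
Expanding the geometric series, $\matching{G}/(1-\matching{G}) = \matching{G} + \bigo{\matching{G}^2}$, which is certainly at most $2\matching{G} + \bigo{\matching{G}^2}$ as claimed.

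The main obstacle is handling the degenerate case $X' = \emptyset$, i.e., when $C_X = X$. In that case the cover contains all of $X$ and $k \le \abs{C} \le \abs{X} + \abs{C_{\bar X}}$; more importantly $\abs{X} \le k = \matching{G}\abs{X}$ forces $\matching{G} \ge 1$, so the asymptotic claim is vacuous (the right-hand side $2\matching{G} + \bigo{\matching{G}^2}$ is already a constant of the same order as the trivial bound $c(G) \le n$). Aside from this edge case, the proof is a direct application of König plus the observation that removing one side of a minimum vertex cover kills all remaining boundary edges while leaving a set of comparable size.
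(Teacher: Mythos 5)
Your argument is correct in substance and takes a genuinely different route from the paper. The paper avoids K\"onig's theorem entirely: it deletes from $X$ the $k$ matched endpoints $Y \cap X$ of a maximum matching in $\partial X$ and observes, directly from maximality of the matching, that $\delta(X \setminus Y)$ is contained in the $2k$ matched vertices $Y$, yielding $c(G) \le 2\matching{G}/(1-\matching{G})$. You instead delete the $X$-side of a minimum vertex cover $C$ of the bipartite graph on $\partial X$, and the covering property gives $\delta X' \subseteq C$ with $\abs{C} = k$ rather than $2k$. This buys a factor of $2$: you actually prove the stronger bound $c(G) \le \matching{G} + \bigo{\matching{G}^2}$, at the mild cost of invoking K\"onig's theorem where the paper needs only the definition of a maximum matching.

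Two small repairs are needed in your degenerate-case handling. First, the step $\abs{\delta X'}/\abs{X'} \le k/(\abs{X}-k)$ requires $\abs{X} - k > 0$, i.e.\ $\matching{G} < 1$, not merely $X' \ne \emptyset$: if $k = \abs{X}$ but $C$ happens to sit mostly on the $\bar X$ side, then $X'$ is nonempty yet $k/(\abs{X}-k)$ is undefined or negative. The clean dichotomy is $\matching{G} < 1$ (which automatically forces $\abs{X'} \ge \abs{X} - k \ge 1$) versus $\matching{G} = 1$. Second, when $\matching{G} = 1$ the claim is not discharged by citing $c(G) \le n$, since the right-hand side is then $O(1)$; you need the (still elementary) fact that $c(G) \le 2$, obtained by taking $\abs{X} = \floor{n/2}$ so that $\delta X \subseteq \bar X$ and $\abs{\bar X} \le 2\abs{X}$. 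With those two adjustments the proof is complete and strictly sharper than the paper's.
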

\begin{proof}
    We note that $m(G) \in [0,1]$.
    If $m(G) = 1$, then the theorem holds since $c(G) \in [0,1]$.
    
    We now consider the case $m(G) \in [0,1)$.
	Let $X \subseteq V(G)$ be a partition that attains the minimum in the matching expansion,
	and $Y \subseteq V(G)$ is the set of vertices in $\matching{\partial X}$.
	The set $X' \coloneqq X \setminus Y$ is non-empty because $\matching{G} < 1$.
	We show
	\begin{equation}\label{eq:matchingToVertex}
		\abs{\delta X'} \le \abs{Y} = 2\abs{\matching{\partial X}}.
	\end{equation}
	Suppose, toward a contradiction,
	that there are adjacent vertices $x' \in X'$ and $x \in \overline{X'} \setminus Y$.
	Then $\matching{\partial X}$ is not maximal since $\matching{\partial X} \cup \set{(x,x')}$ is a larger matching.
	Therefore, $\delta X'$ must only consist of vertices in $Y$,
	giving $\abs{\delta X'} \le \abs{Y}$ as claimed.
	
	It follows from \cref{eq:matchingToVertex} that
	\begin{equation*}
		c(G) \le \frac{\abs{\delta X'}}{\abs{X'}} \le \frac{2\abs{\matching{\partial X}}}{\abs{X'}} = \frac{2\abs{\matching{\partial X}}}{\abs{X} -\abs{\matching{\partial X}}} = \frac{2\matching{G}}{1 - \matching{G}}.\qedhere
	\end{equation*}
\end{proof}
\end{document}